\documentclass[a4paper,USenglish]{article}
\usepackage[T1]{fontenc}
\usepackage{float}
\usepackage{amsthm,amsmath,amssymb,amsfonts,authblk,hyperref,setspace}
\usepackage{tikz,xcolor}
\usetikzlibrary{shapes,arrows,backgrounds,calc,trees}

\pgfdeclarelayer{background}
\pgfsetlayers{background,main}
\usepackage[]{algorithm2e}
%\graphicspath{}
%\usepackage[pass,letterpaper,margin=1in]{geometry}
%\usepackage[letterpaper]{geometry}
%\geometry{verbose,tmargin=1in,bmargin=1in,lmargin=1in,rmargin=1in}
%\usepackage{float}
%\usepackage{pdfsync}

%\newcommand{\N}{\mathbb{N}}

%\newtheorem{proposition}[theorem]{Proposition}
%\newtheorem{observation}[theorem]{Observation}
%\newtheorem{fact}[theorem]{Fact}
\newtheorem{assumption}{Assumption}

\newenvironment{proof-sketch}{\noindent{\bf Sketch of Proof}\hspace*{1em}}{\qed\bigskip}

\floatstyle{ruled}
\newfloat{algorithm}{tbp}{loa}
\providecommand{\algorithmname}{Algorithm}
\floatname{algorithm}{\protect\algorithmname}

%%\renewcommand{\boldsymbol}[1]{\mathbf{#1}}

%%%%%%%%%%%%%%%%%%%%%%%%%%%%%% LyX specific LaTeX commands.
%% Because html converters don't know tabularnewline

\floatstyle{ruled}
\newfloat{algorithm}{tbp}{loa}
\providecommand{\algorithmname}{Algorithm}
\floatname{algorithm}{\protect\algorithmname}

\theoremstyle{plain}
\makeatother
\usepackage{fullpage}
\newcommand{\comment}[1]{}
\newcommand{\tildeO}{\widetilde{O}}

\usepackage{epsfig}
\usepackage{url}

\newtheorem{theorem}{Theorem}

\newtheorem{lemma}{Lemma}[section]
\newtheorem{claim}{Claim}[section]

\newtheorem{observation}{Observation}[section]
\newtheorem{corollary}{Corollary}[section]

\newtheorem{definition}{Definition}

\newcommand{\ignore}[1]{}

\newcommand{\newfontobj}[2]{
  \newcommand{#1}[1]{
    \expandafter\def\csname##1\endcsname{{#2 ##1}}}}

%\newfontobj{\class}{\rm}
%\newfontobj{\lang}{\bf}
%\newfontobj{\oper}{\rm}
\comment{

\class{RL} \class{BPL}

\class{PSPACE}          % Language classes have to implement
\class{AM}              % a clean way of doing the complement classes
\class{AMTIME}
\class{NC}
\class{AMSUBEXP}
\class{MA}
\class{NP}
\class{NLIN}
\class{L}
\class{BPSPACE}
\class{LINSPACE}
\class{P}
\class{RP}
\class{BPP}
\class{RNC}
%\class{PrBPP}
\class{PrAM}
\class{NPSV}
\class{advBPP}
\class{DTIME}
\class{E}
\class{ZPP}
\class{EXPSPACE}
\class{SUBEXP}
\class{PH}
\class{IP}
\class{SNP}
\class{SNQP}
\class{AVP}
\class{QP}
\class{SNTIME}
\class{DSPACE}
\class{BPSPACE}
\class{BPTIME}

\class{DTIME}

\class{AvgTIME} \class{DistNP} \class{AZPP} \class{AVE}
\class{DistE} \class{DistEXP} \class{E} \class{NE} \class{UE}
\class{EXP} \class{pcomp} \class{NEXP} \class{UP} \class{DTIME}
\class{NTIME} \class{R} \class{NSUBEXP} \class{SUBEXP} \class{PF}
\class{LINcomp}
\class{BPTIME}
\class{DistNLIN}
\class{AME}
\class{SIZE}
\class{NSIZE}
\class{SAT}
\class{SIZE}

}

%\setstretch{0.955}
\SetKwInOut{Input}{Input}
\SetKwInOut{Output}{Output}
\begin{document}

	%\thispagestyle{empty}
	%\pagenumbering{gobble}
	
	%\newcommand*\samethanks[1][\value{footnote}]{\footnotemark[#1}
	
\title{Near Optimal Sized Weight Tolerant Subgraph for Single Source Shortest Path\footnote{The research leading to these results has received funding from the European Research Council under the European Union's Seventh Framework Programme (FP/2007-2013)/ERC Grant Agreement n. 616787.}}

\author[1]{Diptarka Chakraborty\thanks{diptarka@iuuk.mff.cuni.cz}}
\author[2]{Debarati Das\thanks{debaratix710@gmail.com}}
  \affil[1,2]{Computer Science Institute of Charles University,
		Malostransk{\'e}  n{\'a}mesti 25,
		118 00 Praha 1, Czech Republic}

	\maketitle
	\begin{abstract}
	In this paper we address the problem of computing a sparse subgraph of a weighted directed graph such that the exact distances from a designated source vertex to all other vertices are preserved under bounded weight increment. Finding a small sized subgraph that preserves distances between any pair of vertices is a well studied problem. Since in the real world any network is prone to failures, it is natural to study the fault tolerant version of the above problem. Unfortunately, it turns out that there may not always exist such a sparse subgraph even under single edge failure [Demetrescu \emph{et al.} '08]. However in real applications it is not always the case that a link (edge) in a network becomes completely faulty. Instead, it can happen that some links become more congested which can easily be captured by increasing weight on the corresponding edges. Thus it makes sense to try to construct a sparse distance preserving subgraph under the above weight increment model where total increase in weight in the whole network (graph) is bounded by some parameter $k$. To the best of our knowledge this problem has not been studied so far.
	
	In this paper we show that given any weighted directed graph with $n$ vertices and a source vertex, one can construct a subgraph that contains at most $e \cdot (k-1)!2^kn$ many edges such that it preserves distances between the source and all other vertices as long as the total weight increment is bounded by $k$ and we are allowed to have only integer valued (can be negative) weight on each edge and also weight of an edge can only be increased by some positive integer. Next we show a lower bound of $c\cdot 2^kn$, for some constant $c \ge 5/4$, on the size of the subgraph. We also argue that restriction of integer valued weight and integer valued weight increment are actually essential by showing that if we remove any one of these two restrictions we may need to store $\Omega(n^2)$ edges to preserve distances.
	\end{abstract}
	%\newpage
	\section{Introduction}
	%problem definition, our results, implications
	In the real world, networks are prone to failures and most of the time such failures are unpredictable as well as unavoidable in any physical system such as communication network or road network. For this reason, in the recent past, researchers study many graph theoretic questions like connectivity~\cite{PP13, PP14, Parter15, BCR15, BCR16, Choudhary16}, finding shortest distance~\cite{DTCR08}, building data structures that preserves approximate distances~\cite{Luk99, LNS02, CZ04, CLPR09, DK11, BGLP14, BGGLP15, BCPS15, BGPW17} etc. under the fault tolerant model. Normally such failures are much smaller in number comparative to the size of the graph. Thus we can associate a parameter to capture the number of edge or vertex failures and try to build fault tolerant data structures of size depending on this failure parameter for various graph theoretic problems.
	
	Unfortunately, in case of single source shortest path problem, it is already known from~\cite{DTCR08} that there are graphs with $n$ vertices for which to preserve the distances under even single edge failure, we need to store a subgraph of size at least $\Omega(n^2)$. On the other hand, in case of reachability problem we know the construction of connectivity preserving subgraph of size only $O(2^k n)$ due to~\cite{BCR16} where $k$ is the number of edge failures. However, in case of real world applications, it is not always the case that there are failures of edges or vertices. Instead, for weighted graphs, weight of any edge or vertex can be increased. For simplicity, we consider weight to be only on the edges of the graph. In general, weight of an edge captures aspect like congestion on a particular link in a network. So it is quite natural to consider the scenario when some links (or edges) become more congested. Again the good thing is that most of the time such congestion is bounded, i.e., over a network total increase in congestion is bounded because of many reasons like bounded maximum number of consumers present in a network at any particular time. O	ne can easily capture the increase in congestion by a parameter $k$ that bounds the amount of increase in weight of edges over the whole graph. Occurrence of such bounded congestion motivates us to study the single source shortest path problem under this model.
	
	In spite of being an appealing model, to the best of our knowledge this bounded congestion increase model or bounded weight increment model has not been studied so far. In this paper we initiate the study of single source shortest path problem in weighted directed graphs in the above model. The main goal is to find a sparse subgraph that preserves distance between a designated source and any other vertices under weight increment operation. We formally define such subgraph below.
	\begin{definition}
	Given a graph $G$ along with weight function $w$, a source vertex $s \in V(G)$ and an integer $k \ge 1$, a subgraph $H=(V(G),E')$ where $E' \subseteq E(G)$ is said to be {\em $k$-Weight Tolerant Shortest-path Subgraph ($k$-WTSS)} of $G$ if for any weight increment function $I:E(G) \to \mathbb{N}$ such that $\sum_{e \in E(G)} I(e)\le k$, the following holds: for the weight function defined by $w'(e)=w(e)+I(e)$ for all $e \in E(G)$, 
	$$dist_{G,w'}(s,t)=dist_{H,w'}(s,t) \text{ for any } t \in V(G).$$
	\end{definition}
	Though in the above definition we restrict ourselves to an increment function whose range is $\mathbb{N}$, one can naturally extend the definition to any range of increment functions. However, if we take the range to be rational numbers then there may not exist any sparse $k$-WTSS even for $k=1$ (see Section~\ref{sec:rationalincrementLB}). This is the reason why we consider such restriction on increment function in the above definition.
	
	Single source shortest path is one of the most fundamental problems in Graph Theory as well as in computer science. Thus construction of sparse $k$-WTSS is interesting from both the theoretical and practical point of view. One can also view this problem of finding $k$-WTSS as a generalization of finding $k$-Fault Tolerant (single source) Reachability Subgraph ($k$-FTRS) for which optimal solution is known due to~\cite{BCR16}. If in the given graph one assigns zero weight to each of the edges, then any $k$-WTSS of that graph will also be a $k$-FTRS of that graph. This is because we can view each edge fault as incrementing weight by one and then it is easy to see that for any vertex $t$ there exists an $s-t$ path if and only if shortest distance between $s$ and $t$ is zero under this reduction. This fact also motivates the study of constructing $k$-WTSS because $k$-FTRS has several applications like fault tolerant strong-connectedness~\cite{BCR16}, dominators~\cite{LT79, BCR16}, double dominators~\cite{TD05} etc.
	
	The main contribution of this paper is to provide an efficient construction of a sparse $k$-WTSS for any $k \ge 1$, where sparsity of $k$-WTSS depends on the parameter $k$.
	
	\begin{theorem}
	\label{thm:main}
	There exists an $O((k)^k m^2 n)$-time algorithm that for any given integer $k \ge 1$, and a given directed graph $G$ with $n$ vertices and $m$ edges along with a weight function $w:E(G) \to \mathbb{Z}$ and a source vertex $s$, constructs a $k$-WTSS of $G$ with at most $O_k(n)$ edges where $O_k(\cdot)$ notation denotes involvement of a constant that depends only on the value of $k$. Moreover, in-degree of every vertex in this $k$-WTSS will be bounded by $e\cdot (k-1)! 2^k$.
	\end{theorem}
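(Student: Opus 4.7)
My plan is to build $H$ incrementally by processing the vertices of $G$ in non-decreasing order of their original distance $d(v)$ from $s$, choosing for each $v$ a carefully bounded set $E_v$ of incoming edges, and finally taking $H = \bigcup_v E_v$. The key starting observation is that, because weights and increments are integer-valued and the total increment is at most $k$, the perturbed distance $d_I(v)$ under any admissible $I$ takes only finitely many values in $\{d(v), d(v)+1, \ldots, d(v)+k\}$, so the relevant family of ``last edges into $v$'' is already discrete. I would maintain the inductive invariant that, after processing all vertices at distance $< d(v)$, the partial subgraph is already a $k$-WTSS for those vertices; hence to extend the invariant to $v$ it suffices to include, for every increment function $I$ with $\sum_e I(e) \le k$, at least one incoming edge $(u,v)$ that completes a shortest $s$-$v$ path under $I$ using edges already placed in $H$.

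For the selection of $E_v$ I would adapt the farthest min-cut machinery that Baswana--Choudhary--Roditty use to build optimal $k$-FTRS, extending it to the weighted-increment setting. The adaptation introduces an extra parameter: the portion $j \in \{0, 1, \ldots, k\}$ of the adversary's budget that lands on the final edge of the shortest path into $v$. For each such $j$, I build an auxiliary digraph combining the already-constructed subgraph with the candidate edges into $v$, compute a farthest min-$s$-$v$-cut in it, add the edges of that cut incident to $v$ to $E_v$, and then recurse on the resulting residual instance with remaining budget $k - j$. Correctness is by induction on $v$ in $d$-order: any $I$ with $\sum_e I(e) \le k$ determines one of the $k+1$ branches by how much weight the adversary places on the last edge, and the farthest-cut property together with the inner $k$-FTRS-style argument furnishes an edge of $E_v$ that extends to a correct $s$-$v$ shortest path inside the already-built subgraph.

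For the counting, let $T(k)$ be the maximum $|E_v|$ produced. The farthest-cut branching contributes a factor $2$, as in the unweighted FTRS case, and enumerating the last-edge budget together with its residual recursive call yields an extra factor of $k-1$, giving a recurrence of the form $T(k) \le 2(k-1)\,T(k-1)$ that solves to $T(k) \le e\cdot(k-1)!\,2^k$, matching the claimed in-degree bound; the total edge count is then $O_k(n)$. For the running time, per vertex we perform $O(k^k)$ min-cut and reachability subcomputations each of cost $O(m^2)$, giving overall $O(k^k m^2 n)$ time. The main obstacle I expect is establishing the correctness of the farthest-cut selection in this weighted setting: the adversary's increment is global (it can be redistributed arbitrarily across $E(G)$) while the cut we use is local to $v$, so one must argue that the local cut faithfully covers every way the remaining $k-j$ units of increment can be placed on the $s$-to-$u$ prefix. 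This is precisely where the integrality of weights and of increments is used crucially -- without either of these hypotheses the paper already notes that $\Omega(n^2)$ edges are required.
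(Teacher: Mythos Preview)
Your high-level skeleton --- reduce to a per-vertex incoming-edge selection, use farthest min-cuts, and recurse on a budget parameter --- is the same as the paper's. But the actual mechanism you propose for choosing $E_v$ diverges from the paper's and has a genuine gap.

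The paper does \emph{not} branch on ``how much of the budget lands on the last edge.'' Its central object is the \emph{shortest path subgraph} $G_{short}$ (the union of all minimum-weight $s$--$t$ paths) and the farthest min-cut $FSMC$ taken \emph{inside} $G_{short}$. The algorithm computes a nested sequence of $k$ such cuts, removes one of them, recomputes the shortest path subgraph of the residual (which, by integrality, now consists only of $s$--$t$ paths of weight $\ge d+1$), and recurses. The $k$-dimensional vector $\sigma$ records \emph{which} of the $k$ cuts was removed at each of the (at most $k$) peeling levels; this is the source both of the $(k-1)!$ count and of the correctness argument (Lemmas~\ref{lem:path}--\ref{lem:correct3}), which classifies the surviving shortest path $P$ according to which $G_\sigma$ contains it and how many of the cuts $C_{\sigma,r}$ it crosses. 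Your proposal never mentions $G_{short}$; a farthest min-cut in the full graph (or in ``already-constructed subgraph with candidate edges'') can have size $\Omega(n)$, so nothing you wrote bounds $|E_v|$. More importantly, without the peeling of shortest-path layers you have no handle on the case where the new shortest distance is $d+j$ with $j>0$: the paper's Lemma~\ref{lem:shortSubgraph} (every $s$--$t$ path in $G_{short}$ has weight exactly $d$) and the disjoint-shortest-path Lemma~\ref{lem:FSMCpath} are precisely what let one argue that the $k$-FTRS machinery carries over with the correct \emph{weights}, and these have no analogue in your sketch.

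Two smaller issues. First, your recurrence $T(k)\le 2(k-1)T(k-1)$ does not follow from what you describe: you enumerate $j\in\{0,\ldots,k\}$ and recurse with budget $k-j$, which would give $T(k)\le c\sum_{j=0}^{k}T(k-j)$, not a single $(k-1)T(k-1)$ term. The paper's $e(k-1)!$ factor comes from the constraint that at level $i$ the chosen cut index must be $<k-j+i-1$, not from any last-edge budget. Second, processing vertices in non-decreasing $d(v)$ is fragile here: weights are allowed in $\mathbb{Z}$, so the penultimate vertex $u$ on a shortest $s$--$v$ path need not satisfy $d(u)<d(v)$, and your induction can stall. The paper sidesteps this entirely via the Locality Lemma, which processes vertices in arbitrary order and re-runs the per-vertex construction on the \emph{current} partial subgraph.
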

	Next, we prove a lower bound of $c \cdot 2^k n$ for some constant $c \ge 5/4$, on the size of $k$-WTSS.
	\begin{theorem}
	\label{thm:LB}
	For any positive integer $k \ge 2$, there exists an positive integer $n'$ such that for all $n > n'$, there exists a directed graph $G$ with $n$ vertices and a weight function $w:E(G) \to \mathbb{Z}$, such that its $k$-WTSS must contain $c \cdot 2^k n$ many edges for some constant $c \ge 5/4$.
	\end{theorem}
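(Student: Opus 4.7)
The plan is to exhibit, for each $k\ge 2$ and each sufficiently large $n$, a weighted directed graph $G$ on $n$ vertices with integer weights and a designated source $s$, such that every $k$-WTSS of $G$ contains at least $\frac{5}{4}\cdot 2^k n$ edges. The construction is modular: I would first design a constant-size gadget $B=B(k)$ on $N_k=|V(B)|$ vertices with a distinguished entry vertex $a$ and exit vertex $b$, in which a set $F\subseteq E(B)$ of ``forced'' edges satisfies $|F|\ge \frac{5}{4}\cdot 2^k\cdot N_k$. Concretely, for each subset $S\subseteq[k]$ there would be an $a$-$b$ path $P_S$ in $B$ together with an increment pattern $I_S:E(B)\to\mathbb{N}$ of total mass $\sum_e I_S(e)\le k$ making $P_S$ the unique shortest $a$-$b$ path under $I_S$. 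I would then cascade $t=\lfloor (n-1)/N_k\rfloor$ vertex-disjoint copies $B_1,\ldots,B_t$ in series, identifying the exit of $B_i$ with the entry of $B_{i+1}$ and attaching $s$ to the entry of $B_1$, giving a graph with at most $n$ vertices.

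The lower bound would follow by a local adversarial argument. Fix any candidate $k$-WTSS $H$, any block $B_i$, and any forced edge $e\in F$ inside $B_i$. I would show that there exists a target $v$ downstream of $B_i$ and an increment function $I$ with total mass $\le k$ (supported inside $B_i$) under which every shortest $s$-$v$ path in $G$ traverses $e$. Hence $H$ cannot omit $e$, so $|E(H)|\ge t\cdot |F|\ge \frac{5}{4}\cdot 2^k\cdot (n-O(1))$, which dominates $c\cdot 2^k n$ for $c\ge\frac{5}{4}$ once $n$ is large enough. The concatenation is sound because increments inside $B_i$ can be chosen independently of the other blocks; in particular, the adversary never needs to spend budget outside $B_i$ to witness the forcing of an edge in $B_i$.

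The main obstacle is the design of the gadget $B$ in the first step. The subset-graph construction used for the $k$-Fault Tolerant Reachability Subgraph lower bound of $2^k n$ produces exactly $2^k$ forced edges per block of $\Theta(1)$ vertices, and the naive reduction from FTRS to WTSS (zero weights turn an edge increment into an edge failure) can only transfer that barrier verbatim. To beat $2^k n$ strictly, one must exploit the fact that weight increments in the WTSS setting can create a new unique shortest path without severing any old connection, so an adversary can force ``tiebreaker'' edges that would not appear in the reachability setting. I would realise this by augmenting the standard subset-graph gadget with auxiliary branches selected by increment patterns that concentrate all $k$ budget units on a single edge; a counting argument should show that about $\tfrac{1}{4}\cdot 2^k$ additional edges per block are forced, yielding $c\ge\tfrac{5}{4}$. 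The delicate part will be choosing weights so that the $k$-unit budget is simultaneously tight enough to witness forcing of all $\tfrac{5}{4}\cdot 2^k$ edges per block and loose enough to leave the adversary freedom to steer the shortest path through each of them in turn, without accidental interference between the different $P_S$.
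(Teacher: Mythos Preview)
Your cascading framework has a structural flaw that prevents it from reaching even the baseline $2^k n$ bound, let alone $\tfrac{5}{4}\cdot 2^k n$. You require a block $B$ on $N_k$ vertices with $|F|\ge \tfrac{5}{4}\cdot 2^k N_k$ forced edges, i.e.\ an average of $\Omega(2^k)$ forced edges \emph{per vertex of the block}. But any gadget of the kind you describe --- an entry $a$, an exit $b$, and $2^k$ internally routed $a$--$b$ paths $P_S$ selected by increment patterns --- necessarily has $\Theta(2^k)$ vertices: whether the $P_S$ are organised as a depth-$k$ binary tree or as the subset-graph, the vertex count and the forced-edge count are both $\Theta(2^k)$, so $|F|/N_k=\Theta(1)$. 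Cascading $t\approx n/N_k$ such blocks therefore yields only $\Theta(n)$ forced edges in total, not $\Theta(2^k n)$. Your assertion that the $k$-FTRS lower bound ``produces exactly $2^k$ forced edges per block of $\Theta(1)$ vertices'' is the source of the error: in that construction the $2^k$ forced edges are the in-edges of a \emph{single target vertex}, but those edges emanate from the leaves of a shared tree on $\Theta(2^k)$ vertices --- the tree is not replicated per target, it is amortised across all $n$ of them.

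The paper's construction exploits exactly this amortisation and does not cascade at all. It builds one source-side structure of size $O(2^k)$ --- namely several complete binary trees $T_1,\dots,T_l$ of heights $k,\,k-2,\,k-5,\dots$ rooted below $s$, with carefully chosen weights on the edges $(s,r_i)$ --- and then connects every leaf of every tree to each of $n$ sink vertices in a set $X$. The total number of leaves is $2^k+2^{k-2}+\cdots\ge \tfrac{5}{4}\cdot 2^k$, so each $t\in X$ has that many in-edges, and one shows that every such in-edge is forced: for a leaf in $T_i$, spend $\sum_{j=2}^{i} j$ units incrementing the edges $(s,r_j)$ for $j<i$ to push those trees above the target weight, and spend the remaining $h_i=k-\sum_{j=2}^{i} j$ units on the $h_i$ sibling edges inside $T_i$ to isolate a single root-to-leaf path; the total is exactly $k$. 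The extra $\tfrac{1}{4}\cdot 2^k$ over the FTRS bound thus comes not from ``auxiliary branches selected by concentrating all $k$ units on a single edge'' as you speculate, but from a second, shorter tree whose root edge is weighted so that a smaller failure budget suffices inside it once a couple of units have been spent disabling the first tree. Your proposal never constructs the gadget and the mechanism you gesture at for beating $2^k$ is not the one that works.
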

	Note that as we have previously argued that the construction of $k$-WTSS implies a construction of $k$-FTRS, so $2^kn$ lower bound on the size of a $k$-FTRS due to~\cite{BCR16} also directly gives the same lower bound on size of a $k$-WTSS. In the above theorem we slightly strengthen that lower bound by a constant factor for our problem.
	
	We also show that considering rational valued (instead of integer valued as in the above two theorems) weight function or rational valued weight increment function makes the problem of finding sparse $k$-WTSS impossible. More specifically, we show that in both the cases there are graphs with $n$ vertices for which any $k$-WTSS must be of size at least $c\cdot n^2$ for some constant $c  > 0$ even for $k=1$ (see Section~\ref{sec:generalLB}).
	
	One can farther relax our model by also allowing decrement operation on edge weight. Weight decrement is also natural in real world application because for any network it is possible that some links become less congested. Unfortunately, one can easily show that there are graphs for which there is no sub-quadratic sized subgraph that preserves the distances from a single source under this relaxed model. Readers may refer to Appendix~\ref{app:decrementLB} for the details.
	
	\subsection{Related works}
	%fault tolerant reachability and impossibility shortest path, graph spanner:additive and multiplicative
	Single source shortest path is a well studied problem under the edge or vertex failure model. Similar to our definition of $k$-WTSS, one can easily define $k$-Fault Tolerant Shortest-path Subgraph ($k$-FTSS) that preserves the distance information from a specific source vertex under at most $k$ edge failures. Unfortunately, we know that there are weighted graphs for which no sparse $k$-FTSS exists even for $k=1$, i.e., there are weighted graphs with $n$ vertices for which any $1$-FTSS must contain $\Omega(n^2)$ many edges~\cite{DTCR08}. This lower bound on size of $1$-FTSS is even true for undirected graphs. However, better bounds are known for unweighted graphs for $k\le 2$. Parter and Peleg~\cite{PP13} provided a construction of $O(n^{3/2})$ sized $1$-FTSS and showed that this bound is optimal. Later, Parter~\cite{Parter15} extended the construction to the case $k=2$ for undirected graphs on the cost of weakening the bound. He gave an algorithm to compute $2$-FTSS of size $O(n^{5/3})$ along with a matching lower bound. So far there is nothing known for the case $k > 2$.
	
	However, the situation is much better for single source reachability problem which is closely related to single source shortest path problem. Baswana, Choudhary and Roditty~\cite{BCR16} showed that we can compute $k$-FTRS, which is a subgraph that preserves the reachability information from a given source under at most $k$ edge failures, containing $2^kn$ many edges. They also provided a matching lower bound. We have already argued that computing $k$-FTRS can be reduced to computing $k$-WTSS and thus it is natural to ask whether a similar result also holds for $k$-WTSS. Another interesting related problem is to compute fault tolerant reachability oracle. It is trivial to see that using $O(2^kn)$ size $k$-FTRS~\cite{BCR16} one can answer any reachability query in $O(n)$ time for any constant value of $k$. However for $k \le 2$, $O(n)$ size data structure is known that can answer any single source reachability query in $O(1)$ time~\cite{LT79, Choudhary16}. Very recently, existence of an efficient algorithm to find strongly connected component in fault tolerant model has also been shown~\cite{BCR17}.
	
	Now let us come back to the shortest path problem. Instead of preserving the exact distances (between any pair of vertices), if we consider to preserve the distances only approximately, then much better results are known. In the literature such approximate distance preserving subgraphs are called {\em spanners}. Construction of spanners with both additive and multiplicative stretch have been studied extensively~\cite{Erd64, ADDJS93, ACIM99, BTMP05, WOO10, AB16, ABP17}. Fault tolerant version of spanners were first introduced in the geometric setting~\cite{Luk99, LNS02, CZ04}. For $k$ edge failures, construction of a $(2l-1)$ multiplicative spanner of size $\tildeO(kn^{1+1/l})$, for any $k,l\ge 1$, was provided in~\cite{CLPR09} whereas for $k$ vertex failures, upper bound on size is known to be $\tildeO(k^{2-1/l}n^{1+1/l})$~\cite{DK11}. In case of single vertex failure in an undirected graph, construction of a $O(n \log n)$ sized subgraph that preserves distances within a multiplicative factor of $3$, is known due to~\cite{BK13}. The bound on the size was later improved to $3n$~\cite{PP14}. Braunschvig, Chechik and Peleg~\cite{BCPS15} initiated the study of additive spanners. For $\beta$-additive spanner, Parter and Peleg~\cite{PP14} provided a $\Omega(n^{1+\epsilon(\beta)})$ size lower bound where $\epsilon(\beta)\in (0,1)$. They also constructed a $4$-additive spanner of size $O(n^{4/3})$ that is resilient to single edge failure. For single vertex failure, constructions of additive spanners were given in~\cite{Par14, BGGLP15}. Very recently, for any fixed $k \ge 1$, construction of a sub-quadratic size $2$-additive spanner resilient to $k$ edge or vertex failures has been shown for unweighted undirected graphs~\cite{BGPW17}. In the same paper, authors also show that to achieve $O(n^{2-\epsilon})$ upper bound, one must allow $\Omega(\epsilon k)$ additive error.
	
	Designing distance oracle is another important problem and also has been studied in edge failure model. The objective is to build a fault tolerant data structure that can answer queries about the distances in a given graph. For single edge failure the problem was first studied in~\cite{DTCR08}. Construction of $\tildeO(n^2)$-space and $O(1)$-query time oracle is known for single edge failure due to~\cite{BK09}. In case of dual edge failures, near optimal $\tildeO(n^2)$ size and $\tildeO(1)$-query time oracle was given in~\cite{DP09}. The problem has also been studied under the restriction of bounded edge weights~\cite{GW12, WY13}. For general $k$ edge failures, Bil{\`{o}} {\em et al.}~\cite{BGLP16} gave a construction of $O(kn\log^2 n)$ size data structure that can report distance from a single source within multiplicative factor of $(2k+1)$ in time $O(k^2 \log^2 n)$.
	
	Another closely related problem is the replacement path problem where given a source and destination vertex and an edge, the objective is to find a path from source to destination avoiding that particular given edge. Though the problem was initially defined for single edge failure, later it was extended to multiple edge failures also. Readers may refer to~\cite{Williams11, RZ12, GW12, WY13} for recent progresses on this problem.

	\subsection{Our technique}
	Before exhibiting the technique behind our result, we first state a simple observation. If we just store any shortest path tree rooted at $s$, then even after $k$ weight increment that tree will preserve the distance from $s$ to $t$ (for any $t$) within $k$ additive error. It is also necessary to include a shortest path tree inside a $k$-WTSS, otherwise we can never hope to get exact distances even when $k=0$. Now since weight of any path can be increased by at most $k$, after including any shortest $s-t$ path in a $k$-WTSS it is not required to include another $s-t$ path that has weight more than or equal to $dist(s,t)+k$.
	
	We argue that for the construction of $k$-WTSS it suffices to concentrate on any single vertex $t$ and try to build a subgraph such that the distance between $s$ and $t$ is preserved under weight increment and we call such a subgraph a $k$-WTSS($t$). This is because of the application of Locality Lemma (see Section~\ref{sec:localLem}), a variant of which also appears in~\cite{BCR16}. Locality Lemma actually says slightly more, that if we can construct such a subgraph for a particular vertex $t$ with an additional property that in-degree of $t$ in the subgraph is bounded by some value $c$, then we can get a $k$-WTSS of size at most $c n$.
	
	 So from now on we can only talk about constructing $k$-WTSS($t$). Let us take a toy example which provides a motivation behind our technique. Let the input graph be $G$ and $dist(s,t)=d$. Suppose $G$ is such that it can be decomposed into $k+1$ disjoint subgraphs $G_0,\cdots,G_{k}$ where for $0 \le i \le k-1$, $G_i$ contains all the $s-t$ paths of weight $d+i$ present in $G$ and any $s-t$ path in $G_i$ has weight exactly $d+i$. In general such a decomposition may not exist. However, if it exists then it is not hard to get such a decomposition. Now given such a decomposition, we compute a $k$-FTRS($t$) of $G_0$ and for $i \in [k-1]$, a $(k-i-1)$-FTRS($t$) of $G_i$ and then take the union of them. We claim that the obtained subgraph will be a $k$-WTSS($t$). Say after weight increment, the shortest distance between $s$ and $t$ is $d+j$ for $1 \le j < k$. Our assumption on $j$ is justified because $j=0,k$ cases are trivial as we have included $k$-FTRS($t$). For a similar reason we can also assume that all the shortest paths now have weight at least $d+j+1$. Without loss of generality we further assume that no weight increment happens on the edges of the current shortest path. The justification of this assumption is provided later. Due to our assumption on decomposition of $G$, we know that the total increase in weight on the edges of $G_j$ is bounded by $k-(j+1)$ which also implies that at most $k-(j+1)$ many edges of $G_j$ are affected by weight increment. This is because our increment function is integer valued. Note that this is the place where integer valued increment plays a crucial role. However by our construction, we have included $(k-j-1)$-FTRS($t$) of $G_j$ in our subgraph. Thus even if we consider removal of those affected edges then as there is a path in $G_j$ on which there is no weight increment, so by the definition of $(k-j-1)$-FTRS($t$) there will be a surviving path included in our constructed subgraph. This proves the correctness. Also by the result of~\cite{BCR16}, in-degree of $t$ of each $(k-i-1)$-FTRS($t$) of $G_i$ is bounded by $2^{k-i-1}$ and hence total in-degree of $t$ in the constructed $k$-WTSS($t$) is bounded by $2^{k+1}$. Hence we get a $k$-WTSS of size at most $2^{k+1} n$.
	 
	 We have mentioned before that there may not exist the above type of decomposition for an arbitrary graph. In general, if we consider a subgraph by taking all the $s-t$ paths upto some specific weight, then that particular subgraph may also contain a $s-t$ path with larger weight. At this point, argument stated in the last paragraph fails. However, the nice thing is that if we just consider all the shortest paths and build a subgraph then it is true that there will not be any $s-t$ path of larger weight in that subgraph. Now if we use the construction of $k$-FTRS on this shortest path subgraph, then we can guarantee the preservation of distances as long as the distances do not change even after the weight increment. Though if the distance changes, we can not say anything. This is the main challenge that we overcome in our algorithm. For that purpose we use the properties of the farthest min-cut of the shortest path subgraph.
	 
	 Baswana {\em et al.}~\cite{BCR16} used the concept of farthest min-cut to construct $k$-FTRS. In their work, they first computed a series of $k$ farthest min-cuts by taking source sets in some nested fashion. Then they calculated a max-flow from the final source set and kept the incoming edges of $t$ having non-zero flow. We further exploit their technique in this paper to get our algorithm. We consider the shortest path subgraph and then compute a series of farthest min-cuts similar to~\cite{BCR16}. However as mentioned in the last paragraph, in this way we just get $k$-FTRS($t$) of the shortest path subgraph. Now let us take the farthest min-cut considering $s$ as source. Since it is a $(s,t)$-cut of the shortest path subgraph, removal of it destroys all the shortest $s-t$ paths present in the original graph. Now if we again compute the shortest path subgraph, we will get a subgraph containing only $s-t$ paths of weight $d+i$, for some $i>0$. Then we can process this new subgraph as before to compute a sequence of $k$ farthest min-cuts and remove the first one. We proceed in this way until we reach at a point that we are left with $s-t$ paths of weight at least $d+k$.
	 
	 Now let us compare the situation with what we have already discussed with our toy example. Removal of cut edges only helps us to generate some subgraph of each of $G_i$'s. However computing $k$-FTRS($t$) of just some subgraph of $G_i$'s may not be sufficient to get $k$-WTSS($t$). Thus for each $G_i$, we try to get a lot of subgraphs of it so that when we combine $k$-FTRS($t$) of all of them, we get the same advantage that we got from computing $(k-i-1)$-FTRS($t$) of $G_i$ in the toy example. One way of getting a lot of subgraphs of $G_i$ is to try out removal of different cuts (not just the farthest one). Obviously we cannot try for all possible cuts, because there can be too many. Moreover, each time to reach at a subgraph of weight $d+i$ we may have to remove a series of cuts. As a result we may end up with exponentially many choices on removal of cuts to get all possible subgraphs of $G_i$.
	 
	  The good thing is that it suffices to use just a series of $k$ farthest min-cuts computed before for the purpose of removal. This will reduce the number of choices to only $k^i$ for any fixed $G_i$. In our algorithm we establish a slightly better bound on the number of subgraphs of $G_i$ we need to consider to construct a $k$-WTSS($t$). In the proof we use $k$-dimensional vectors to efficiently enumerate all of these subgraphs. After getting those subgraphs we apply a construction similar to that of $k$-FTRS($t$) from~\cite{BCR16} to get a bound on in-degree of $t$. We emphasize that actually we cannot directly apply algorithm of~\cite{BCR16} in a black box fashion on each of the subgraphs of $G_i$ that we consider, because in that case it will not give us the claimed bound.
	 
	 In this paper we consider $k$-WTSS with respect to the weight increment on edges. Instead, it is also possible to take weights on the vertices and performing increment over them. However, one can directly apply our result by splitting each vertex $v$ into two vertices $v_i$ and $v_o$ where all the incoming and outgoing edges of $v$ are respectively directed into $v_i$ and directed out of $v_o$, and then considering an edge $(v_i,v_o)$ with the weight equal to that on the vertex $v$.

	\subsection*{Organization of the paper}
	We discuss useful notations and some already known results about farthest min-cut in Section~\ref{sec:prelim}. Then in Section~\ref{sec:shortFarthest} we provide an algorithm to compute farthest min-cut of the shortest path subgraph and a few important properties about it. Next in Section~\ref{sec:localLem}, we reduce the problem of finding $k$-WTSS to that of finding $k$-WTSS($t$) for some specific vertex $t$ using Locality Lemma. We describe our main algorithm and its correctness in Section~\ref{sec:mainAlgo}. We also present several lower bound results in Section~\ref{sec:sizeLB} and Section~\ref{sec:generalLB}.
	
	\section{Preliminaries}
	\label{sec:prelim}
	\paragraph*{Notations:}
	%basic notations regarding graph, flow, residual graph etc
	For any positive integer $r$, we denote the set $\{1,2,\cdots,r\}$ by $[r]$. Throughout this paper we use $\mathbb{N}$ to indicate the set of natural numbers including zero. For any $k$-dimensional vector $\sigma$ and $i \in [k]$, we use the notation $\sigma(i)$ to denote the value of the $i$-th coordinate of $\sigma$. Given a directed graph $G=(V,E)$ on the set of vertices of size $|V|=n$ and set of edges of size $|E|=m$ with a weight function $w$ defined on the set of edges, a source vertex $s \in V$ and a destination vertex $t \in V$, we use the following notations throughout this paper.
	\begin{itemize}
	\item $V(G), E(G):$ the set of vertices and edges of $G$ respectively.
	\item $w(P):$ weight of any path $P$.
	\item $dist_{G,w}(x,y):$ the shortest distance between any two vertices $x$ and $y$ in $G$ when weight of each edge is defined by the weight function $w$.
	\item $G + (u,v):$ the graph obtained by adding an edge $(u,v)$ to the graph $G$.
	\item $G \setminus F:$ the graph obtained by removing the set of edges $F$ from the graph $G$.
	\item $In(A):$ the set of all vertices in $V \setminus A$ having an outgoing edge to some vertex in $A \subseteq V$.
	\item $Out(A):$ the set of all vertices in $V \setminus A$ having an incoming edge from $A \subseteq V$.
	\item $In$-$Edge(A):$ the set of edges incoming to $A \subseteq V$.
	\item $Out$-$Edge(A):$ the set of edges out of $A \subseteq V$.
	\item $P[x,y]:$ the subpath of a path $P$ from a vertex $x$ to $y$.
	\item $P \circ Q:$ the path formed by concatenating paths $P$ and $Q$ assuming the fact that last vertex of $P$ is same as first vertex of $Q$.
	\item $E(f):$ the set of edges $e$ such that under a given flow $f$, $f(e)\ne 0$.
	\item $MaxFlow(G,S,t):$ any maximum valued flow in $G$ from a source set $S$ to $t$.
	\item $G_{short}:$ the shortest path subgraph of $G$, i.e., union of all shortest $s-t$ paths in $G$.
	\item $ShortMaxFlow(G,S,t):$ any maximum valued flow returned by $MaxFlow(G_{short},S,t)$.
	\end{itemize}
	The following definition introduces the notion of $k$-WTSS with respect to a fixed vertex $t$.
	\begin{definition}[$k$-WTSS($t$)]
	Given a graph $G$ with weight function $w$, a source vertex $s \in V(G)$, another vertex $t \in V(G)$ and an integer $k \ge 1$, a subgraph $H_t=(V(G),E')$ where $E' \subseteq E(G)$ is said to be {\em $k$-WTSS($t$)} of $G$ if for any weight increment function $I:E(G) \to \mathbb{N}$ such that $\sum_{e \in E(G)} I(e)\le k$, the following holds: for the weight function defined by $w'(e)=w(e)+I(e)$ for all $e \in E(G)$, $dist_{G,w'}(s,t)=dist_{H_t,w'}(s,t)$.
	\end{definition}
	We can easily extend the above definition for increment function $I:E(G) \to \mathbb{R}$. However in the above definition we just consider $I:E(G) \to \mathbb{N}$ because only with this extra restriction we can hope for sparse $k$-WTSS($t$) (see Section~\ref{sec:generalLB}). Following is an alternative definition of $k$-WTSS in terms of $k$-WTSS($t$).
	\begin{definition}
	A subgraph $H$ is {\em $k$-WTSS} of $G$ if and only if it is $k$-WTSS($t$) for all $t \in V(G)$.
	\end{definition}

	\subsection{Max-flow and farthest min-cut}
	%basic results from paper of Baswana et al. '16
	Algorithm described in this paper heavily exploits the connection between min-cut, max-flow and the number of edge disjoint paths present in a graph. Let us start with the following well known fact.
	\begin{theorem}
	In any graph with unit capacity on edges, there is a flow of value $r$ from a source set $S$ to a destination vertex $t$ if and only if there exist $r$ edge disjoint paths that originate from the set $S$ and terminate at $t$.
	\end{theorem}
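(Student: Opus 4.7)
The plan is to prove both directions separately, first reducing the multi-source setting to the classical single-source single-sink version via a super-source construction. Specifically, I would introduce a fresh vertex $s^*$ and add edges $(s^*,v)$ of capacity $r$ for every $v \in S$, producing an augmented graph $G'$. Flows of value $r$ from $S$ to $t$ in $G$ correspond bijectively to flows of value $r$ from $s^*$ to $t$ in $G'$ (push exactly $r$ units out of $s^*$, split among its outgoing edges according to what leaves $S$), and likewise edge-disjoint paths from $S$ to $t$ in $G$ correspond bijectively to edge-disjoint paths from $s^*$ to $t$ in $G'$ (since the $s^*$-edges are the only ones incident to $s^*$, each path uses a distinct one).

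For the easy direction, given $r$ edge-disjoint paths $P_1,\ldots,P_r$ from $S$ to $t$, extend each $P_i$ in $G'$ by prepending the edge from $s^*$ to the start of $P_i$, and define $f(e)=1$ on every edge lying on some extended path and $f(e)=0$ elsewhere. Edge-disjointness of the $P_i$'s ensures no unit-capacity edge is used twice, and flow conservation at every internal vertex holds because each $P_i$ contributes exactly one incoming unit and one outgoing unit.

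For the reverse direction, I would invoke integrality of max-flow on integer-capacity networks to assume $f$ is integral, so $f(e)\in\{0,1\}$ on every edge of $G$. Then I would perform a standard flow decomposition: starting at $s^*$, greedily follow outgoing edges in $E(f)$, using flow conservation to guarantee that every internal vertex with incoming positive-flow also has outgoing positive-flow, so the walk can always be extended until it either reaches $t$ or revisits a previously seen vertex. In the latter case, a cycle is isolated and cancelled by decrementing $f$ by one on each cycle edge, which preserves the value of the flow reaching $t$; in the former case, a simple $s^*$-to-$t$ path has been produced, and $f$ is decreased by one along that path. Iterating $r$ times exhausts the flow, and stripping the prepended $s^*$-edge from each extracted path yields $r$ paths in $G$ from $S$ to $t$ that are edge-disjoint, since each edge used reduces from $f$-value $1$ to $0$ and cannot be reused.

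The main obstacle is really just the bookkeeping for flow decomposition in the presence of cycles, which is why many textbooks phrase this as a direct appeal to Menger's theorem together with the max-flow min-cut theorem. The cleanest alternative would be to note that the max-flow value from $s^*$ to $t$ in $G'$ equals the minimum $s^*$-$t$ edge-cut capacity (max-flow min-cut), and that this in turn equals the maximum number of edge-disjoint $s^*$-to-$t$ paths (edge-version of Menger's theorem), and then transport the equivalence back to $G$ via the correspondence above. Either route is standard, so the proof reduces to classical results once the super-source reduction is in place.
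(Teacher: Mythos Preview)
Your approach is correct and entirely standard; in fact the paper does not prove this statement at all --- it is introduced with ``Let us start with the following well known fact'' and left unproven. So there is no ``paper's proof'' to compare against, and your super-source reduction followed by either flow decomposition or an appeal to Menger plus max-flow/min-cut is exactly how one would justify it.

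One small wrinkle worth tightening: you claim edge-disjoint $S\!\to\! t$ paths in $G$ correspond \emph{bijectively} to edge-disjoint $s^*\!\to\! t$ paths in $G'$, and in the easy direction you set $f(e)=1$ on every edge of an extended path. But two edge-disjoint paths in $G$ may start at the same vertex $v\in S$, in which case both extended paths use the edge $(s^*,v)$; they are then not edge-disjoint in $G'$, and putting $f=1$ on $(s^*,v)$ loses a unit of flow. The fix is trivial --- either set $f(e)$ to the number of paths through $e$ (your super-source edges have capacity $r$, so this is fine), or skip $G'$ altogether for the easy direction and observe directly that routing one unit along each $P_i$ already gives a valid flow of value $r$ from $S$ in $G$. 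The hard direction, which is where the super-source is actually needed, goes through as you wrote it.
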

	Now we define $(S,t)$-min-cut in a graph $G$.
	\begin{definition}[$(S,t)$-min-cut]
	In any graph $G$ an {\em $(S,t)$-cut} is a set of edges $C \subseteq E(G)$ such that every path from any vertex $s \in S$ to $t$ must pass through some edge in $C$. An $(S,t)$-cut is called {\em $(S,t)$-min-cut} if it has the smallest size among all other $(S,t)$-cuts.
	\end{definition}
	Any $(S,t)$-cut $C$ partitions the vertex set $V(G)$ into two subsets $A(C)$ and $B(C)$ where $A(C)$ is the set of all the vertices reachable from $S$ in $G\setminus C$ and $B(C)=V(G)\setminus A(C)$. Note that $S \subseteq A(C)$ and $t \in B(C)$. From now onwards, we assume this pair of vertex sets $(A(C),B(C))$ to be output of a function $Partition(G,C)$.
	
	For our purpose we do not just consider any $(S,t)$-min-cut, instead we consider the farthest one.
	\begin{definition}[Farthest Min Cut]
	Let $S$ be a source set and $t$ be a destination vertex in any graph $G$ and suppose for any $(S,t)$-min-cut $C$, $(A(C),B(C))=Partition(G,C)$. Any $(S,t)$-min-cut $C_{far}$ is called {\em farthest min-cut}, denoted by $FMC(G,S,t)$, if for any other $(S,t)$-min-cut $C$, it holds that $A(C)\subsetneq A(C_{far})$.
	\end{definition}
	
	The following lemma given by Ford and Fulkerson establishes the uniqueness of farthest min-cut and also provides an algorithm to compute it.
	\begin{lemma}{\cite{FF62}}
	\label{lem:uniqueFMC}
	Suppose $f$ be a max-flow in $G$ from any source set $S$ to $t$ and $G_f$ be the corresponding residual graph. If $B$ be the set of vertices from which there is a path to $t$ in $G_f$ and $A=V(G)\setminus B$, then the set $C$ of edges that start at $A$ and terminate at $B$ is the unique farthest $(S,t)$-min-cut.
	\end{lemma}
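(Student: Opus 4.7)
The plan is to proceed in three stages: first verify that the set $C$ defined in the statement is indeed an $(S,t)$-cut of minimum cardinality, then show that its associated partition $(A, B)$ is the ``farthest'' one in the sense of the definition, and finally derive uniqueness as a corollary of the fact that a min-cut is uniquely determined by its partition.

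First I would establish that $S \subseteq A$ and $t \in B$. The latter is immediate from the definition of $B$. For the former, if some $s \in S$ lay in $B$, then $s$ would have a path to $t$ in $G_f$, yielding an augmenting path from the source set to $t$ and contradicting maximality of $f$. Next, I would show $|C|$ equals the value of $f$. Working with unit capacities, I would argue in two sub-steps. (i) If an edge $(u,v) \in E(G)$ has $u \in A$, $v \in B$ and $f((u,v)) = 0$, then the forward residual edge $(u,v)$ exists in $G_f$; concatenating with the known path $v \leadsto t$ in $G_f$ would place $u$ in $B$, contradiction. Hence every edge from $A$ to $B$ is saturated. (ii) Symmetrically, if $(v,u) \in E(G)$ with $v \in B$, $u \in A$ and $f((v,u)) = 1$, the reverse residual edge $(u,v)$ exists in $G_f$, again forcing $u \in B$. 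So no edge from $B$ to $A$ carries flow. Therefore the net flow across $(A,B)$ equals $|C|$, which by standard flow conservation equals the value of $f$. Combined with the max-flow/min-cut inequality, this proves $C$ is a min-cut.

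Next I would show the ``farthest'' property: for any $(S,t)$-min-cut $C'$ with $(A', B') = Partition(G, C')$, it holds that $A' \subseteq A$. Since $C'$ is a min-cut and its value equals that of $f$, the same argument as above (applied to $C'$ rather than $C$) shows every edge of $C'$ is saturated and no edge from $B'$ to $A'$ carries flow. Now pick any $v \in B$ and suppose, for contradiction, that $v \in A'$. Choose a path $v = v_0 \to v_1 \to \cdots \to v_\ell = t$ in $G_f$; since $v_0 \in A'$ and $v_\ell \in B'$, there is an index $i$ with $v_i \in A'$, $v_{i+1} \in B'$. The edge $(v_i, v_{i+1})$ of $G_f$ is either a forward edge, which would be an unsaturated original edge crossing $C'$ from $A'$ to $B'$ (contradicting saturation of $C'$), or a reverse edge, which would correspond to an original edge from $v_{i+1} \in B'$ to $v_i \in A'$ carrying flow (contradicting the fact that no such flow exists). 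Either case is a contradiction, so $v \in B'$, giving $B \subseteq B'$ and hence $A' \subseteq A$.

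Finally, for uniqueness, I would use the observation that any min-cut $C'$ consists exactly of the edges from $A'$ to $B'$: any edge of $C'$ not crossing this partition could be removed while preserving the cut property, contradicting minimality, and conversely any edge from $A'$ to $B'$ must lie in $C'$ by the definition of $A'$ as the set reachable from $S$ in $G \setminus C'$. Therefore if $A' = A$ then $C' = C$, so the containment $A' \subseteq A$ is strict whenever $C' \ne C$, establishing that $C$ is the unique farthest min-cut. I do not anticipate a genuine obstacle in this argument; the only delicate point is the case analysis on residual-graph edges in the farthest-property step, and that is handled cleanly by the saturation observations from the first stage.
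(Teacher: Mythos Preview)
Your proof is correct and follows the standard residual-graph argument. Note, however, that the paper does not actually prove this lemma: it is stated as a known result attributed to Ford and Fulkerson~\cite{FF62} and used as a black box. So there is no ``paper's own proof'' to compare against; what you have written is precisely the classical justification for how one extracts the farthest min-cut from a max-flow computation, and every step (saturation across any min-cut, the residual-edge case analysis for the farthest property, and the partition-determines-cut observation for uniqueness) is sound.
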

	Now we state following three important properties of farthest min-cut from~\cite{BCR16}.
	\begin{lemma}{\cite{BCR16}}
	\label{lem:BCR1}
	For any graph $G$, a source set $S$ and a destination vertex $t$, let $C=FMC(G,S,t)$ and $(A,B)=Partition(G,C)$ and for any edge $(s,b) \in (S \times B)$ define $G'=G+(s,b)$. Then the value of max-flow from $S$ to $t$ in $G'$ is exactly one greater than that in $G$ and $FMC(G',S,t)=C \cup \{(s,b)\}$.
	\end{lemma}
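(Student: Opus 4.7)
The plan is to split the lemma into two assertions -- (a) the max-flow in $G'$ exceeds that in $G$ by one, and (b) $FMC(G',S,t) = C' := C \cup \{(s,b)\}$ -- and handle each via max-flow/min-cut duality combined with Lemma~\ref{lem:uniqueFMC}. Let $f$ be a max-flow in $G$, so $|f|=|C|$.

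For (a), the upper bound is immediate: $G' \setminus C' = G \setminus C$ has no $S$-to-$t$ path, so $C'$ is an $(S,t)$-cut in $G'$ of size $|C|+1$. For the lower bound, regard $f$ as a flow in $G'$; the new edge $(s,b)$ is unsaturated, hence a forward residual in $G'_f$, while $b \in B$ supplies, via Lemma~\ref{lem:uniqueFMC}, a $b$-to-$t$ path in $G_f \subseteq G'_f$. Concatenating these gives an $S$-to-$t$ augmenting path, and augmenting $f$ along it produces a flow $f'$ in $G'$ of value $|f|+1$, matching the cut bound.

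For (b), $C'$ is already a min-cut of $G'$ because $|C'|$ equals the max-flow. To show it is the \emph{farthest} one, I would invoke Lemma~\ref{lem:uniqueFMC} on $G'$ with the constructed $f'$ and argue that the set $B^*$ of vertices reaching $t$ in $G'_{f'}$ equals $B$; then the farthest min-cut is the set of $G'$-edges from $A$ to $B$, namely $C \cup \{(s,b)\}$. The inclusion $B^* \subseteq B$ is the easier direction: since $(A,B)$ is a min-cut of $G'$ saturated by $f'$, every edge of $C'$ is absent as a forward residual in $G'_{f'}$ and every $B$-to-$A$ edge of $G'$ carries zero flow, so $G'_{f'}$ has no residual edge from $A$ to $B$, and no vertex of $A$ can reach $t \in B$.

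The main obstacle is the reverse inclusion $B \subseteq B^*$. Fix $v \in B$. Any $v$-to-$t$ path in $G_f$ must lie inside $G_f[B]$, since $G_f$ has no $A$-to-$B$ edges (any such edge would place its tail in $B$, contradicting $A$-membership). The $b$-to-$t$ segment of the chosen augmenting path $P_{aug} = (s,b,u_1,\dots,u_{k-1},t)$ likewise lies in $G_f[B]$. Passing from $G_f$ to $G'_{f'}$ reverses exactly the edges of $P_{aug}$ (and introduces the single extra residual $(b,s)$), so restricted to $B$, $G'_{f'}$ is $G_f[B]$ with one simple $b$-to-$t$ path reversed. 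I would then show that reachability to $t$ survives this reversal for every $v \in B$ by an explicit rerouting: follow $v$'s original path in $G_f[B]$ to its first meeting $u_j$ with $P_{aug}$, splice in the reversed segment $u_j \to \dots \to b$, and continue via an alternate route through $G_f[B]$ to $t$, using an inductive minimality argument to handle subsequent intersections with $P_{aug}$. Making this rerouting clean, in particular verifying non-interference between the reversed path and the alternate routes, is the technical heart of the proof; once $B^* = B$ is established, $FMC(G',S,t) = C \cup \{(s,b)\}$ follows directly from Lemma~\ref{lem:uniqueFMC}.
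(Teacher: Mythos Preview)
The paper does not prove this lemma; it is quoted from~\cite{BCR16} without argument, so there is no ``paper's proof'' to compare against. Your treatment of~(a) is correct and standard.

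For~(b) there is a genuine gap. You correctly isolate $B\subseteq B^{*}$ as the crux, but the rerouting you sketch cannot be completed---indeed, the assertion $FMC(G',S,t)=C\cup\{(s,b)\}$ is \emph{false} as literally stated here. Take $V=\{s,b,u,v,t\}$, $S=\{s\}$, and
\[
E(G)=\{(s,u),\ (u,t),\ (u,v),\ (v,t),\ (b,u),\ (b,v)\}.
\]
The unique $(S,t)$-min-cut in $G$ is $C=\{(s,u)\}$, giving $(A,B)=(\{s\},\{b,u,v,t\})$, so $b\in B$. In $G'=G+(s,b)$ the max-flow is indeed $2$, confirming~(a). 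But $G'$ has exactly two min-cuts, $\{(s,u),(s,b)\}$ with source side $\{s\}$ and $\{(u,t),(v,t)\}$ with source side $\{s,b,u,v\}$; hence $FMC(G',S,t)=\{(u,t),(v,t)\}\neq C\cup\{(s,b)\}$. In your residual picture, after augmenting $f$ along $(s,b,v,t)$ the vertices $b,u,v$ sit on a directed $3$-cycle in $G'_{f'}[B]$ with no edge to $t$, so $B^{*}=\{t\}\subsetneq B$ and no rerouting is possible.

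What has happened is that the restatement here omits a hypothesis: one needs every vertex of $G$ to lie on some $S$--$t$ path, which rules out the ``stray'' vertices $b,v$ above (they belong to $B$ only because $B=V\setminus A$, not because they participate in any $S$--$t$ path). This hypothesis holds in every application in the present paper---each $G_\sigma$ is by construction a union of $s$--$t$ paths---and is presumably present in~\cite{BCR16}. Under it one can prove $B\subseteq B^{*}$, but not via the single-path reversal argument you outline; a cleaner route is to show directly that any min-cut $C''$ of $G'$ must contain $(s,b)$, whence $C''\setminus\{(s,b)\}$ is a min-cut of $G$ with $A(C''\setminus\{(s,b)\})=A(C'')$, and farthestness of $C$ in $G$ forces $A(C'')\subseteq A$. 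Finally, note that the present paper only ever invokes conclusion~(a) of this lemma (in the proof of Claim~\ref{clm:auxFlow}), which you have established.
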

	
	\begin{lemma}{\cite{BCR16}}
	\label{lem:BCR2}
	Consider a source vertex $s$ and a destination vertex $t$ in any graph $G$. Let $S \subseteq V(G)$ such that $s \in S$ and $t \not \in S$ and $f$ be a max-flow from $S$ to $t$ in $G$ and $C=FMC(G,S,t)$, $(A,B)=Partition(G,C)$. Then we can always find a max-flow $f_{max}$ from $s$ to $t$ such that $E(f_{max}) \subseteq E(A) \cup E(f)$.
	\end{lemma}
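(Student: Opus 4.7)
The plan is to show that the subgraph $\tilde G := (V(G),\, E(A)\cup E(f))$ already supports a max-flow from $s$ to $t$ of the same value $k'$ as in $G$; any such flow in $\tilde G$ is then automatically a max-flow in $G$ whose edge-support is confined to $E(A)\cup E(f)$, giving the desired $f_{max}$. Upper-bounding the max-flow in $\tilde G$ by $k'$ is immediate since $\tilde G \subseteq G$, so the real work is to produce a flow in $\tilde G$ of value $k'$ from $s$ to $t$.

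First I would extract the structural consequences of Lemma~\ref{lem:uniqueFMC}: since $B$ is precisely the set of vertices that reach $t$ in the residual graph $G_f$, every $A$-to-$B$ edge of $G$ is saturated by $f$ (so $C\subseteq E(f)$) and no $B$-to-$A$ edge carries $f$-flow. The restriction $\hat f$ of $f$ to $C$ and to edges internal to $B$ is therefore a valid flow of value $|C|$ from the ``exit set'' $A_C := \{a\in A : \exists b\in B,\ (a,b)\in C\}$ (with supply at each $a$ equal to its $C$-out-degree) to $t$, and it is supported entirely inside $E(f) \subseteq E(\tilde G)$.

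The crux of the plan is to show that $G[A]$ admits a flow of value $k'$ from $s$ to $A_C$ respecting the sink capacities induced by the $C$-out-degrees. Suppose for contradiction that the maximum such flow were some $m<k'$; by max-flow/min-cut there would be a cut of total weight $m$ consisting of $A$-internal edges plus sink-capacity penalties at some vertices of $A_C$. Let $A_s$ be the set of vertices still reachable from $s$ in $G[A]$ after deleting the $A$-internal cut edges. Then any edge of $G$ leaving $A_s$ is either one of those $A$-internal cut edges or a $C$-edge out of $A_s\cap A_C$, and each such $C$-edge was necessarily charged to the sink-capacity penalty. These edges together form an $(s,t)$-cut in $G$ (as $t\in B\subseteq V\setminus A_s$) of size at most $m<k'$, contradicting the assumption that $k'$ equals the value of $MaxFlow(G,s,t)$. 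This is the step that genuinely uses the farthest-min-cut structure: the \emph{only} exits of $A$ in the original graph go through $C$, so an $A$-internal cut plus the $C$-edges on its $s$-side is automatically an $(s,t)$-cut in all of $G$.

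To conclude, I would concatenate the value-$k'$ flow $g$ in $G[A]$ from $s$ to $A_C$ with the corresponding portion of $\hat f$ that carries each arriving unit across $C$ and through $B$ to $t$. Since $g$ uses only $A$-internal edges and $\hat f$ uses only $C$-edges and $B$-internal edges of $E(f)$, the two are edge-disjoint, and their union is a flow of value $k'$ in $\tilde G$ from $s$ to $t$ whose edge-support lies in $E(A)\cup E(f)$. The main obstacle I anticipate is the cut-lifting step of the third paragraph together with the bookkeeping needed for the capacitated sinks; once this is in place, the rest follows cleanly from the residual characterization of $A$ and $B$ supplied by Lemma~\ref{lem:uniqueFMC}.
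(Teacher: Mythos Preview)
The paper does not give its own proof of this lemma: it is quoted verbatim from \cite{BCR16} and used as a black box. So there is nothing in the present paper to compare your argument against.

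That said, your proof is sound. The three steps --- (i) using Lemma~\ref{lem:uniqueFMC} to see that every $A\!\to\!B$ edge lies in $C\subseteq E(f)$ and no $B\!\to\!A$ edge carries flow, so the restriction $\hat f$ of $f$ to $C$ and $B$-internal edges is a valid flow of value $|C|$ from $A_C$ to $t$; (ii) the min-cut lifting argument showing that the capacitated $s\to A_C$ problem inside $G[A]$ has value at least $k'$, since any smaller cut there induces an $(s,t)$-cut of the same size in $G$; and (iii) concatenating the $A$-internal flow $g$ with a matching sub-flow of $\hat f$ --- all go through. The only place to be a touch more explicit is step~(iii): you should say that in the unit-capacity setting $\hat f$ decomposes into $|C|$ edge-disjoint $A_C\!\to\! t$ paths, with exactly the $C$-out-degree many starting at each $a\in A_C$, so you can select one path per unit of $g$-arrival at $a$; edge-disjointness of $g$ (entirely inside $A$) from these paths (entirely in $C\cup E(B)$) then makes the union a valid flow. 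With that one line added, the argument is complete and self-contained.
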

	
	\begin{lemma}{\cite{BCR16}}
	\label{lem:BCR3}
	Consider a source vertex $s$ and a destination vertex $t$ in any graph $G$. Let $S \subseteq V(G)$ such that $s \in S$, $t \not \in S$ and $(A,B)=Partition(G,FMC(G,s,t))$. Then for $(A',B')=Partition(G,FMC(G,S,t))$, $B' \subseteq B$.
	\end{lemma}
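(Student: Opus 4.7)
The plan is to prove Lemma~\ref{lem:BCR3} by induction on $|S|$. The base case $|S|=1$ forces $(A',B') = (A,B)$ and is trivial. For the inductive step, write $S = S_0 \cup \{s'\}$ with $s \in S_0$ and $s' \notin S_0$, set $(A_0,B_0) = Partition(G, FMC(G,S_0,t))$ and $C_0 = FMC(G,S_0,t)$; by the inductive hypothesis $B_0 \subseteq B$, so it suffices to prove $B' \subseteq B_0$. The argument splits on the location of $s'$ relative to $A_0$.

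If $s' \in A_0$, then $s'$ is already reachable from $S_0$ in $G \setminus C_0$, so $C_0$ remains a cut separating $S = S_0 \cup \{s'\}$ from $t$, and the set of vertices reachable from $S$ in $G \setminus C_0$ is still exactly $A_0$. Since the max-flow value can only grow when the source set grows from $S_0$ to $S$, yet is bounded above by $|C_0|$ (because $C_0$ is an $(S,t)$-cut), the max $S$-$t$ flow equals $|C_0|$ and $C_0$ is a minimum $(S,t)$-cut with source-side $A_0$. The farthest $(S,t)$-min-cut has the largest source-side among all minimum $(S,t)$-cuts (equivalently, the union of these source-sides is itself the source-side of a minimum cut), so $A' \supseteq A_0$ and hence $B' \subseteq B_0$.

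If $s' \in B_0$, fix a max $S_0$-$t$ flow $f_0$ so that $B_0$ equals the back-reach of $t$ in $G_{f_0}$ by Lemma~\ref{lem:uniqueFMC}. The assumption $s' \in B_0$ supplies a path from $s'$ to $t$ in $G_{f_0}$, which is an augmenting path for $f_0$ seen as a flow from $S$; iteratively augmenting along augmenting paths from $S$ to $t$ in the current residual yields a sequence of flows $f_0 = g_0, g_1, \ldots, g_r = f'$ where $f'$ is a max $S$-$t$ flow. The main obstacle is a monotonicity claim: at each step, the back-reach of $t$ in $G_{g_{j+1}}$ is contained in the back-reach of $t$ in $G_{g_j}$. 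Granting this and applying Lemma~\ref{lem:uniqueFMC} to $f_0$ and $f'$ yields $B' \subseteq B_0$, closing the induction.

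To establish the monotonicity claim, fix the augmenting path $P^* = w_0 \to w_1 \to \cdots \to w_m = t$ used to produce $g_{j+1}$ from $g_j$; then $G_{g_{j+1}}$ agrees with $G_{g_j}$ on every edge outside $P^*$, while on $P^*$ the residual is flipped so that only $w_i \to w_{i-1}$ (rather than $w_{i-1} \to w_i$) is present in $G_{g_{j+1}}$. Take any path $R$ from $v$ to $t$ in $G_{g_{j+1}}$: if $R$ uses no flipped edge it already lies in $G_{g_j}$, and otherwise let $w_i \to w_{i-1}$ be the first flipped edge used by $R$. The prefix of $R$ from $v$ to $w_i$ uses only non-flipped edges and therefore lies in $G_{g_j}$, while the $P^*$-suffix $w_i \to w_{i+1} \to \cdots \to w_m = t$ lies in $G_{g_j}$ by the choice of $P^*$; their concatenation gives a walk from $v$ to $t$ in $G_{g_j}$, completing the step.
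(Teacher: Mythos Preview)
The paper does not give its own proof of this lemma; it is simply quoted from \cite{BCR16}. Your argument is correct and self-contained.

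The heart of your proof is the monotonicity claim that the set of vertices from which $t$ is reachable in the residual graph can only shrink along a sequence of augmentations, and your justification is sound. In the unit-capacity setting used throughout the paper, augmenting a flow $g_j$ along a residual path $P^* = w_0 \to \cdots \to w_m = t$ can add to the residual only edges of the form $w_i \to w_{i-1}$; hence any $v$--$t$ path $R$ in $G_{g_{j+1}}$ either already lies in $G_{g_j}$, or its prefix up to the first such new edge reaches some $w_i$ inside $G_{g_j}$, after which the suffix $P^*[w_i,t]$ (which lies in $G_{g_j}$ by construction) completes a $v$--$t$ walk in $G_{g_j}$. Combined with Lemma~\ref{lem:uniqueFMC}, this yields $B' \subseteq B_0$ in your Case~2, and your Case~1 is a clean application of the defining maximality property of the farthest min-cut.

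One remark: the induction on $|S|$ is not really needed. You can argue directly by taking any max $s$--$t$ flow $f$ (so that $B$ equals the back-reach of $t$ in $G_f$ by Lemma~\ref{lem:uniqueFMC}), viewing $f$ as a valid but possibly non-maximum $S$--$t$ flow since $s \in S$, and then augmenting to a max $S$--$t$ flow $f'$. Your monotonicity claim applied along this augmentation sequence immediately gives $B' \subseteq B$. In this streamlined version your Case~1 is absorbed into the situation where no augmenting path exists (so $f'=f$ and $B'=B$), and Case~2 is the generic augmentation step. Both presentations are correct; the direct one just avoids the bookkeeping of adding source vertices one at a time.
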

	
	\section{Farthest Min-cut of Shortest Path Subgraph}
		\subsection{Computing farthest min-cut of shortest path subgraph}	
		\label{sec:shortFarthest}
	In this section we give an algorithm to find farthest min-cut of the shortest path subgraph of a given graph. We are given a weighted directed graph $G$ with two vertices $s$ and $t$. The weight of each edge of $G$ is defined by a weight function $w:E(G)\to \mathbb{R}$. Let $dist_{G,w}(s,t)=d$. We denote the set of all $s-t$ paths of weight $d$ under the weight function $w$ by $\mathcal{P}_{d}$ and the corresponding underlying subgraph (just the union of all paths in $\mathcal{P}_{d}$) of $G$ by $G_{short}$, more specifically, $V(G_{short})=V(G)$ and $E(G_{short})=\{e\mid e \in P \text{ for some }P \in \mathcal{P}_d\}$.
\begin{definition}
For any graph $G$ and two vertices $s$ and $t$, for any source set $S \subseteq V(G_{short})$, {\em farthest min-cut of shortest path subgraph}, denoted as $FSMC(G,S,t)$ is defined by $FMC(G_{short},S,t)$.
\end{definition}	 
	 For any $(S,t)$-cut $C$ of $G_{short}$ we define the partition function by $ShortPartition(G,C)=Partition(G_{short},C)$.
	 
	 Now given any graph $G$ and two vertices $s$ and $t$, we want to generate the subgraph $G_{short}$. The procedure is as follows:
	 For each edge $(u,v) \in E(G)$ we include $(u,v)$ in a new edge set $E'$ if $dist_{G,w}(s,u)+w(u,v)+dist_{G,w}(v,t)=dist_{G,w}(s,t)$. Then output the graph $G'=(V(G),E')$.
	 
	 It is easy to observe that $G'=G_{short}$ because an edge $e=(u,v) \in P $ for some $P \in \mathcal{P}_d$ if and only if $dist_{G,w}(s,u)+w(u,v)+dist_{G,w}(v,t)=dist_{G,w}(s,t)$.
	 
	 We can implement the above procedure by first storing the values of $dist_{G,w}(s,u)$ for all $u \in V(G)$ and $dist_{G,w}(v,t)$ for all $v \in V(G)$ which can be done in time $O(mn)$~\cite{Bellman58, Ford56} where $|V(G)|=n$ and $|E(G)|=m$. Hence the time complexity to output the subgraph $G'=G_{short}$ will be $O(mn)$.
	 
	 Now we can simply apply well known Ford-Fulkerson algorithm~\cite{FF62} on the subgraph $G_{short}$ to find the $ShortMaxFlow(G,S,t)$ and $FSMC(G,S,t)$. The correctness of $FSMC(G,S,t)$ follows from Lemma~\ref{lem:uniqueFMC} applying on the subgraph $G_{short}$.

	\subsection{Disjoint shortest path lemma}
	Let us choose any $r \in \mathbb{N}$ and then consider the following: Set $S_1=\{s\}$ and for $i \in [r]$, define $C_i=FSMC(G,S_i,t)$, $(A_i,B_i)=ShortPartition(G,C_i)$ and $S_{i+1}=(A_i \cup Out(A_i)) \setminus \{t\}$. Let $E' \subseteq E(G)$ such that $E' = \{(u_1,v_1),\cdots,(u_r,v_r)\}$, where $ (u_i,v_i) \in C_i$. 
	
	Now let us introduce an auxiliary graph $G'=G+(s,v_1)+\cdots +(s,v_r)$ and set $w(s,v_i)=dist_{G,w}(s,v_i)$ for $i \in [r]$. Suppose $f$ be a max-flow from $S_{r+1}$ to $t$ in the shortest path subgraph of $G$ and $\mathcal{E}(t)$ be the set of incoming edges of $t$ having nonzero flow value assigned by $f$. Now consider a new graph $G^*=(G' \setminus In$-$Edge(t))+\mathcal{E}(t)$.
	
	\begin{lemma}
	     \label{lem:FSMCpath}
	     There will be at least $r+1$ disjoint paths in $G^*$ each of weight equal to $dist_{G,w}(s,t)$.
	     \end{lemma}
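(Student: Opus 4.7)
The plan is to exhibit $r+1$ edge-disjoint shortest $s$-$t$ paths in $G'_{short}$ (the shortest-path subgraph of $G'$) whose final edges all lie in $\mathcal{E}(t)$; since the incoming edges of $t$ in $G^*$ are precisely $\mathcal{E}(t)$ and the rest of $G^*$ contains $G'$, these paths survive in $G^*$ with weight exactly $d := dist_{G,w}(s,t)$. Before anything else, I would verify the structural identity $G'_{short} = G_{short} + (s,v_1) + \cdots + (s,v_r)$: each $(u_i,v_i) \in C_i \subseteq E(G_{short})$ places $v_i$ on some shortest $s$-$t$ path of $G$, so $dist_{G,w}(s,v_i) + dist_{G,w}(v_i,t) = d$, and hence the edge $(s,v_i)$ of weight $dist_{G,w}(s,v_i)$ combined with any shortest $v_i$-$t$ path of $G$ is a shortest $s$-$t$ path in $G'$ of weight $d$; by the triangle inequality no other $s$-$u$ distance drops, so the only new shortest-path edges are the $(s,v_i)$ themselves.

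For the flow lower bound I would iterate Lemma~\ref{lem:BCR1} on $G_{short}$ with source set $\{s\}$. By Lemma~\ref{lem:BCR3} applied in $G_{short}$ to the source set $S_i$, one has $B_i \subseteq B_1$, so every $v_i$ lies in $B_1$. Starting from $FMC(G_{short},\{s\},t) = C_1$ with far partition $B_1$, I would insert the edges one at a time, maintaining the invariant that after $(s,v_1),\ldots,(s,v_{j-1})$ have been added the farthest min-cut from $\{s\}$ is $C_1 \cup \{(s,v_1),\ldots,(s,v_{j-1})\}$ with far partition still $B_1$ (this is immediate since deleting these edges recovers $G_{short}\setminus C_1$). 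Because $v_j \in B_1$, Lemma~\ref{lem:BCR1} applies and the max-flow rises by one while the invariant is preserved. After $r$ insertions the max-flow from $s$ to $t$ in $G'_{short}$ equals $|C_1|+r \geq r+1$.

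For the routing step I would apply Lemma~\ref{lem:BCR2} to $G'_{short}$ with source set $S_{r+1}$ and flow $f$. Each added edge $(s,v_i)$ has both endpoints in $S_{r+1}$ (since $s \in A_i \subseteq S_{i+1}$ and $v_i \in Out(A_i) \subseteq S_{i+1}$), so these edges cannot raise the max-flow value out of $S_{r+1}$, and $f$ remains a max-flow from $S_{r+1}$ to $t$ in $G'_{short}$. Lemma~\ref{lem:BCR2} then produces a max-flow $f_{max}$ from $s$ to $t$ in $G'_{short}$ with $E(f_{max}) \subseteq E(A^*) \cup E(f)$, where $(A^*,B^*) = ShortPartition(G',FMC(G'_{short},S_{r+1},t))$ and $t \in B^*$. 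Any incoming edge of $t$ used by $f_{max}$ therefore cannot lie in $E(A^*)$ and must come from $E(f) \cap In$-$Edge(t) = \mathcal{E}(t)$. Decomposing $f_{max}$ yields at least $r+1$ edge-disjoint $s$-$t$ paths in $G'_{short}$, each of weight $d$ and ending in an edge of $\mathcal{E}(t)$; the same paths live in $G^*$, which completes the proof.

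The main obstacle I anticipate is keeping two distinct source-set perspectives consistent: Lemma~\ref{lem:BCR1} is iterated from the single source $\{s\}$ while the $v_i$'s themselves are defined via cuts from the growing sets $S_i$, and one must check at each insertion that the far partition remains $B_1$ so that the hypothesis $v_j \in B$ of Lemma~\ref{lem:BCR1} truly holds. Once this bookkeeping is nailed down, the combination with Lemma~\ref{lem:BCR2} to force the incoming edges of $t$ into $\mathcal{E}(t)$ is the natural conclusion.
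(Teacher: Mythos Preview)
Your proof is correct and follows essentially the same approach as the paper: identify $G'_{short}$ with $G_{short}$ plus the auxiliary edges, iterate Lemma~\ref{lem:BCR1} (with Lemma~\ref{lem:BCR3} guaranteeing each $v_j$ lies on the far side) to get max-flow $\ge r+1$ from $s$ in $G'_{short}$, then apply Lemma~\ref{lem:BCR2} with source set $S_{r+1}$ to force the incoming edges of $t$ into $\mathcal{E}(t)$. The only cosmetic difference is that the paper re-invokes Lemma~\ref{lem:BCR3} at each intermediate graph $G_i$, whereas you apply it once in $G_{short}$ to get $B_i\subseteq B_1$ and then maintain the invariant that the far partition stays $B_1$; both routes are equivalent.
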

	     
	     \begin{center}
	     \newcommand{\convexpath}[2]{
[   
    create hullnodes/.code={
        \global\edef\namelist{#1}
        \foreach [count=\counter] \nodename in \namelist {
            \global\edef\numberofnodes{\counter}
            \node at (\nodename) [draw=none,name=hullnode\counter] {};
        }
        \node at (hullnode\numberofnodes) [name=hullnode0,draw=none] {};
        \pgfmathtruncatemacro\lastnumber{\numberofnodes+1}
        \node at (hullnode1) [name=hullnode\lastnumber,draw=none] {};
    },
    create hullnodes
]
($(hullnode1)!#2!-90:(hullnode0)$)
\foreach [
    evaluate=\currentnode as \previousnode using \currentnode-1,
    evaluate=\currentnode as \nextnode using \currentnode+1
    ] \currentnode in {1,...,\numberofnodes} {
-- ($(hullnode\currentnode)!#2!-90:(hullnode\previousnode)$)
  let \p1 = ($(hullnode\currentnode)!#2!-90:(hullnode\previousnode) - (hullnode\currentnode)$),
    \n1 = {atan2(\x1,\y1)},
    \p2 = ($(hullnode\currentnode)!#2!90:(hullnode\nextnode) - (hullnode\currentnode)$),
    \n2 = {atan2(\x2,\y2)},
    \n{delta} = {-Mod(\n1-\n2,360)}
  in 
    {arc [start angle=\n1, delta angle=\n{delta}, radius=#2]}
}
-- cycle
}

\begin{figure}
\centering

\begin{tikzpicture}[thick,every node/.style={draw,circle,fill=white}]

\begin{scope}[level distance=1cm]
\node[draw,circle](1){s}
 child{
	node[draw,circle](2){$v_1$}
	child{ 
	    node[draw,circle,left=.46cm](11){}
		child{ 
	    node[draw,circle](12){}
	} 
	child{ 
	    node[draw,circle](13){}
	}
	} 
	child{ 
	    node[draw,circle](3){}
	    child{
		node[draw,circle](5){}
		child{ 
	    node[draw,circle](7){}
	    child{ 
	    node[draw,circle](14){}
	    child{ 
	    node[draw,circle](15){t}
	}
	}
	} 
	child{ 
	    node[draw,circle](8){}
	} 
	    }
	} 
	child{ 
	    node[draw,circle](4){$u_2$}
	    child{
		node[draw,circle,right=.46cm](6){$v_2$}
		child{ 
	    node[draw,circle](9){}
	} 
	child{ 
	    node[draw,circle](10){}
	}
	    }
	} 
};
\draw[blue,->,ultra thick] (1)--(2);
\draw[white,->] (2)--(11);
\draw[dashed,->] (2)--(11);
\draw[->,ultra thick] (2)--(3);
\draw[->,ultra thick] (2)--(4);
\draw[red,->, ultra thick] (3)--(5);
\draw[red,->, ultra thick] (4)--(6);
\draw[->,ultra thick] (5)--(7);
\draw[white,->] (5)--(8);
\draw[dashed,->] (5)--(8);
\draw[->,ultra thick] (6)--(9);
\draw[->,ultra thick] (6)--(10);
\draw[white,->] (11)--(12);
\draw[dashed,->] (11)--(12);
\draw[white,->] (11)--(13);
\draw[dashed,->] (11)--(13);
\draw[white,->] (13)--(7);
\draw[dashed,->] (13)--(7);
\draw[->,ultra thick] (7)--(14);
\draw[green,->,ultra thick] (14)--(15);
\draw[white,->] (12)--(15);
\draw[dashed,->] (12)--(15);
\draw[green,->,ultra thick] (10)--(15);
\draw[green,->,ultra thick] (9)--(15);
\draw[white,->] (8)--(15);
\draw[green,dashed,->] (8)--(15);
\draw[->,>=stealth,brown,ultra thick] (1)[out=30,in=30] to (2);
\draw[->,>=stealth,brown,ultra thick] (1)[out=30,in=30] to (6);
\begin{pgfonlayer}{background}
%\fill[fill=yellow]  (2)--(3)--(4);
%\fill[green!30!cyan!70,opacity=0.27] \convexpath{1,2,4,8,12,18,22,18,12,8,4,2,1}{9pt};
%\fill[gray,opacity=0.27] \convexpath{3,6,9,13,9,6,3}{9pt};
%\fill[gray,opacity=0.27] \convexpath{15,19,15}{9pt};
%\fill[yellow,opacity=0.5] \convexpath{1,2,3,5,1}{9pt};
\fill[yellow,opacity=0.2] \convexpath{1,2,3,5,7,14,15,10,1}{0.5pt};
%\fill[green!30!cyan!70,opacity=0.3] \convexpath{1,2,11,13,7,14,15,12,11,1}{0.5pt};
%\fill[yellow,opacity=0.5] \convexpath{1,2,4,6,9,15,1}{5pt};
%\fill[yellow,opacity=0.5] \convexpath{1,2,4,6,10,15,1}{5pt};
\end{pgfonlayer}
%\draw[->,>=stealth',red,thick, line width = 1.05pt] (19) to[out=110,in=190] (15);
%\draw[->,>=stealth',red,thick, line width = 1.05pt] (19) to[out=130,in=180] (7);
\end{scope}

%\draw[line width=0.2cm, -triangle 45,fill=black]  (4.6,-3.7) -- (6.3,-3.7);

\end{tikzpicture}

\caption{Suppose the yellow colored region represents $G_{short}$. The edges of $C_{1}$ and $C_{2}$ are colored with blue and red respectively. Brown colored edges are the edges added in the auxiliary graph $G'$ and the edges colored with green constitute the set $\mathcal{E}(t)$. Paths represented by the thick edges are the $3$ edge disjoint paths in $G'$ when $r=2$.}
   \label{fig:auxiliary}
   \end{figure}
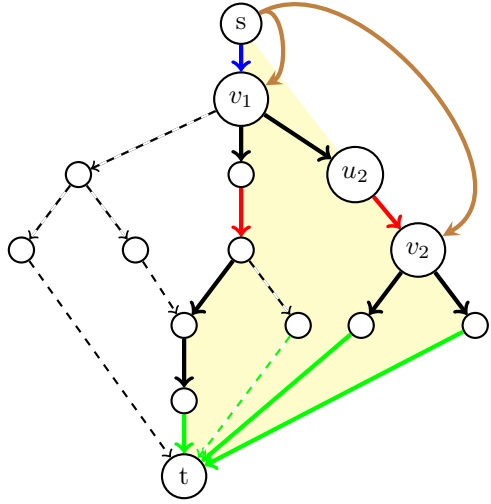
	     \end{center}
	     Note that a similar claim was shown in~\cite{BCR16}. However, our claim is slightly more general because we consider an edge set $E'$ where the edges belong to $E'$ may not lie on a single $s-t$ path in $G$ and also we comment on the weight of the disjoint paths. Both these requirements are crucial for the proof in Section~\ref{sec:consUB}. Fortunately, the proof in~\cite{BCR16} does not rely on the fact that those edges $(u_i,v_i)$'s are part of a single $s-t$ path. For the sake of completeness we include the proof here.
	     
	     Let us denote the shortest path subgraph (union of all minimum weight $s-t$ paths) of $G$ and $G'$ by $G_{short}$ and $G'_{short}$ respectively. Now let us introduce a series of subgraphs $G_i$'s as follows: 
	     $$G_1=G_{short}\text{, }G_i=G_{short}+(s,v_1)+\cdots+(s,v_{i-1}) \text{ for } 2 \le i \le r+1.$$ Note that $G_{i+1}=G_i + (s,v_i)$. Since $w(s,v_i)=dist_{G,w}(s,v_i)$, the edge $(s,v_i)$ must belong to $G'_{short}$. This is because $(u_i,v_i)$ already lie on some shortest $s-t$ path, say $P$ in $G$ which is a subgraph of $G'$. Then $(s,v_i)\circ P[v_i,t]$ will be a minimum weight $s-t$ path. Hence $G_{r+1}=G'_{short}$. Now let us first prove the following.
	\begin{lemma}
	\label{lem:shortSubgraph}
	For any two vertices $u,v \in V(G_{short})$, any $u-v$ path in $G_{short}$ has weight $dist_{G,w}(u,v)$.
	\end{lemma}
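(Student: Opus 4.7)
The plan is to use the explicit characterization of edges of $G_{short}$ established in Section~\ref{sec:shortFarthest}: an edge $(x,y) \in E(G)$ lies in $E(G_{short})$ if and only if $dist_{G,w}(s,x) + w(x,y) + dist_{G,w}(y,t) = dist_{G,w}(s,t)$. From this I will first derive a ``tight'' identity along every edge of $G_{short}$, then use telescoping together with the triangle inequality to pin down the weight of any path in $G_{short}$ between two vertices of $G_{short}$.

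The first step is to show that for every edge $(x,y) \in E(G_{short})$ one has $dist_{G,w}(s,y) = dist_{G,w}(s,x) + w(x,y)$. One direction is the triangle inequality $dist_{G,w}(s,y) \le dist_{G,w}(s,x) + w(x,y)$. For the other direction, rearrange the characterization as $dist_{G,w}(s,x) + w(x,y) = dist_{G,w}(s,t) - dist_{G,w}(y,t)$, and combine with $dist_{G,w}(s,t) \le dist_{G,w}(s,y) + dist_{G,w}(y,t)$. Given this local identity, the second step is to take any $u$--$v$ path $P = (u = x_0, x_1, \ldots, x_\ell = v)$ in $G_{short}$, apply the identity to each edge $(x_{i-1}, x_i)$, and telescope:
$$w(P) \;=\; \sum_{i=1}^{\ell} w(x_{i-1},x_i) \;=\; \sum_{i=1}^{\ell}\bigl(dist_{G,w}(s,x_i) - dist_{G,w}(s,x_{i-1})\bigr) \;=\; dist_{G,w}(s,v) - dist_{G,w}(s,u).$$
In particular the weight of $P$ is independent of $P$.

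The final step is to identify this common value with $dist_{G,w}(u,v)$. On the one hand, by triangle inequality $dist_{G,w}(s,v) - dist_{G,w}(s,u) \le dist_{G,w}(u,v)$, so $w(P) \le dist_{G,w}(u,v)$. On the other hand, $P$ itself is a $u$--$v$ path in $G$, so $w(P) \ge dist_{G,w}(u,v)$. The two bounds give equality. I do not anticipate a real obstacle: the only mild subtlety is that edge weights may be negative, so the arguments rely on triangle inequality for the shortest-distance function (well-defined under the implicit absence of negative cycles) rather than any monotonicity of $dist_{G,w}(s,\cdot)$ along edges.
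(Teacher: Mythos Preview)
Your proof is correct and takes a genuinely different route from the paper's. The paper argues combinatorially: it picks two consecutive edges of $P$ lying on distinct shortest $s$--$t$ paths $P_1,P_2$ and splices them at their common vertex to get a new shortest $s$--$t$ path containing both edges, iterating until a single shortest $s$--$t$ path $Q$ contains every edge of $P$; then $P$ coincides with the subpath $Q[u,v]$ and optimal substructure gives the conclusion. Your approach instead exploits the edge characterization of $G_{short}$ to show each edge is \emph{tight} for the potential $\phi(\cdot)=dist_{G,w}(s,\cdot)$, namely $w(x,y)=\phi(y)-\phi(x)$, and then telescopes. This is shorter and more robust: it never needs to manipulate whole paths, handles negative edge weights transparently via the triangle inequality for $dist_{G,w}$, and yields the stronger intermediate fact that every $u$--$v$ path in $G_{short}$ has the \emph{same} weight $\phi(v)-\phi(u)$ before you even invoke that this equals $dist_{G,w}(u,v)$. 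The paper's merging argument, by contrast, is self-contained in the sense that it does not cite the explicit edge test from Section~\ref{sec:shortFarthest}, but it is slightly more delicate at the end (one must observe that once all edges of $P$ lie on a single simple path $Q$ they must be traversed in $Q$'s order, so $P=Q[u,v]$).
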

	\begin{proof}
	For the sake of contradiction, suppose $P$ be a $u-v$ path in $G_{short}$ that has weight strictly greater than $dist_{G,w}(v,t)$. By the definition of $G_{short}$, for all $e \in P$, $e$ must lie in some shortest $s-t$ path in $G$. If all the edges in $P$ lies in the same shortest $s-t$ path then the claim is trivial. Otherwise there must exists two consecutive edges $e_1=(x,y)$ and $e_2=(y,z)$ such that they lie in two different shortest $s-t$ paths, say $P_1$ and $P_2$ respectively. Observe that $w(P_1[s,y])=w(P_2[s,y])$, otherwise either of $P_1$ or $P_2$ is not a shortest $s-t$ path. Now consider the path $P_3=P_1[s,y]\circ (y,z) \circ P_2[z,t]$. By construction $w(P_3)=w(P_1[s,y])+ w(y,z) + w(P_2[z,t])=w(P_2)$ as $w(P_1[s,y])=w(P_2[s,y])$. So we get another shortest $s-t$ path such that both $e_1$ and $e_2$ lie in it. We can continue this process to get a new shortest $s-t$ path such that all $e \in P$ lie in it and this concludes the proof.
	\end{proof}
	     
	     \begin{claim}
	     \label{clm:auxCut}
	     $C_i=FMC(G_i,S_i,t)$.
	     \end{claim}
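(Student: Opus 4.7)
My plan is to verify the three defining properties of a farthest min-cut: that $C_i$ is an $(S_i,t)$-cut in $G_i$, that it has minimum size, and that the associated partition maximizes the source side.

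The crucial preliminary fact I would establish is that $v_j \in A_i$ for every $j < i$ (assuming $v_j \ne t$; the degenerate case can be treated separately). This follows from monotonicity of the source sets: since $S_\ell \subseteq A_\ell \setminus \{t\} \subseteq A_\ell \cup Out(A_\ell) \setminus \{t\} = S_{\ell+1}$, we have $S_1 \subseteq S_2 \subseteq \cdots$. As $v_j \in Out(A_j) \subseteq S_{j+1} \subseteq S_i$, and $S_i \subseteq A_i$ by definition of $A_i$, this yields $v_j \in A_i$. Hence every edge added in passing from $G_{short}$ to $G_i$ is an $A_i$-to-$A_i$ edge.

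Given this, property (a) is immediate: in $G_i \setminus C_i$, the added edges stay inside $A_i$, so the set of vertices reachable from $S_i$ is still $A_i$, and $t \in B_i$ remains unreachable. For property (b), I would take any max-flow $f$ from $S_i$ to $t$ in $G_{short}$ (of value $|C_i|$) and show that $f$ is still a max-flow in $G_i$. Suppose for contradiction that $(G_i)_f$ contained an augmenting $S_i$-to-$t$ path $P$. Since $f$ is maximum in $G_{short}$, $P$ must use at least one added edge $(s,v_j)$; consider the last such edge along $P$. The suffix of $P$ from $v_j$ to $t$ then lies entirely in $(G_{short})_f$, which contradicts $v_j \in A_i$ (by the characterization of the farthest min-cut via Lemma~\ref{lem:uniqueFMC}, vertices in $A_i$ cannot reach $t$ in $(G_{short})_f$). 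Thus the max-flow from $S_i$ to $t$ in $G_i$ equals $|C_i|$, so $C_i$ is a min-cut.

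For property (c), I would apply Lemma~\ref{lem:uniqueFMC} directly to $G_i$ with the same flow $f$: the farthest min-cut corresponds to the set $B'$ of vertices reaching $t$ in $(G_i)_f$. Clearly $B_i \subseteq B'$. For the reverse inclusion, an identical ``last added edge'' argument shows that any path from $x \in V \setminus B_i$ to $t$ in $(G_i)_f$ would eventually reduce to a path from some $v_{j^*} \in A_i$ to $t$ inside $(G_{short})_f$, which is impossible. Hence $B' = B_i$, $A' = A_i$, and the farthest min-cut in $G_i$ is the set of $A_i$-to-$B_i$ edges in $G_i$; since the added edges stay within $A_i$, this set equals $C_i$. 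The main obstacle, which the ``last added edge along $P$'' trick sidesteps, is that augmenting paths in $(G_i)_f$ can in principle revisit $s$ and use several of the added edges $(s,v_j)$; the argument works because we only need the suffix after the final visit to $s$ to lie in $(G_{short})_f$.
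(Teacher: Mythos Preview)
Your argument is correct and rests on the same key observation as the paper: you prove $v_j \in S_i$ for every $j<i$ (en route to $v_j\in A_i$), which is exactly what the paper uses. The paper then concludes in one line---since both endpoints of every added edge $(s,v_j)$ lie in the source set $S_i$, the family of $(S_i,t)$-cuts (and hence the farthest min-cut) is unchanged in passing from $G_{short}$ to $G_i$---whereas you weaken to $v_j\in A_i$ and verify cut, minimality, and farthestness separately via residual-graph arguments; this is correct but a longer route to the same conclusion.
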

	     \begin{proof}
	     We know that $C_i=FMC(G_{short},S_i,t)$. Observe that for each $j < i$, we add $Out(A_j) \setminus \{t\}$ to $S_{j+1}$. Thus for each edge $(s,v_j)$, $j < i$, both the endpoints lie inside the source set $S_i$. Hence $C_i$ is also equal to the $FMC(G_i,S_i,t)$.
	     \end{proof}
	     \begin{claim}
	     \label{clm:auxFlow}
	     The value of a max-flow from $s$ to $t$ in $G'_{short}$ is at least $r+1$.
	     \end{claim}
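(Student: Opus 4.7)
The plan is to establish, by induction on $i$, the slightly stronger statement that for every $1\le i\le r+1$ the value of a max-flow from $\{s\}$ to $t$ in $G_i$ is at least $i$. Since $G_{r+1}=G'_{short}$, the case $i=r+1$ is exactly the claim. The base case $i=1$ is immediate: $G_1=G_{short}$ contains at least one shortest $s$-$t$ path, so the max-flow value is at least one.

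For the inductive step, I would argue that adding the edge $(s,v_i)$ to $G_i$ strictly increases the max-flow from $s$ to $t$ by exactly one. This is precisely the conclusion of Lemma~\ref{lem:BCR1}, provided $v_i$ lies on the sink side of the farthest $(\{s\},t)$-min-cut of $G_i$. By Claim~\ref{clm:auxCut} we have $C_i=FMC(G_i,S_i,t)$, and since $(u_i,v_i)\in C_i$ the induced partition $(A_i,B_i)$ of $V(G_i)$ forces $v_i\in B_i$. To pass from the larger source set $S_i$ to the singleton $\{s\}\subseteq S_i$, I would invoke Lemma~\ref{lem:BCR3} in $G_i$: writing $(\tilde A_i,\tilde B_i)$ for the partition induced by $FMC(G_i,\{s\},t)$, the lemma guarantees $B_i\subseteq\tilde B_i$, and therefore $v_i\in\tilde B_i$. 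Applying Lemma~\ref{lem:BCR1} to the edge $(s,v_i)\in\{s\}\times\tilde B_i$ now gives the desired unit increase, and the induction closes.

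The one subtle point, and the reason Claim~\ref{clm:auxCut} is in the toolkit, is that $C_i$ is originally defined as a farthest min-cut in $G_{short}$, whereas Lemmas~\ref{lem:BCR1} and~\ref{lem:BCR3} must be applied inside $G_i$ for the counts to line up; Claim~\ref{clm:auxCut} exactly certifies that $C_i$ also serves as a farthest $(S_i,t)$-min-cut in $G_i$. Once that identification is made, each inductive step is a one-line application of the two lemmas, no further case analysis is required, and the bound on the max-flow follows by summing the unit increases over the $r$ added edges.
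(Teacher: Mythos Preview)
Your proposal is correct and follows essentially the same approach as the paper's own proof: induction on $i$ to show the $s$--$t$ max-flow in $G_i$ is at least $i$, using Claim~\ref{clm:auxCut} to identify $C_i$ with $FMC(G_i,S_i,t)$, then Lemma~\ref{lem:BCR3} to place $v_i$ on the sink side of the $(\{s\},t)$-farthest min-cut of $G_i$, and finally Lemma~\ref{lem:BCR1} for the unit increase. If anything, you are slightly more explicit than the paper about why Claim~\ref{clm:auxCut} is needed to transfer the cut from $G_{short}$ to $G_i$.
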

	     \begin{proof}
	     We use induction to show that the value of a max-flow from $s$ to $t$ in $G_i$ is at least $i$ for $i \in [r+1]$. The base case $i=1$ is trivial because $G_1=G_{short}$ and there is a $s-t$ path in $G_{short}$. Now for the induction argument, let us first take $(A,B)=ShortPartition(G,FMC(G_i,s,t))$. Then by applying Lemma~\ref{lem:BCR3} on the graph $G_{i}$, we say that $B_i \subseteq B$. Hence by Lemma~\ref{lem:BCR1}, we can argue that the value of a max-flow from $s$ to $t$ in $G_{i+1}$ is at least one more than that in $G_{i}$ which in term is at least $i$ by the induction argument, i.e., the value of a max-flow from $s$ to $t$ in $G_{i+1}$ is at least $i+1$.
	     \end{proof}
	     
	     Now consider a new subgraph $G^*_{short}=(G_{r+1} \setminus In$-$Edge(t))+\mathcal{E}(t)$. Note that $G^*_{short}$ is a subgraph of $G^*$.
	\begin{proof}[Proof of Lemma~\ref{lem:FSMCpath}]
	     As $f$ be a max-flow from $S_{r+1}$ to $t$ in $G_{short}$ and both the endpoints of the edge $(s,v_i)$, for any $i \in [r]$ are inside $S_{r+1}$, the flow $f$ is also a max-flow from $S_{r+1}$ to $t$ in $G_{r+1}=G'_{short}$. Now by applying Lemma~\ref{lem:BCR2} on the graph $G'_{short}$, we can get another max-flow $f_{max}$ from $s$ to $t$ such that $E(f_{max}) \subseteq E(A_{r+1}) \cup E(f)$. As $f_{max}$ terminates $t$ using edges from $\mathcal{E}(t)$, so it is a valid flow also in $G^*_{short}$. So the value of a max-flow from $s$ to $t$ in $G^*_{short}$ is also at least $r+1$. Now as by Lemma~\ref{lem:shortSubgraph} every $s-t$ path in $G^*_{short}$ is of weight equal to $dist_{G,w}(s,t)$ and $G^*_{short}$ is a subgraph of $G^*$, claim of the lemma follows.
	     
	     \end{proof}

	  \section{Construction of $k$-WTSS and Locality Lemma}
	\label{sec:localLem}
	%form paper of Baswana et al. '16 but with proof as sake of completeness
	Let us first recall the problem. We are given a graph $G$ along with a weight function $w:E(G) \to \mathbb{Z}$ and a source vertex $s$. Now suppose for every $e \in E(G)$, $w(e)$ is increased by a weight increment function $I:E(G) \to \mathbb{N}$ such that total increase in weight is bounded by $k$, i.e., $\sum_{e \in E(G)}I(e) \le k$ and we denote the new weight function (after increase in weight) by $w'$ where $w'(e)=w(e)+I(e)$. The problem is to find a subgraph $H$ such that for any vertex $t \in V(G)$ in $H$ there always exists an $s-t$ path of weight $dist_{G,w'}(s,t)$. We call this subgraph $H$ a $k$-WTSS. Now if we just want the requirement of existence of path in $H$ to be true for a fixed vertex $t$ instead of all vertices, then we call such a subgraph $k$-WTSS($t$). In this section we reduce the problem of finding $k$-WTSS to the problem of finding $k$-WTSS($t$) for any fixed vertex $t \in V(G)$. The following lemma, a variant of which also appears in~\cite{BCR16}, serves our purpose.
	\begin{lemma}[Locality Lemma]
	\label{lem:localLem}
Let there be an algorithm $\mathcal{A}$ that given a graph $G$ and a vertex $t\in V(G)$, generates a subgraph $H_t$ of $G$ such that:
\begin{itemize}
\item $H_t$ is a $k$-WTSS($t$); and
\item in-degree of $t$ in $H_t$ is bounded by a constant $c_k$.
\end{itemize} 
Then one can generate a $k$-WTSS of $G$ such that it has only $c_k\cdot n$ edges.
\end{lemma}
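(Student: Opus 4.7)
The plan is to invoke $\mathcal{A}$ once at every possible destination and retain only the incoming edges of that destination. Formally, for each $t\in V(G)$ set $H_t=\mathcal{A}(G,t)$ and let $E_t$ denote the set of edges of $H_t$ incoming to $t$; define $H=(V(G),\bigcup_{t\in V(G)}E_t)$. The edge-count bound is then immediate: each vertex $t$ contributes at most $c_k$ edges to $H$, so $|E(H)|\le c_k\cdot n$. All the work is therefore in verifying that this $H$ is actually a $k$-WTSS.

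Concretely, I need to show that for every increment function $I$ with $\sum_{e\in E(G)}I(e)\le k$ and $w'(e)=w(e)+I(e)$, we have $dist_{H,w'}(s,v)=dist_{G,w'}(s,v)$ for every $v\in V(G)$. The inequality $dist_{H,w'}(s,v)\ge dist_{G,w'}(s,v)$ is free because $H\subseteq G$, so only $\le$ requires argument. I would proceed by strong induction on the quantity $L(v)$, defined as the minimum number of edges among all shortest $s$-to-$v$ paths in $G$ under $w'$; this is a well-defined nonnegative integer under the standard assumption that $G$ with weight $w'$ has no negative cycle reachable from $s$. The base case $L(s)=0$ is trivial.

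For the inductive step, fix $v$ and assume the claim for every $u$ with $L(u)<L(v)$. Because $H_v$ is a $k$-WTSS($v$), it contains an $s$-$v$ path $Q$ of $w'$-weight exactly $dist_{G,w'}(s,v)$. Let $(u,v)$ be the last edge of $Q$; by construction $(u,v)\in E_v\subseteq E(H)$. A standard subpath property forces $Q[s,u]$ to be a shortest $s$-$u$ path in $G$ under $w'$, otherwise one could substitute a lighter prefix and contradict the minimality of $w'(Q)$. Consequently $L(u)\le L(v)-1$, so the inductive hypothesis supplies a path in $H$ from $s$ to $u$ of weight $dist_{G,w'}(s,u)$; appending the edge $(u,v)$ gives a path in $H$ from $s$ to $v$ of weight $dist_{G,w'}(s,u)+w'(u,v)=dist_{G,w'}(s,v)$, as required.

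The main obstacle is picking the right induction parameter. Naively inducting on $dist_{G,w'}(s,v)$ is not well-founded once negative edge weights are admitted, and inducting on, say, vertex count can produce circular dependencies. Passing to $L(v)$ cures both issues because the last-edge decomposition strictly decreases $L$, and this is precisely where the (mild, standard) no-negative-cycle assumption is used. The remaining details are mechanical: the hypothesis on $\mathcal{A}$ applied at destination $v$ with respect to the \emph{same} fixed $w'$ used throughout the induction means there is no circularity, and the uniform in-degree bound $c_k$ translates verbatim into the global $c_k\cdot n$ edge bound for $H$.
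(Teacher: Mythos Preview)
Your construction --- calling $\mathcal{A}$ on the \emph{original} graph $G$ for every destination and keeping only the resulting in-edges --- is not correct in general, and the gap shows up precisely at the step where you claim $L(u)\le L(v)-1$. The subpath property only gives you that $Q[s,u]$ is a shortest $s$--$u$ path, hence $L(u)\le |Q|-1$; but $|Q|$ may well exceed $L(v)$, because $Q$ is a shortest path taken inside $H_v$, not a minimum-edge-count shortest path in $G$.

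Here is a concrete failure of the whole construction, not merely of the induction. Take $V=\{s,a_1,a_2,a_3\}$, edges $(s,a_i)$ for $i=1,2,3$ and $(a_i,a_j)$ for all $i\ne j$, every weight $0$, and $k=1$. For each $i$ the subgraph consisting of the two edge-disjoint paths $s\to a_{i+1}\to a_i$ and $s\to a_{i+2}\to a_i$ (indices mod $3$) is a valid $1$-WTSS($a_i$): any single unit increment hits at most one of the two disjoint paths, so the distance $0$ to $a_i$ is preserved; the in-degree of $a_i$ is $2$. If $\mathcal{A}$ returns exactly these subgraphs, then your union $H=\bigcup_i E_{a_i}$ contains only the edges $(a_i,a_j)$ and none of the edges $(s,a_i)$; every $a_i$ is unreachable from $s$ in $H$, so $H$ is not even a $0$-WTSS. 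Correspondingly the induction on $L$ stalls at level $1$: $L(a_1)=1$ via the direct edge $(s,a_1)$, yet every shortest $s$--$a_1$ path in $H_{a_1}$ has penultimate vertex $a_2$ or $a_3$, both with $L$-value $1$.

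The paper avoids this by a \emph{sequential} construction: it orders the vertices as $v_1,\ldots,v_n$, sets $G_0=G$, and in round $i$ applies $\mathcal{A}$ to the \emph{current} graph $G_{i-1}$ (already sparsified at $v_1,\ldots,v_{i-1}$), not to $G$, then restricts the in-edges of $v_i$ to those of $H_i=\mathcal{A}(G_{i-1},v_i)$. The induction is on the round number. For a shortest $s$--$t$ path $P$ in $G_{i-1}$ that passes through $v_i$, the suffix $P[v_i,t]$ survives in $G_i$ verbatim, while the prefix is replaced by a shortest $s$--$v_i$ path supplied by $H_i$; this replacement lies entirely in $G_i$ because only the in-edges of $v_i$ changed and those were taken from $H_i$. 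Calling $\mathcal{A}$ on $G_{i-1}$ rather than on $G$ is exactly what guarantees the replacement prefix lives in the already-sparsified graph --- this is the missing idea in your argument.
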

 \begin{proof}
  
Given the algorithm $\mathcal{A}$ we design another algorithm $\mathcal{A}'$ which generates $k$-WTSS in $n$ rounds. We consider some arbitrary ordering among the vertices as $(v_1, v_2,\cdots,v_n)$. The description of algorithm $\mathcal{A}'$ is as follows:
In the first round we start with the graph $G_0=G$ which is trivially a $k$-WTSS and generate another graph $G_1$ such that in $G_1$ in-degree of $v_1$ is bounded by $c_k$ and $G_1$ is a  $k$-WTSS. Similarly in the $i$-th round, a graph $G_i$ is generated such that in-degree of every vertex $v_j$ for $j\le i$ is bounded by $c_k$ and $G_i$ is $k$-WTSS.
 
 Now we describe round $i$ in details. We start with a graph $G_{i-1}$ which we know is a $k$-WTSS and in-degree of any vertex $v_j$ for $j < i$ is bounded by $c_k$. Let $H_i$ be the $k$-WTSS($v_i$) output by algorithm $\mathcal{A}$. We define $G_i$ to be a subgraph of $G_{i-1}$ where the incoming edges of $v_i$ is restricted to that present in $H_i$. Hence this process assures that in $G_i$ the in-degree of vertices $v_1, \cdots ,v_i$ are bounded by $c_k$.
 
 Now we need to prove that for each $i \in [n]$, $G_i$ is also a $k$-WTSS and we do that inductively. The base case is true as $G_0=G$ is trivially $k$-WTSS. Next assuming $G_{i-1}$ is a $ k$-WTSS we prove the same for $G_i$. Now consider any increment function $I$. Let $F$ be the set of edges for which $I$ has non zero value in $G_{i-1}$ i.e., $F=\{ e\in E(G_{i-1}) \mid I(e)>0 \}$. Suppose the new weight function is $w'$ defined by $w'(e)=w(e)+I(e)$. Now consider any vertex $t$. Let $dist_{G,w'}(s,t)=d'$ and hence by the induction argument, $dist_{G_{i-1},w'}(s,t)=d'$. Suppose the corresponding path is $P$ in $G_{i-1}$. We need to show there exist an $s-t$ path $R$ of weight $d'$ in $G_i$ such that $w'(R)=d'$. If path $P$ does not pass through $v_i$ set $R=P$. Otherwise consider the segments $P[s,v_i]$ and $P[v_i,t]$. We have that $w'(P[s,v_i])+w'(P[v_i,t])=d'$. Observe that $w'(P[s,v_i])=dist_{G_{i-1},w'}(s,v_i)$ otherwise $P$ cannot be a shortest $s-t$ path under $w'$. Now as $G_i$ differs from $G_{i-1}$ only at the incoming edges of $v_i$, path segment $P[v_i,t]$ remains intact. As $H_i$ is a $k$-WTSS($v_i$) for $G_{i-1}$, there exist an $s-v_i$ path, say $R'$ in $H_i$ of weight $dist_{G_{i-1},w'}(s,v_i)$. By the construction $G_i$ contains $H_i$ and hence $R'\circ P[v_i,t]$ is a walk from $s$ to $t$ of weight $w'(R')+w'(P[v_i,t])=dist_{G_{i-1},w'}(s,v_i)+w'(P[v_i,t])=d'$ in $G_i$. Removing loops we get our desired path $R$ of weight at most $d'$. Now as $G_i$ is a subgraph of $G_{i-1}$, we can conclude that $w'(R)=d'$. Hence $G_i$ is a $k$-WTSS.
\end{proof}

	 	\section{Construction of $k$-WTSS($t$)}
	\label{sec:consUB}
	In this section we provide an algorithm to compute a $k$-WTSS($t$) for any fixed vertex $t \in V(G)$ where source vertex is $s$. Without loss of generality let us first assume the following.
	\begin{assumption}
	\label{asm:degBound}
	The out degree of source vertex $s$ is $1$ and the out degree of all other vertices is bounded by $2$.
	\end{assumption}
	For any graph $G$ if $|Out(s)|>1$ then to satisfy our previous assumption, we can simply add a new vertex $s_0$ and add an edge $(s_0,s)$ and set $w(s_0,s)=0$. Then make this new vertex $s_0$ as our new source. For the justification on the bound on out degree of other vertices, we refer the readers to Appendix~\ref{app:assumption}.
	
	\subsection{Description of the algorithm}
	\label{sec:mainAlgo}
		Before describing the algorithm let us introduce some notations that we will use later heavily. Consider any $k$-dimensional vector $\sigma \in \{-1,0,1,\cdots,k\}^{k}$ such that if $\sigma(i)=-1$ then for all $i' >i$, $\sigma(i')=-1$ where $i,i' \in [k]$ and if $\sigma(i) \ne -1$ then for all $i' < i$, $\sigma(i') \ne -1$. We use these vectors to efficiently enumerate all the subgraphs of $G$ for which we want to calculate farthest min-cuts. Now for any such vector $\sigma$ and $r \in [k]$, we recursively define the subgraph $G_{\sigma}$, set of source vertices $S_{\sigma,r}$ and edge set $C_{\sigma,r}$ as follows: if $\sigma=(-1,-1,\cdots,-1)$, $G_{\sigma}$ is the union of all $s-t$ paths in $G$, starting with $S_{\sigma,1}=\{s\}$, for any $r \in [k]$ define $C_{\sigma,r}=FSMC(G_{\sigma},S_{\sigma,r},t)$, $S_{\sigma,r+1}=(A \cup Out(A))\setminus \{t\}$ where $(A,B)=ShortPartition(G_\sigma, C_{\sigma,r})$. For $\sigma \ne (-1,\cdots,-1)$, $G_{\sigma}$ is the union of all $s-t$ paths in $G_{\sigma'}\setminus  C_{\sigma',\sigma(i)+1}$, where $i=\max\{i'\mid\sigma(i')\ne -1\}$ and
	$$\sigma'(i')=\begin{cases}\sigma(i')\quad&\text{if $i' < i$}\\
	-1&\text{otherwise}\end{cases}$$
	Now starting with $S_{\sigma,1}=\{s\}$, if there exists a $s-t$ path of weight $d+i$ then for any $r \in [k]$ define $C_{\sigma,r}=FSMC(G_{\sigma},S_{\sigma,r},t)$, $S_{\sigma,r+1}=(A \cup Out(A))\setminus \{t\}$ where $(A,B)=ShortPartition(G_\sigma, C_{\sigma,r})$; else set $C_{\sigma,r}=\phi$. We refer the reader to Figure~\ref{fig:example} for the better understanding about the graph $G_{\sigma}$. %Reader may note that for some $\sigma$, $G_{\sigma}$ can be empty. However such $\sigma$ will never arise in our algorithm as well as in the proof. 
	
	\begin{center}
	\newcommand{\convexpath}[2]{
[   
    create hullnodes/.code={
        \global\edef\namelist{#1}
        \foreach [count=\counter] \nodename in \namelist {
            \global\edef\numberofnodes{\counter}
            \node at (\nodename) [draw=none,name=hullnode\counter] {};
        }
        \node at (hullnode\numberofnodes) [name=hullnode0,draw=none] {};
        \pgfmathtruncatemacro\lastnumber{\numberofnodes+1}
        \node at (hullnode1) [name=hullnode\lastnumber,draw=none] {};
    },
    create hullnodes
]
($(hullnode1)!#2!-90:(hullnode0)$)
\foreach [
    evaluate=\currentnode as \previousnode using \currentnode-1,
    evaluate=\currentnode as \nextnode using \currentnode+1
    ] \currentnode in {1,...,\numberofnodes} {
-- ($(hullnode\currentnode)!#2!-90:(hullnode\previousnode)$)
  let \p1 = ($(hullnode\currentnode)!#2!-90:(hullnode\previousnode) - (hullnode\currentnode)$),
    \n1 = {atan2(\x1,\y1)},
    \p2 = ($(hullnode\currentnode)!#2!90:(hullnode\nextnode) - (hullnode\currentnode)$),
    \n2 = {atan2(\x2,\y2)},
    \n{delta} = {-Mod(\n1-\n2,360)}
  in 
    {arc [start angle=\n1, delta angle=\n{delta}, radius=#2]}
}
-- cycle
}

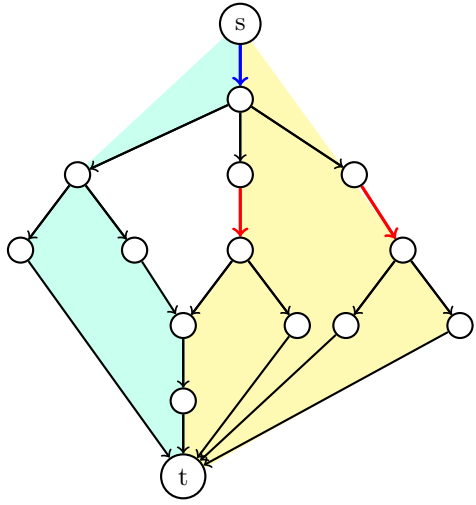
\begin{figure}
\centering

\begin{tikzpicture}[thick,every node/.style={draw,circle,fill=white}]

\begin{scope}[level distance=1cm]
\node[draw,circle](1){s}
 child{
	node[draw,circle](2){}
	child{ 
	    node[draw,circle,left=.46cm](11){}
		child{ 
	    node[draw,circle](12){}
	} 
	child{ 
	    node[draw,circle](13){}
	}
	} 
	child{ 
	    node[draw,circle](3){}
	    child{
		node[draw,circle](5){}
		child{ 
	    node[draw,circle](7){}
	    child{ 
	    node[draw,circle](14){}
	    child{ 
	    node[draw,circle](15){t}
	}
	}
	} 
	child{ 
	    node[draw,circle](8){}
	} 
	    }
	} 
	child{ 
	    node[draw,circle](4){}
	    child{
		node[draw,circle,right=.46cm](6){}
		child{ 
	    node[draw,circle](9){}
	} 
	child{ 
	    node[draw,circle](10){}
	}
	    }
	} 
};
\draw[blue,->,very thick] (1)--(2);
\draw[->] (2)--(11);
\draw[->] (2)--(3);
\draw[->] (2)--(4);
\draw[red,->, very thick] (3)--(5);
\draw[red,->, very thick] (4)--(6);
\draw[->] (5)--(7);
\draw[->] (5)--(8);
\draw[->] (6)--(9);
\draw[->] (6)--(10);
\draw[->] (11)--(12);
\draw[->] (11)--(13);
\draw[->] (13)--(7);
\draw[->] (7)--(14);
\draw[->] (14)--(15);
\draw[->] (12)--(15);
\draw[->] (10)--(15);
\draw[->] (9)--(15);
\draw[->] (8)--(15);
\begin{pgfonlayer}{background}
%\fill[fill=yellow]  (2)--(3)--(4);
%\fill[green!30!cyan!70,opacity=0.27] \convexpath{1,2,4,8,12,18,22,18,12,8,4,2,1}{9pt};
%\fill[gray,opacity=0.27] \convexpath{3,6,9,13,9,6,3}{9pt};
%\fill[gray,opacity=0.27] \convexpath{15,19,15}{9pt};
%\fill[yellow,opacity=0.5] \convexpath{1,2,3,5,1}{9pt};
\fill[yellow,opacity=0.3] \convexpath{1,2,3,5,7,14,15,10,1}{0.5pt};
\fill[green!30!cyan!70,opacity=0.3] \convexpath{1,2,11,13,7,14,15,12,11,1}{0.5pt};
%\fill[yellow,opacity=0.5] \convexpath{1,2,4,6,9,15,1}{5pt};
%\fill[yellow,opacity=0.5] \convexpath{1,2,4,6,10,15,1}{5pt};
\end{pgfonlayer}
%\draw[->,>=stealth',red,thick, line width = 1.05pt] (19) to[out=110,in=190] (15);
%\draw[->,>=stealth',red,thick, line width = 1.05pt] (19) to[out=130,in=180] (7);
\end{scope}

%\draw[line width=0.2cm, -triangle 45,fill=black]  (4.6,-3.7) -- (6.3,-3.7);

\end{tikzpicture}

\caption{Region shaded with green color represents $G_{\sigma}$ for $\sigma=(1,-1,\cdots,-1)$ whereas yellow colored region is the shortest path subgraph of $G$. The edges of $C_{(-1,\cdots,-1),1}$ and $C_{(-1,\cdots,-1),2}$ are colored with blue and red respectively. $G_{\sigma}$ is obtained by removing red colored edges.}
   \label{fig:example}
   \end{figure}
	\end{center}

	 We are given a weighted directed graph $G$ with weight function $w$ and a source vertex $s$ and a destination vertex $t$. The weight of each edge of $G$ is defined by the weight function $w:E(G)\to \mathbb{Z}$. Overall our algorithm (Algorithm~\ref{alg:main}) performs the following tasks: For different values of $\sigma \in \{-1,0,\cdots,k\}^k$ it computes the sets $C_{\sigma,i}$ and $S_{\sigma,i}$ for $i \in [k]$. Then for each such $\sigma$, it computes max-flow in the shortest path subgraph of $G_\sigma$ by considering $S_{\sigma,k}$ as source and add the edges incident on $t$ with non-zero flow to a set $\mathcal{E}(t)$. At the end, our algorithm returns the subgraph $H_t=(G \setminus In$-$Edge(t))+\mathcal{E}(t)$.
			
			Our algorithm performs the above tasks in the recursive fashion. Starting with $\sigma=(-1,\cdots,-1)$, it first considers the shortest path subgraph of $G_{\sigma}=G$ and performs $k$ iterations on it. At each iteration it computes the farthest min-cut $C_{\sigma,i}$ by considering $S_{\sigma,i}$ as source and $t$ as sink starting with $S_{\sigma,1}=\{s\}$. Then it updates the graph by removing the edges present in $C_{\sigma,i}$ and passes this new graph in the next recursive call. Before the recursive call it also updates the $\sigma$ by incrementing the value of $\sigma(j)$ by one and passes the updated value of $\sigma$ to the recursive call. Here $j$ is a parameter which denotes that the smallest coordinate of $\sigma$ that has value $-1$. Initially $j$ was set to $1$ and before the next recursive call we increment its value by one. At the end of each iteration our algorithm updates the source set to $S_{\sigma,i+1}$ by including end points of all the edges present in the cut $C_{\sigma,i}$ in the set $S_{\sigma,i}$. At the end of $k$ iterations, the algorithm computes max-flow in the shortest path subgraph of $G_\sigma$ by considering $S_{\sigma,k}$ as source and add the edges incident on $t$ with non-zero flow to a set $\mathcal{E}(t)$.

	     \begin{algorithm}[h]
		\Input{A graph $G$ with weight function $w$ and two vertices $s$ and $t$}
		\Output{A subgraph $H_t$}
		 \tcp{Initialization:} %Define a function $c:E(G) \to \{0,1\}$ and for all $e \in E(G)$, set $c(e)=1$;
		For all $\sigma \in \{-1,0,\cdots,k\}^k$ and $r \in [k]$, set $C_{\sigma,r}$ to be $\phi$;
				 
		 Set $\sigma_{curr}=(-1,\cdots,-1)$;
		   
		   $RecursiveWTSS(G,\sigma_{curr},1)$;
		   
		   Return $H_t = (G \setminus In$-$Edge(t)) + \mathcal{E}(t)$;
		   
		 \caption{Algorithm for computing $k$-WTSS($t$)}
		 \label{alg:main}
		 	\end{algorithm} 	
			
			\begin{procedure}[h]
	\hrule height .7pt\vspace{1mm}
	\TitleOfAlgo{$RecursiveWTSS(G_{curr},\sigma,j)$}\label{alg:recursive}
	\hrule\vspace{1mm}

				\If{there exists an $i \in [j-1]$ such that $\sigma(i) \ge k-j+i-1$}
				{
				
				 \Return;
				 
				}
				
				Define $\sigma_{curr}$ by setting $\sigma_{curr}(j)=0$ and $\sigma_{curr}(i)=\sigma(i)$ for all $i \ne j$;
				
               \If{$dist_{G_{curr},w}(s,t)=d+j-1$}
               {
               	$S_1 \gets \{s\}$;
               	
               	\For{$i=1,\cdots,k$}
               	{
               		$C_{\sigma,i} \gets FSMC(G_{curr},S_{i},t)$;
               		
               		$RecursiveWTSS((G_{curr}\setminus C_{\sigma,i}),\sigma_{curr},j+1)$;
               		
               		$\sigma_{curr}(j) \gets \sigma_{curr}(j)+1$;
               		
               		$(A_i,B_i) \gets ShortPartition(G_{curr},C_{\sigma,i})$;
               		
               		$S_{i+1} \gets (A_i \cup Out(A_i)) \setminus \{t\}$;

               	}

               $f \gets ShortMaxFlow(G_{curr},S_{\sigma,k+1},t)$;
               
               Add incoming edges of $t$ present in $E(f)$ in $\mathcal{E}(t)$;
               	
              }
              \Else
              {
              Define $\sigma_{curr}$ by setting $\sigma_{curr}(j)=0$ and $\sigma_{curr}(i)=\sigma(i)$ for all $i \ne j$;
              
              		$RecursiveWTSS(G_{curr},\sigma_{curr},j+1)$;
              }
              
              \hrule\vspace{1mm}
\end{procedure}

	\subsection{Correctness proof}
	\label{sec:proof}
	 Let us start with the following simple observation.
	 \begin{observation}
	 \label{obs:shortestPath}
	 For any $\sigma \in \{-1,0,\cdots, k\}^k$, any $s-t$ path in $G_{\sigma}$ must have weight at least $d+i-1$ where $i=\min\{i'\mid\sigma(i')= -1\}$.
	 \end{observation}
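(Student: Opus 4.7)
The plan is to proceed by induction on the quantity $j(\sigma) := \min\{i' \mid \sigma(i') = -1\}$, which tracks the position of the first $-1$-coordinate of $\sigma$. The base case $j(\sigma) = 1$, i.e.\ $\sigma = (-1,\ldots,-1)$, is immediate: by definition $G_\sigma$ is the union of all $s$-$t$ paths of $G$, so every such path has weight at least $d = dist_{G,w}(s,t)$, matching the claimed bound $d + j(\sigma) - 1 = d$.

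For the inductive step, fix $\sigma$ with $j(\sigma) = j \ge 2$, put $i = j - 1 = \max\{i' \mid \sigma(i') \ne -1\}$, and let $\sigma'$ be obtained from $\sigma$ by resetting its $i$-th coordinate to $-1$, so that $j(\sigma') = j - 1$. The recursive definition makes $G_\sigma$ the union of all $s$-$t$ paths in $G_{\sigma'} \setminus C_{\sigma',\sigma(i)+1}$, and the inductive hypothesis tells me every $s$-$t$ path in $G_{\sigma'}$ has weight at least $d + j - 2$. Since all edge weights are integers, in order to lift the bound to $d + j - 1$ it suffices to show that removing the cut $C_{\sigma',\sigma(i)+1}$ eliminates every $s$-$t$ path in $G_{\sigma'}$ of weight exactly $d + j - 2$.

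This is the crux. By definition $C_{\sigma',\sigma(i)+1}$ is the farthest min-cut in the shortest path subgraph of $G_{\sigma'}$ separating the source set $S_{\sigma',\sigma(i)+1}$ from $t$. To conclude that it intersects every shortest $s$-$t$ path of $G_{\sigma'}$ it is enough to know that $s \in S_{\sigma',\sigma(i)+1}$, which I would verify by a short induction on $r$: we start with $S_{\sigma',1} = \{s\}$, and each update has the form $S_{\sigma',r+1} = (A \cup Out(A)) \setminus \{t\}$ where $A$ is the side of the corresponding cut containing the previous source set, so $s$ is preserved throughout. Hence $C_{\sigma',\sigma(i)+1}$ is an $(s,t)$-cut of the shortest path subgraph of $G_{\sigma'}$, and consequently every $s$-$t$ path of weight exactly $d + j - 2$ in $G_{\sigma'}$ (if any exist) uses at least one of its edges and is destroyed.

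The main obstacle is really only careful bookkeeping: the cut is computed inside the shortest path subgraph of $G_{\sigma'}$ but is removed from all of $G_{\sigma'}$, so I should remark that every cut edge is nevertheless an edge of $G_{\sigma'}$, which is what makes the destruction argument legitimate. The degenerate situation where $G_{\sigma'}$ contains no $s$-$t$ path of weight $d + j - 2$ (so that $C_{\sigma',\sigma(i)+1}$ defaults to $\phi$) requires no separate treatment, since in that case the inductive hypothesis already yields weight at least $d + j - 1$ for every $s$-$t$ path of $G_{\sigma'}$, and hence of $G_\sigma$.
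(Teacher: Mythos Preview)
Your proof is correct and follows essentially the same approach as the paper: both argue by induction on the position of the first $-1$ coordinate, using that the removed farthest min-cut of the shortest path subgraph of $G_{\sigma'}$ severs every minimum-weight $s$--$t$ path there, and then invoke integrality of the weights. You spell out a couple of points the paper leaves implicit (that $s$ remains in each successive source set $S_{\sigma',r}$, and the degenerate case where the cut is empty), but the structure of the argument is identical.
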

	 \begin{proof}
	 Now for any $\sigma$, let us consider a sequence of vectors $\alpha_1,\cdots, \alpha_i \in \{-1,0,\cdots, k\}^k$ where $i=\min\{i'\mid\sigma(i')= -1\}$ as follows: for any $1 \le j \le i$,
	$$\alpha_j(i')=\begin{cases}\sigma(i')\quad&\text{if $i' < j$}\\
	-1&\text{otherwise}\end{cases}$$ 
	Note that $\alpha_i=\sigma$. Now we use induction on $j$ to show that any $s-t$ path in $G_{\alpha_j}$ must have weight at least $d+j-1$ and that will prove our claim.
	
	As a base case when $j=1$, as $\alpha_1=(-1,\cdots,-1)$, $G_{\alpha_j}=G$ and hence the claim is trivially true. Now suppose the claim is true for any $j \in [i-1]$ and we need to prove it for $j+1$. By the definition, $G_{\alpha_{j+1}}$ is a subgraph of $G_{\alpha_j}$. By induction hypothesis all the $s-t$ paths in $G_{\alpha_j}$ have weight at least $d+j-1$. If there is no $s-t$ path of weight $d+j-1$ in $G_{\alpha_j}$ then we are done because of our integer valued weight function. Otherwise any such path of weight $d+j-1$ must pass through the cut set $C_{\alpha_j,\sigma(j)+1}$. Now by definition, $G_{\alpha_{j+1}}$ is build by removing the edge set $C_{\alpha_j,\sigma(j)+1}$ from the graph $G_{\alpha_j}$. Hence there will be no $s-t$ path of weight $d+j-1$ in $G_{\alpha_{j+1}}$. Since our weight function is integer valued, the claim follows.
	 \end{proof}
	 
	 Note that the above observation is true only because we consider the range of our weight function $w$ to be $\mathbb{Z}$. Otherwise above observation will trivially be false. %Now consider the sequence of vectors $\alpha_1,\cdots, \alpha_i$, where $i=\min\{i'\mid\sigma(i')= -1\}$, as defined in the above proof. Suppose $m=\max\{j \mid 1 \le j\le i-1 \text{ and }C_{\alpha_j,1} \ne \phi\}$ and $C=FSMC(G_\sigma,S_{\alpha_m,\sigma(m)+1},t)$.

	Now let us consider any increment function $I:E(G)\to \mathbb{N}$ such that $\sum_{e \in E(G)} I(e) \le k$ and then denote the set of edges with non-zero value of the function $I$ by $F$, i.e., $F=\{e \in E(G)|I(e) > 0\}$. So clearly $|F| \le k$. Now suppose $dist_{G,w'}(s,t)=d'=d+j$ for some $0 \le j \le k$ where $w'(e)=w(e)+I(e)$. Thus we need to show that there also exists an $s-t$ path of weight $d'$ in the subgraph $H_t$ under the new weight function $w'$.
	
	Suppose $P$ be an $s-t$ path in $G$ such that $w'(P)=d'=d+j$. For simplicity let us assume the following.
	\begin{assumption}
	\label{asm:weightP}
	For all $e \in P$, $I(e)=0$.
	\end{assumption}
	In other words we are assuming that $w'(P)=w(P)$. At the end of the current subsection we discuss how to remove this assumption. 
	\begin{lemma}
	\label{lem:path}
	One of the following three cases must satisfy.
	\begin{enumerate}
	\item There exists a $\sigma$ such that $P$ belongs to the subgraph $G_{\sigma}$ where $\sigma(j)=-1$ and for some positive integer $r$, the last edge of $P$ belongs to the edge set $C_{\sigma,r}$.
	\item There exists a $\sigma$ such that $P$ belongs to the subgraph $G_{\sigma}$ where $\sigma(j+1)=-1$, $\sigma(j)\ne -1$ and there is no $i \in [j-1]$ such that $\sigma(i)\ge k-j+i-1$.
	\item There exists a $\sigma$ such that $P$ belongs to the subgraph $G_{\sigma}$ where if $i=\min\{i' \mid \sigma(i') = -1\}$ then $i \le j$ and for any $i' \le i$, $\sigma(i') < k-j+i'-1$ and $P$ passes through all the cut sets $C_{\sigma,1},\cdots, C_{\sigma,k-j+i-1}$.
	\end{enumerate}
	\end{lemma}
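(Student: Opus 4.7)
The plan is to construct a suitable $\sigma$ by descending the recursion tree along $P$. I start with $\sigma = (-1,\ldots,-1)$ at level $\ell = 1$ and maintain the invariant $P \in G_\sigma$. By Observation~\ref{obs:shortestPath}, the shortest $s$-$t$ distance $d_\sigma$ in $G_\sigma$ satisfies $d+\ell-1 \le d_\sigma \le w(P) = d+j$, where the upper bound uses the invariant. The iteration will terminate by declaring exactly one of the three cases.

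At each level $\ell$ I proceed as follows. If $d_\sigma > d+\ell-1$, then every cut $C_{\sigma,r}$ is empty by definition, so I set $\sigma(\ell) := 0$ and advance to level $\ell+1$ leaving $G_\sigma$ unchanged; note that level $\ell=1$ is never skipped because $G_{(-1,\ldots,-1)}$ already contains a weight-$d$ path. Otherwise $d_\sigma = d+\ell-1$ and the cuts $C_{\sigma,1},\ldots,C_{\sigma,k}$ are all defined. When $\ell \le j$ I split into three sub-cases: (i) if some $C_{\sigma,r}$ contains the last edge of $P$, Case~1 applies, since $\sigma(j) = -1$ for any $\ell \le j$; (ii) if $P$ crosses every one of $C_{\sigma,1},\ldots,C_{\sigma,k-j+\ell-1}$, Case~3 applies with $i = \ell$; (iii) otherwise I pick the smallest $r \in \{1,\ldots,k-j+\ell-1\}$ such that $P$ avoids $C_{\sigma,r}$, set $\sigma(\ell) := r-1 \le k-j+\ell-2$, and descend to the new $\sigma$ whose subgraph is the union of $s$-$t$ paths in the previous $G_\sigma \setminus C_{\sigma,r}$, which still contains $P$. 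If the iteration reaches $\ell = j+1$ without having triggered Case~1 or Case~3, then the sandwich on $d_\sigma$ forces $d_\sigma = d+j$, so $P$ is a shortest path in $G_\sigma$, and the constructed $\sigma$ satisfies $\sigma(j+1) = -1$, $\sigma(j) \ne -1$ (assigned at level $j$ by a descent or by a skip), and $\sigma(i) < k-j+i-1$ for every $i \in [j-1]$; hence Case~2 applies.

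The main obstacle is verifying that the descent step is exhaustive: whenever no admissible $r \in \{1,\ldots,k-j+\ell-1\}$ leaves $P$ out of $C_{\sigma,r}$, the very definition of the failure says that $P$ traverses all of those cuts, which is exactly the Case~3 hypothesis. A secondary verification is the numerical constraint $\sigma(i) < k-j+i-1$ needed by Cases~2 and~3: a descent sets $\sigma(i) \le k-j+i-2$ by design; a skip sets $\sigma(i) = 0$, and one checks $k-j+i-1 \ge 1$ whenever $i \ge 2$ and $j \le k$, which covers every skippable level (since $\ell=1$ cannot be skipped); and at the boundary index $i'=i$ in Case~3 the value $\sigma(i') = -1$ trivially satisfies the bound. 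Thus the procedure terminates in at most $k+1$ iterations and delivers a $\sigma$ fitting one of the three cases.
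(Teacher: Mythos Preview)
Your proposal is correct and follows essentially the same descent procedure as the paper's proof: both start from $\sigma=(-1,\ldots,-1)$, at each level find the first cut that $P$ avoids (setting the corresponding coordinate of $\sigma$ and descending), declare Case~3 when $P$ hits every cut up to index $k-j+\ell-1$, and declare Case~2 once level $j$ has been processed. Your write-up is in fact a bit more careful than the paper's, since you explicitly handle the ``skip'' levels where $d_\sigma>d+\ell-1$ and explicitly verify the numerical bound $\sigma(i)<k-j+i-1$, both of which the paper leaves implicit.
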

	 \begin{proof}
	Here we describe a procedure to find desired $\sigma$ for the path $P$. Let us initialize $\sigma=(-1,\cdots,-1)$. So $G_\sigma=G$ and thus trivially $P$ belongs to $G_\sigma$. Suppose $P$ pass through edges of the cut sets $C_{\sigma,1}, \cdots, C_{\sigma,r_1}$, but does not pass through any edge of $C_{\sigma,r_1+1}$. Note that $r_1$ will be equal to $0$ if $P$ does not pass through any of $C_{\sigma,1}$. Update $\sigma$ by setting $\sigma(1)=r_1$. By the definition of $G_{\sigma}$, $P$ belongs to it. Now suppose $P$ passes through edges of the cut sets $C_{\sigma,1}, \cdots, C_{\sigma,r_2}$, but does not pass through any edge of $C_{\sigma,r_2+1}$. Then update $\sigma$ by setting $\sigma(2)=r_2$. Now proceed in this way until $\sigma(j)$ is set or we reach at a point where for some $i \in [j-1]$, $P$ passes through all the cut sets $C_{\sigma,1},\cdots, C_{\sigma,k-j+i-1}$. This process may stop prematurely if $P$ reaches $t$ before satisfying either of above two conditions, but in that case the last edge, say $(v,t)$ of $P$ must belong to some cut set $C_{\sigma,r}$. Hence we will be in the first case and this completes the proof.
	\end{proof}
	
	Now let us call the path $P$ is of type-$1$, type-$2$ and type-$3$ respectively depending on which of the above three cases it satisfies.

	\paragraph*{Type-$1$:}
	This case is the simplest among the three.
	\begin{lemma}
	\label{lem:correct1}
	If $P$ is a type-$1$ path then $P$ is contained in the subgraph $H_t$.
	\end{lemma}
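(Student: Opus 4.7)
The plan is to show $P \subseteq H_t$. Since $H_t = (G \setminus In$-$Edge(t)) + \mathcal{E}(t)$, every edge of $P$ except possibly its last edge $e = (u,t)$ is automatically in $H_t$ (because no internal vertex of $P$ equals $t$), so the only task is to prove $e \in \mathcal{E}(t)$. The type-1 hypothesis from Lemma~\ref{lem:path} hands us a vector $\sigma$ with $\sigma(j) = -1$ and a positive $r$ with $e \in C_{\sigma,r}$; the construction in the proof of Lemma~\ref{lem:path} ensures $\sigma(i) < k-j+i-1$ for each $i \in [j-1]$, so the early-return check of the recursive procedure does not fire and the call corresponding to $\sigma$ actually computes $C_{\sigma,1},\ldots,C_{\sigma,k}$ and the final max-flow from $S_{\sigma,k+1}$ to $t$.

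The first step is to show $u \in S_{\sigma, k+1}$. Because $e \in C_{\sigma,r} = FSMC(G_\sigma, S_{\sigma,r}, t)$, the partition $(A_r, B_r) = ShortPartition(G_\sigma, C_{\sigma,r})$ puts $u \in A_r$ and $t \in B_r$; the update rule $S_{\sigma,r+1} = (A_r \cup Out(A_r)) \setminus \{t\}$ then places $u$ in $S_{\sigma,r+1}$. An easy induction, using the fact that the source set of any $(S,t)$-cut always lies on the source side so that $S_{\sigma,i} \subseteq A_i \subseteq S_{\sigma,i+1}$ at each intermediate step, then delivers $u \in S_{\sigma,k+1}$.

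Finally, let $f = ShortMaxFlow(G_\sigma, S_{\sigma,k+1}, t)$ and $C^* = FSMC(G_\sigma, S_{\sigma,k+1}, t)$. By Lemma~\ref{lem:uniqueFMC} applied to the shortest path subgraph of $G_\sigma$, $C^*$ is exactly the set of edges from $A^*$ to $B^*$ where $B^*$ is the set of vertices able to reach $t$ in the residual graph of $f$; in particular $S_{\sigma,k+1} \subseteq A^*$ and $t \in B^*$. Since $e$ lies in a min-cut of the shortest path subgraph of $G_\sigma$ (namely $C_{\sigma,r}$), and both $u \in A^*$ and $t \in B^*$, we conclude $e \in C^*$. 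Standard max-flow/min-cut duality on the unit-capacity graph then forces every edge of the min-cut $C^*$ to be saturated by $f$; in particular $f(e) = 1 > 0$, so $e \in E(f)$ and the algorithm places $e$ in $\mathcal{E}(t)$. Combined with Assumption~\ref{asm:weightP}, which gives $w'(P) = w(P) = d'$, we get that $P$ is an $s$-$t$ path in $H_t$ of weight $d'$ under $w'$. The only genuinely delicate step is identifying $C^*$ as the relevant farthest min-cut containing $e$; everything else reduces to routine bookkeeping about nested source sets and complementary slackness.
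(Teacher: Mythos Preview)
Your argument is correct and follows the same outline as the paper: show that every edge of $P$ other than the last is trivially in $H_t$, and then verify that the last edge $(u,t)\in C_{\sigma,r}$ lands in $\mathcal{E}(t)$. The paper simply asserts the implication $(u,t)\in C_{\sigma,r}\Rightarrow (u,t)\in\mathcal{E}(t)$ without justification, whereas you supply one; note, however, that once you have $u\in S_{\sigma,k+1}$ you can finish more directly---$(u,t)$ is then an edge from the source set straight to $t$ in the shortest-path subgraph, so any max-flow must saturate it---without passing through the auxiliary farthest min-cut $C^*$.
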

	\begin{proof}
	Suppose $(v,t)$ is the last edge of the path $P$. Now as $(v,t) \in C_{\sigma,r}$ for some $\sigma$ and $r$, $(v,t) \in \mathcal{E}(t)$. Thus by the construction of the subgraph $H_t$, the edge $(v,t)$ belongs to $H_t$. Also by the construction of the subgraph $H_t$, for all the vertices $u \ne t$, $In$-$Edge(u)$ belong to $H_t$. Hence $P$ must lie completely inside $H_t$.
	\end{proof}
	
		\paragraph*{Type-$2$:}
	As path $P$ belongs to the subgraph $G_{\sigma}$ where $\sigma(j)\ne -1$ and $\sigma(j+1)=-1$, by Observation~\ref{obs:shortestPath}, $w(P) \ge d+j$. However by our Assumption~\ref{asm:weightP}, $w(P)=d+j$ and so it must pass through an edge $(u_r,v_r) \in C_{\sigma,r}$ for all $r \in [k]$. Now consider an auxiliary graph $G'_{\sigma}=G_{\sigma} + (s,v_1) + \cdots + (s,v_{k})$ and extend the weight function $w$ as $w(s,v_r)=w(P[s,v_r])$. Then define another graph $G^*_\sigma=(G'_\sigma \setminus In$-$Edge(t)) + \mathcal{E}(t)$. By Lemma~\ref{lem:FSMCpath} we can claim the following.
	\begin{corollary}
	\label{cor:disjointCur}
	There will be $k+1$ edge disjoint paths in $G^*_\sigma$ each of weight $w(P)$ under weight function $w$.
	\end{corollary}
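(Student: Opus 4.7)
The plan is to invoke Lemma~\ref{lem:FSMCpath} directly on the graph $G_\sigma$ with its parameter $r$ instantiated to $k$. The nested source sets $S_{\sigma,i}$, farthest shortest-path min-cuts $C_{\sigma,i}$, and partitions $(A_i,B_i)$ defined recursively in Section~\ref{sec:mainAlgo} match the data required by Lemma~\ref{lem:FSMCpath} verbatim, and the edges $(u_i,v_i) \in C_{\sigma,i}$ needed by that lemma are precisely the edges of $P$ that lie in the corresponding cut.

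Before applying the lemma I would verify two small things. First, that $P$ indeed meets every $C_{\sigma,i}$ for $i \in [k]$: since $\sigma(j)\ne -1$ and $\sigma(j+1)=-1$, Observation~\ref{obs:shortestPath} gives $dist_{G_\sigma,w}(s,t)\ge d+j$, while Assumption~\ref{asm:weightP} forces $w(P)=d+j$, so $P$ is a shortest $s$-$t$ path in $G_\sigma$ and must therefore cross each of the $k$ consecutive $(S_{\sigma,i},t)$-min-cuts of the shortest path subgraph of $G_\sigma$. Second, that the weight assigned to the auxiliary edges $(s,v_i)$ matches what the lemma prescribes: Lemma~\ref{lem:FSMCpath} uses $w(s,v_i)=dist_{G_\sigma,w}(s,v_i)$, whereas the corollary's setup uses $w(s,v_i)=w(P[s,v_i])$. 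These coincide because any prefix of a shortest $s$-$t$ path is itself a shortest path to the intermediate vertex; otherwise splicing a strictly shorter $s$-$v_i$ walk onto $P[v_i,t]$ would yield an $s$-$t$ walk of weight less than $w(P)$, a contradiction (this argument implicitly assumes the graph has no negative cycles, which is required for shortest-path distances to be defined in the first place).

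With the hypotheses of Lemma~\ref{lem:FSMCpath} thus verified, its conclusion applied to $G_\sigma$ with $r=k$ yields $k+1$ edge-disjoint paths in $G^*_\sigma$, each of weight $dist_{G_\sigma,w}(s,t)=d+j=w(P)$, which is exactly what the corollary asserts. I do not foresee a substantive obstacle: the statement is really a direct instantiation of Lemma~\ref{lem:FSMCpath} on the refined subgraph $G_\sigma$, and the only delicate bookkeeping is the reconciliation of the two weight conventions for the added $(s,v_i)$ edges outlined above.
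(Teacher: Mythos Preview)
Your proposal is correct and follows exactly the paper's approach: the paper simply states that the corollary follows from Lemma~\ref{lem:FSMCpath}, and you have spelled out the verification of its hypotheses (in particular the reconciliation $w(P[s,v_i])=dist_{G_\sigma,w}(s,v_i)$ via the prefix-of-a-shortest-path argument) in more detail than the paper itself provides. One minor point you might add for completeness is that the $\mathcal{E}(t)$ appearing in $G^*_\sigma$ is the global set accumulated by the algorithm and hence a superset of the $\mathcal{E}(t)$ produced by Lemma~\ref{lem:FSMCpath} applied to $G_\sigma$ alone, so the $k+1$ disjoint paths guaranteed by the lemma persist in $G^*_\sigma$.
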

	
	Now we use the above corollary to conclude the following. The argument is similar to that used in~\cite{BCR16}.
	\begin{lemma}
	\label{lem:correct2}
	If $P$ is a type-$2$ path then there exists an $s-t$ path of weight $d'$ in the subgraph $H_t$ under the new weight function $w'$.
	\end{lemma}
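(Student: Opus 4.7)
The plan is to invoke Corollary~\ref{cor:disjointCur}, pick one of the guaranteed edge-disjoint paths that avoids the affected edge set $F=\{e\in E(G)\mid I(e)>0\}$ (which has size at most $k$), and then rewrite any artificial prefix of it by the corresponding portion of $P$ itself.

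First I apply Corollary~\ref{cor:disjointCur} to obtain $k+1$ edge-disjoint $s-t$ paths $Q_1,\ldots,Q_{k+1}$ in $G^*_\sigma$, each of $w$-weight $d+j$. Since $|F|\le k$ and the paths are pairwise edge-disjoint, at least one of them, call it $Q$, uses no edge of $F$, so $w'(Q)=w(Q)=d+j=d'$. Next I translate $Q$ back into $H_t$. By Assumption~\ref{asm:degBound} the vertex $s$ has a unique original out-neighbour, and $G^*_\sigma$ contains at most the $k$ artificial edges $(s,v_r)$ with $v_r\ne t$, because any in-edge of $t$ is stripped from $G'_\sigma$ before $\mathcal{E}(t)$ is added back, so artificial edges of the form $(s,t)$ never appear in $G^*_\sigma$. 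Hence $Q$ uses at most one artificial edge; if it uses none, then $Q$ already lies in $(G_\sigma\setminus In$-$Edge(t))\cup\mathcal{E}(t)\subseteq H_t$ and we are done. Otherwise $Q$ begins with some $(s,v_r)$ where $v_r\ne t$, and I replace this single edge by the prefix $P[s,v_r]$ to form a walk $\hat{Q}$. Because $w(s,v_r)=w(P[s,v_r])$ by definition of the auxiliary weight, and because no edge of $P$ carries any increment by Assumption~\ref{asm:weightP}, the equality $w'(\hat{Q})=w(\hat{Q})=d+j$ holds.

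It then remains to check that every edge of $\hat{Q}$ lies in $H_t$. The edges of $P[s,v_r]$ belong to $P\subseteq G$ and avoid $In$-$Edge(t)$ since $v_r\ne t$, so they lie in $G\setminus In$-$Edge(t)\subseteq H_t$; the edges of $Q[v_r,t]$ other than the last belong to $G_\sigma\setminus In$-$Edge(t)\subseteq H_t$, while the terminal edge of $Q$ belongs to $\mathcal{E}(t)\subseteq H_t$ by construction in Algorithm~\ref{alg:main}. Deleting loops from $\hat{Q}$ then yields a simple $s-t$ path $R$ in $H_t$ with $w'(R)\le w'(\hat{Q})=d'$. Since $H_t$ is a subgraph of $G$ and $d'=dist_{G,w'}(s,t)$, we conclude $w'(R)=d'$, as required.

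The main subtlety is verifying that the terminal in-edge of $t$ surviving the replacement step is captured by $\mathcal{E}(t)$, which is what keeps the final walk inside $H_t$; this follows from how $\mathcal{E}(t)$ is accumulated in Algorithm~\ref{alg:main}, combined with the structural fact from Corollary~\ref{cor:disjointCur} that the $k+1$ edge-disjoint flow paths in $G^*_\sigma$ terminate only through edges of $\mathcal{E}(t)$. The remainder of the argument is a pigeonhole over the $k+1$ disjoint paths versus the $\le k$ affected edges, plus a careful bookkeeping of the auxiliary weight assignment $w(s,v_r)=w(P[s,v_r])$.
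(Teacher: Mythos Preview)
Your proof is correct and follows essentially the same route as the paper: invoke Corollary~\ref{cor:disjointCur} to get $k+1$ edge-disjoint $s$--$t$ paths of weight $d+j$ in $G^*_\sigma$, use pigeonhole against $|F|\le k$ to find one avoiding $F$, and splice in the prefix $P[s,v_r]$ if the surviving path starts with an artificial edge. You are simply more explicit than the paper about why the artificial edge (if used) must be the first edge, why $v_r\ne t$, and why every edge of the resulting walk lands in $H_t$; these are exactly the details the paper leaves implicit in its two-line argument.
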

	\begin{proof}
	By Corollary~\ref{cor:disjointCur}, we get $k+1$ edge disjoint paths $P_1,\cdots, P_{k+1}$ each of weight $w(P)=d+j$. Since $|F| \le k$ where $F=\{e \in E(G)|I(e) > 0\}$, at least one of the $k+1$ many edge disjoint paths, say $P_1$ must  survive in $G^*_{\sigma}\setminus F$. If $P_1$ also belongs to the subgraph $H_t$ then we are done. Otherwise $P_1$ must take some of the $(s,v_r)$'s as the first edge and the remaining portion $P_1[v_r,t]$ lies inside $H_t$. Now consider the following path $R=P[s,v_r]\circ P_1[v_r,t]$. By the construction of $G'_{\sigma}$, $w'(R)=w'(P_1)=w(P)$ and this completes the proof.
	\end{proof}

	\paragraph*{Type-$3$:}
	 Suppose $P$ is a type-$3$ path and thus belongs to $G_{\sigma}$ for some $\sigma$ where if $i=\min\{i' \mid \sigma(i') = -1\}$ then $i \le j$ and for any $i' \le i$, $\sigma(i') < k-j+i'-1$ and $P$ passes through all the cut sets $C_{\sigma,1},\cdots, C_{\sigma,k-j+i-1}$. $P$ passes through an edge $(u_r,v_r) \in C_{\sigma,r}$ for all $r \in [k-j+i-1]$. For ease of representation let us define $v_0=s$. Now if there exists a positive integer $r \in [k-j+i-1]$ such that $w(P[v_{r-1},u_{r}]) > dist_{G_\sigma,w}(v_{r-1},u_{r})$, replace the portions of path $P[v_{r-1},u_{r}]$ by the $v_{r-1}-u_{r}$ path of weight  $dist_{G_\sigma,w}(v_{r-1}r,u_{r})$. We do this until there is no such $r$ and after that we call this new path as $P'$.
	\comment{
	\begin{claim}
	\label{clm:wtP}
	$d+i-1 \le w(P') \le d+j$.
	\end{claim}
	\begin{proof}
	From the construction of $P'$ it can be noted that we replace some portion of the original path $P[v,u]$ by a path of smaller weight only and hence total weight can only be decreased.
	
	As we argued before $P'$ lies inside the subgraph $G_{\sigma}$ and $i=\min\{i' \mid \sigma(i')= -1\}$. Hence the lower bound on $w(P')$ follows from Observation~\ref{obs:shortestPath}.
	\end{proof}
	}
	
	Now consider an auxiliary graph $G'_{\sigma}=G_{\sigma} + (s,v_1) + \cdots + (s,v_{k-j+i-1})$ and extend the weight function $w$ as $w(s,v_r)=w(P[s,v_r])$. Next define another graph $G^*_{\sigma}=(G'_{\sigma} \setminus In$-$Edge(t)) + \mathcal{E}(t)$. Now we use a slightly different argument than that used previously.
	
	Let us now analyze by considering the following two cases separately.
	
	\subparagraph*{Case $1$:}  [$w(P[s,u_{1}]) = dist_{G_\sigma,w}(s,u_{1})$]
	
	\begin{claim}
	\label{clm:nonPathEdge}
	There will be at least $k-j+i$ edge disjoint paths in $G^*_{\sigma}$ each of weight at most $d+j$ under weight function $w$. Moreover, at least one path among them will be of weight $d+i-1$.
	\end{claim}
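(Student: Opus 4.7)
The plan is to lift Lemma~\ref{lem:FSMCpath} from $G$ to the subgraph $G_\sigma$ and then pay a carefully bounded cost for the gap between the ``ideal'' edge weights $dist_{G_\sigma,w}(s,v_q)$ that the lemma uses and the ``real'' weights $w(P[s,v_q])$ that actually appear in $G'_\sigma$. First I will pin down $dist_{G_\sigma,w}(s,t)$: since $P$ crosses every one of $C_{\sigma,1},\ldots,C_{\sigma,k-j+i-1}$, all of these cut sets are non-empty. By the inductive definition of $G_\sigma$, which declares the cuts empty whenever $G_\sigma$ contains no $s$-$t$ path of weight $d+i-1$, this forces $G_\sigma$ to contain such a path, and combined with Observation~\ref{obs:shortestPath} we get $dist_{G_\sigma,w}(s,t) = d+i-1$.

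Next I will apply Lemma~\ref{lem:FSMCpath} in the context of $G_\sigma$ with $r = k-j+i-1$ and the edges $(u_q,v_q)\in C_{\sigma,q}$ extracted from $P$. The nested source sets $S_{\sigma,q}$ produced by the recursive definition match exactly the nested source sets the lemma needs. The lemma yields $r+1 = k-j+i$ edge-disjoint $s$-$t$ paths in the auxiliary graph obtained by adding edges $(s,v_q)$ weighted by $dist_{G_\sigma,w}(s,v_q)$, each path of weight exactly $d+i-1$. By Assumption~\ref{asm:degBound}, $s$ has a single out-edge in $G_\sigma$, so exactly one of these paths---call it $Q_0$---uses that original out-edge (and therefore lies entirely in $G_\sigma$), while each remaining path $Q_q$, $q\ge 1$, begins with the added edge $(s,v_q)$.

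Because $G^*_\sigma$ has the same edge set as the lemma's auxiliary graph, the same $k-j+i$ edge-disjoint paths are present in $G^*_\sigma$; only the weights of the added edges change, from $dist_{G_\sigma,w}(s,v_q)$ to $w(P[s,v_q])$. Thus $Q_0$ still has weight $d+i-1$, delivering the ``moreover'' clause. For $q\ge 1$, the weight of $Q_q$ in $G^*_\sigma$ is $w(P[s,v_q]) + (d+i-1) - dist_{G_\sigma,w}(s,v_q)$. To bound this by $d+j$, I will combine $w(P[s,v_q]) + w(P[v_q,t]) = w(P) = d+j$ (Assumption~\ref{asm:weightP}) with $w(P[v_q,t]) \ge dist_{G_\sigma,w}(v_q,t)$ and the triangle inequality $dist_{G_\sigma,w}(s,v_q) + dist_{G_\sigma,w}(v_q,t) \ge d+i-1$; rearranging yields $w(P[s,v_q]) - dist_{G_\sigma,w}(s,v_q) \le j-i+1$, and hence $w(Q_q) \le (j-i+1) + (d+i-1) = d+j$, as required.

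The main obstacle I anticipate is reconciling the algorithm's global $\mathcal{E}(t)$---accumulated from max flows at the leaf recursions via source set $S_{\sigma',k+1}$---with the $\mathcal{E}(t)$ implicitly required by Lemma~\ref{lem:FSMCpath} at our non-leaf $\sigma$, which would be built from a max flow out of $S_{\sigma,k-j+i}$ in the shortest path subgraph of $G_\sigma$. Since $S_{\sigma,k+1}\supseteq S_{\sigma,k-j+i}$, monotonicity of the farthest min-cut (Lemma~\ref{lem:BCR3}) together with a Lemma~\ref{lem:BCR2}-style rerouting argument should show that every incoming edge of $t$ used by the disjoint paths of the lemma can be replaced by an edge already present in the algorithm's $\mathcal{E}(t)$, so that the paths transfer faithfully into $G^*_\sigma$ without any change in weight.
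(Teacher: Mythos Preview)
Your proposal is correct and follows the same skeleton as the paper: invoke Lemma~\ref{lem:FSMCpath} on $G_\sigma$ with the ``ideal'' weights $w_1(s,v_q)=dist_{G_\sigma,w}(s,v_q)$ to obtain $k-j+i$ edge-disjoint paths of weight $d+i-1$, then bound the cost of replacing $w_1(s,v_q)$ by the real weight $w(P[s,v_q])$ by $j-i+1$. Your derivation of that bound via $w(P)=d+j$, $w(P[v_q,t])\ge dist_{G_\sigma,w}(v_q,t)$, and the triangle inequality is exactly what the paper asserts in one line (``$w(s,v_r)-w_1(s,v_r)\le j-i+1$'').

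The one genuine difference is in the ``moreover'' clause. The paper singles out the disjoint path that takes the \emph{added} edge $(s,v_1)$ and uses the Case~1 hypothesis $w(P[s,u_1])=dist_{G_\sigma,w}(s,u_1)$ (together with $|C_{\sigma,1}|=1$) to force $w(s,v_1)=w_1(s,v_1)$, so that this path keeps weight $d+i-1$ under $w$. You instead pick the unique disjoint path $Q_0$ through the \emph{original} out-edge of $s$: since $s$ has out-degree~$1$ in $G_\sigma$ and the $k-j+i$ edge-disjoint paths must saturate all $k-j+i$ out-edges of $s$ in the auxiliary graph, exactly one of them avoids every added edge, and for that one $w(Q_0)=w_1(Q_0)=d+i-1$ automatically. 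Your argument is cleaner and does not use the Case~1 hypothesis at all; in fact it would establish the ``moreover'' clause in Case~2 as well, which would let one collapse the paper's Case~1/Case~2 split in the subsequent contradiction argument. Finally, your closing paragraph correctly identifies a point the paper glosses over: Lemma~\ref{lem:FSMCpath} literally uses the $\mathcal{E}(t)$ coming from a max-flow out of $S_{\sigma,k-j+i}$, whereas the algorithm's $\mathcal{E}(t)$ comes from $S_{\sigma,k+1}$. Your proposed fix is right: the added edges $(s,v_q)$ lie inside $S_{\sigma,k+1}$, so the algorithm's flow is also a max-flow from $S_{\sigma,k+1}$ in $G'_{short}$, and Lemma~\ref{lem:BCR2} (applied with $S=S_{\sigma,k+1}$) reroutes it to an $(s,t)$-max-flow whose last edges already lie in the algorithm's $\mathcal{E}(t)$.
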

	\begin{proof}
	Let us consider a new weight function $w_1$ as follows:
	$$w_1(e)=\begin{cases}dist_{G_\sigma,w}(s,v_r)\quad&\text{if $e = (s,v_r)$ for some $r \in [k-j+i-1]$}\\
	w(e)&\text{otherwise}\end{cases}$$
	Then by Lemma~\ref{lem:FSMCpath}, $G^*_{\sigma}$ has $(k-j+i)$ edge disjoint paths, say $P_1,\cdots, P_{k-j+i}$ each of weight $d+i-1$ under the new weight function $w_1$ where the weight bound follows from Observation~\ref{obs:shortestPath}. Now since by Assumption~\ref{asm:degBound} the out degree of $s$ is $1$, so $|C_{\sigma,1}|=1$. As both $P$ and $P'$ pass through the edge $(u_1,v_1)$ and $w(P[s,u_{1}]) = dist_{G_\sigma,w}(s,u_{1})$, so from the construction of $P'$ it can be observed that $w_1(s,v_1)=w(s,v_1)$. Now consider the path that takes $(s,v_1)$ as the first edge and say it is $P_1$. Then $$w(P_1)=w(s,v_1)+w(P_1[v_1,t])=w_1(s,v_1)+w_1(P_1[v_1,t])=w_1(P_1)=d+i-1.$$
	For any other path, say $P_2$, clearly $w(P_2)\le w_1(P_2)+(j-i+1)=d+j$ because for any $2 \le r\le k-j+i-1$, $w(s,v_r)-w_1(s,v_r) \le j-i+1$.
	\end{proof}
	So we get $k-j+i$ edge disjoint paths $P_1,\cdots, P_{k-j+i}$ each of weight at most $d+j$ and suppose $P_1$ has weight $d+i-1$. Let us also extend the weight function $w'$ by setting $w'(s,v_r)=w(s,v_r)$ and extend $I$ by setting $I(s,v_r)=0$. If any one of these $k-j+i$ edge disjoint paths, say $Q$ satisfies that $w'(Q)\le d+j$, then we are done. This is because in that case either $Q$ lies inside $H_t$ which makes $Q$ to be our desired path or for some $r \in [k-j+i-1]$, $Q$ must take $(s,v_r)$'s as the first edge and the remaining portion $Q[v_r,t]$ lies inside $H_t$. In the second case, we consider the path $R=P[s,v_r]\circ Q[v_r,t]$. Note that $w'(R)=w'(Q)\le d+j$.
	
	Now we argue that there must exists one path among $k-j+i$ edge disjoint paths such that it will have weight at most $d+j$ under the weight function $w'$. Otherwise for all $r\in [k-j+i]$, $w'(P_r)\ge d+j+1$. Hence $I(P_1) \ge j-i+2$ and $I(P_r) \ge 1$ for all $2 \le r\le k-j+i$. Thus
	
	$$
	\sum_{e \in E(G^*_{\sigma})}I(e)  \ge (j-i+2) + (k-j+i-1) \ge (k+1).
	$$
	%where the second inequality is true because the path $Q$ can be part of exactly one of the paths $P_2,\cdots, P_{k-j+i}$.
	
	However as $I(s,v_1)=I(s,v_2)=\cdots =I(s,v_{\sigma(i)})=0$,
	$$\sum_{e \in E(G^*_{\sigma})} I(e) \le \sum_{e \in E(G)} I(e) \le k$$
	which leads to a contradiction.
	
	\subparagraph*{Case $2$:}  [$w(P[s,u_{1}]) > dist_{G_\sigma,w}(s,u_{1})$]
	
	In this case also by the argument used in the first part of the proof of Claim~\ref{clm:nonPathEdge}, we can claim the following.
	\begin{claim}
	\label{clm:nonPathEdge2}
	There will be at least $k-j+i$ edge disjoint paths in $G^*_{\sigma}$ each of weight at most $d+j$ under weight function $w$.
	\end{claim}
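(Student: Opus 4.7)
The plan is to mimic the first part of the argument used in the proof of Claim~\ref{clm:nonPathEdge}: invoke Lemma~\ref{lem:FSMCpath} under a shifted weight function $w_1$, and then translate the resulting $w_1$-bound back to the original weight $w$. Unlike Case~$1$, we no longer have $w(P[s, u_1]) = dist_{G_\sigma, w}(s, u_1)$, so we cannot single out a path of weight exactly $d + i - 1$; but for the weaker conclusion of Claim~\ref{clm:nonPathEdge2} (weight at most $d + j$ on every disjoint path) this stronger feature is not needed.

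First, I would define the auxiliary weight function $w_1$ on $G^*_\sigma$ exactly as in the proof of Claim~\ref{clm:nonPathEdge}: set $w_1(s, v_r) = dist_{G_\sigma, w}(s, v_r)$ for every $r \in [k - j + i - 1]$ and $w_1(e) = w(e)$ on every remaining edge. Under $w_1$, each added edge $(s, v_r)$ carries exactly the shortest-distance weight, which is precisely the hypothesis of Lemma~\ref{lem:FSMCpath}. Applying that lemma with parameter $k - j + i - 1$ (using the edges $(u_r, v_r) \in C_{\sigma, r}$ through which $P$ passes by the type-$3$ assumption) yields $k - j + i$ edge-disjoint $s$-$t$ paths $P_1, \ldots, P_{k - j + i}$ in $G^*_\sigma$, each of $w_1$-weight equal to $dist_{G_\sigma, w}(s, t)$, which by Observation~\ref{obs:shortestPath} equals $d + i - 1$.

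The key step is to upgrade this into a bound on the $w$-weight. Only the added edges $(s, v_r)$ have $w(e) \ne w_1(e)$, so I would control the discrepancy edge by edge:
\begin{equation*}
w(s, v_r) - w_1(s, v_r) \;=\; w(P[s, v_r]) - dist_{G_\sigma, w}(s, v_r).
\end{equation*}
Using that $P[v_r, t]$ lies in $G_\sigma$, so $w(P[v_r, t]) \ge dist_{G_\sigma, w}(v_r, t)$, together with $dist_{G_\sigma, w}(s, v_r) \ge dist_{G_\sigma, w}(s, t) - dist_{G_\sigma, w}(v_r, t) \ge (d + i - 1) - dist_{G_\sigma, w}(v_r, t)$, I obtain
\begin{equation*}
w(s, v_r) - w_1(s, v_r) \;\le\; \bigl((d + j) - dist_{G_\sigma, w}(v_r, t)\bigr) - \bigl((d + i - 1) - dist_{G_\sigma, w}(v_r, t)\bigr) \;=\; j - i + 1.
\end{equation*}
Since all added edges emanate from the single vertex $s$, each disjoint path $P_r$ can contain at most one of them, so $w(P_r) \le w_1(P_r) + (j - i + 1) = (d + i - 1) + (j - i + 1) = d + j$, finishing the claim.

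The one non-routine ingredient is the weight-discrepancy inequality above: it is precisely what allows us to compensate for the fact that, in Case~$2$, the prefix $P[s, v_1]$ is no longer a shortest path, and it crucially relies on $P$ being a type-$3$ path of total $w$-weight $d + j$ lying inside $G_\sigma$, combined with the integer-weight lower bound $dist_{G_\sigma, w}(s, t) \ge d + i - 1$ coming from Observation~\ref{obs:shortestPath}. Every other ingredient (the applicability of Lemma~\ref{lem:FSMCpath} to $w_1$, and the fact that edge-disjointness forces at most one added edge per path) is inherited verbatim from the setup already in place.
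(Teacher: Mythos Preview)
Your proposal is correct and follows essentially the same approach as the paper: the paper proves Claim~\ref{clm:nonPathEdge2} simply by pointing back to ``the argument used in the first part of the proof of Claim~\ref{clm:nonPathEdge}'', which is exactly the $w_1$-shift plus Lemma~\ref{lem:FSMCpath} plus the per-edge discrepancy bound $w(s,v_r)-w_1(s,v_r)\le j-i+1$ that you reproduce. If anything, you are more explicit than the paper, since you actually justify that discrepancy inequality (via $w(P)=d+j$, $w(P[v_r,t])\ge dist_{G_\sigma,w}(v_r,t)$, and the triangle inequality together with Observation~\ref{obs:shortestPath}), whereas the paper merely asserts it.
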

	Note that on the contrary to Claim~\ref{clm:nonPathEdge}, now we do not have the extra guarantee that at least of the edge disjoint paths must have weight $d+i-1$. Now just like the previous case, we only need to argue that there must exists one path among $k-j+i$ edge disjoint paths, say $P_1,\cdots, P_{k-j+i}$ such that it will have weight at most $d+j$ under the weight function $w'$ and we will be done.
	
	Now suppose $w(P[s,u_{1}]) = dist_{G_\sigma,w}(s,u_{1}) + l$, for $l > 0$. Consider the path that takes $(s,v_1)$ as the first edge and say it is $P_1$. Then by the argument used in the proof of Claim~\ref{clm:nonPathEdge}, one can show that
	$$w(P_1)=w(s,v_1)+w(P_1[v_1,t])=(dist_{G_\sigma,w}(s,v_{1}) + l)+w_1(P_1[v_1,t])=d+i+l-1.$$
	Let $Q$ be a shortest $s-v$ path in $G_{\sigma}$ under weight $w$. Now since $w(P[s,u_{1}]) > dist_{G_\sigma,w}(s,u_{1})$ and $P$ is a shortest $s-t$ path under the weight $w'$ (recall that $w'=w+I$), $I(Q) \ge l+1$. Moreover, 
	$$\sum_{e \in Q\text{ and }e \not \in P}I(e) \ge l+1.$$
	Now if for all $r\in [k-j+i]$, $w'(P_r)\ge d+j+1$, it must satisfy that $I(P_1) \ge j-i-l+2$ and $I(P_r) \ge 1$ for all $2 \le r\le k-j+i$. 
	
	$$
	\sum_{e \in E(G^*_{\sigma})}I(e)  \ge (l+1) + (j-i-l+2) + (k-j+i-2) \ge (k+1).
	$$
	However as $I(s,v_1)=I(s,v_2)=\cdots =I(s,v_{\sigma(i)})=0$,
	$$\sum_{e \in E(G^*_{\sigma})} I(e) \le \sum_{e \in E(G)} I(e) \le k$$
	which again leads to a contradiction.

	Now from the above we can conclude the following.
	\begin{lemma}
	\label{lem:correct3}
	If $P$ is a type-$3$ path then there exists an $s-t$ path of weight $w(P)=d'$ in the subgraph $H_t$ under the new weight function $w'$.
	\end{lemma}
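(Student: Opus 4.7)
My plan is to mirror the structure used for type-$2$ paths but with a more careful counting argument, because now $P$ passes through $k-j+i-1$ cut sets (rather than all $k$) and its weight is $d+j$ rather than some smaller shortest-path weight. The central object is an auxiliary graph $G^*_{\sigma}$ obtained by adding shortcut edges $(s,v_r)$ into $G_{\sigma}$ at the far endpoints $v_r$ of the cut edges crossed by $P$, weighted by $w(s,v_r)=w(P[s,v_r])$, and then replacing the incoming edges of $t$ by $\mathcal{E}(t)$. Any $s$-$t$ path in $G^*_{\sigma}$ either lies entirely inside $H_t$ or, if it begins with a shortcut $(s,v_r)$, can be converted to a path $R=P[s,v_r]\circ Q[v_r,t]$ of the same $w'$-weight that does lie inside $H_t$, since $Q[v_r,t]$ ends in an edge of $\mathcal{E}(t)\subseteq H_t$ and all other incoming edges of internal vertices are kept in $H_t$.

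The first step is to invoke Lemma~\ref{lem:FSMCpath} on $G^*_{\sigma}$, where the shortcut edges play the role of the extra edges into the source set; this yields $k-j+i$ edge-disjoint $s$-$t$ paths $P_1,\dots,P_{k-j+i}$ in $G^*_{\sigma}$, each of weight exactly $d+i-1$ under the auxiliary weight $w_1$ (which sets the shortcut edges to their shortest-path values). The next step, which is Claims~\ref{clm:nonPathEdge} and~\ref{clm:nonPathEdge2}, is to translate this into bounds under $w$: because $w(s,v_r)\le w_1(s,v_r)+(j-i+1)$, every $P_r$ has $w$-weight at most $d+j$, and in Case~$1$ (where $w(P[s,u_1])=dist_{G_\sigma,w}(s,u_1)$) the path that begins with $(s,v_1)$ even achieves $w$-weight $d+i-1$ exactly.

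The main step is the contradiction argument that at least one of these $k-j+i$ paths satisfies $w'(P_r)\le d+j$. Assume not; then each $P_r$ absorbs at least one unit of increment, and in fact the distinguished path through $(s,v_1)$ must absorb $j-i+2$ units (to push its $w$-weight from $d+i-1$ up past $d+j$). Summing, the total increment along the edge-disjoint $P_r$'s is at least $(j-i+2)+(k-j+i-1)=k+1$. Since $I$ vanishes on the shortcut edges and the $P_r$ are edge-disjoint in $G^*_{\sigma}$, this total is bounded by $\sum_{e\in E(G)}I(e)\le k$, a contradiction. In Case~$2$ ($w(P[s,u_1])>dist_{G_\sigma,w}(s,u_1)$ by some excess $l>0$), the same accounting is refined by noting that any shortest $s$-$v_1$ path $Q$ in $G_{\sigma}$ under $w$ must absorb an extra $l+1$ units of increment on edges disjoint from $P$ (otherwise $P$ would not be shortest under $w'$), which exactly compensates for the $l$ slack lost at $P_1$, so the counting still yields at least $k+1$ units of increment.

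The main obstacle I anticipate is the bookkeeping in Case~$2$: one must be careful that the $l+1$ units of increment charged along the shortest $s$-$v_1$ path $Q$ are not double-counted against increments already charged to the paths $P_r$. The clean way to handle this is to observe that these $l+1$ units occur on edges of $Q$ that do not lie on $P$, and hence do not appear on any of the $P_r$ either (the $P_r$ are edge-disjoint among themselves but could a priori share edges with $Q$; however $Q$ shares a common prefix with $P_1$ only up to $v_1$, and the relevant extra-increment edges are strictly before $v_1$ and thus absent from all $P_r$'s which start with the shortcut $(s,v_1),\dots,(s,v_{k-j+i-1})$). Once this is verified, the sum $(l+1)+(j-i-l+2)+(k-j+i-2)\ge k+1$ gives the desired contradiction, and the lemma follows by extracting any path $P_r$ with $w'(P_r)\le d+j$ and translating it back into an $s$-$t$ path of the same weight inside $H_t$ as described in the opening paragraph.
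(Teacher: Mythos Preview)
Your proposal is correct and follows essentially the same route as the paper: the auxiliary graph $G^*_\sigma$, the invocation of Lemma~\ref{lem:FSMCpath} via the auxiliary weight $w_1$, the two-case split on whether $w(P[s,u_1])=dist_{G_\sigma,w}(s,u_1)$, and the counting contradictions are exactly Claims~\ref{clm:nonPathEdge} and~\ref{clm:nonPathEdge2}. One small remark on your Case~$2$ bookkeeping: since $s$ has out-degree~$1$ (Assumption~\ref{asm:degBound}), exactly one of the $k-j+i$ disjoint paths does \emph{not} begin with a shortcut edge and may therefore overlap with $Q$; the paper silently accounts for this by writing $(k-j+i-2)$ rather than $(k-j+i-1)$ in its final sum, i.e., by simply discarding that one path from the count.
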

	
	\paragraph*{Removing Assumption~\ref{asm:weightP}:}
	Suppose $P$ be one of the shortest paths from $s$ to $t$ in $G$ under the new weight $w'$, i.e., $w'(P)=d'$. Then consider the following set $S=\{e \in P | I(e) > 0\}$ and suppose $\sum_{e\in S}I(e) = k' \le k$. Now define the following new weight function:
	$$w''(e)=\begin{cases}w(e)\quad&\text{if $e \in P$}\\
	w'(e)&\text{otherwise}\end{cases}$$
	Now $w''(P)=d'-k'$. Then use the argument same as before to show that there exists a path, say $R$ in $H_t$ of weight at most $d'-k'$ under this new weight function $w''$. Clearly, $w'(R)\le w''(R)+k'=d'$.

	\subsection{Bound on size of $\mathcal{E}(t)$}
	\label{sec:sizebound}
	Before establishing the upper bound on the size of the set of edges $\mathcal{E}(t)$, let us define $C_{\sigma,k+1}=FSMC(G_{\sigma},S_{\sigma,k+1},t)$ for any $\sigma \in \{-1,0,\cdots ,k\}^k$. Now as $FSMC(G_{\sigma},S_{\sigma,i+1},t)=FMC(G^{short}_{\sigma},S_{\sigma,i+1},t)$ for any $i \in [k]$ where $G^{short}_{\sigma}$ is the shortest path subgraph of $G_{\sigma}$, so we can restate Lemma 6.6 from~\cite{BCR16} in the following form.
	\begin{lemma}
	\label{lem:sizeCut}
	For any $i \in [k]$, $|C_{\sigma,i+1}| \le 2 \times |C_{\sigma,i}|$.
	\end{lemma}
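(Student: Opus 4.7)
The plan is to exhibit an explicit $(S_{\sigma,i+1},t)$-cut of size at most $2|C_{\sigma,i}|$ in the shortest-path subgraph $G^{short}_\sigma$; since $C_{\sigma,i+1}$ is in particular a minimum such cut (it is the farthest min-cut), the claim will follow by the max-flow min-cut theorem.

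Write $(A_i,B_i)=ShortPartition(G_\sigma,C_{\sigma,i})$ and split $C_{\sigma,i}=C^t\sqcup C^{\bar t}$, where $C^t$ consists of the cut edges pointing into $t$ and $C^{\bar t}$ of the remaining ones (which necessarily point into vertices of $Out(A_i)\setminus\{t\}$). I will take as my candidate cut $C^t$ together with every outgoing edge (in $G^{short}_\sigma$) of every vertex of $Out(A_i)\setminus\{t\}$. The first step is to verify that this really is an $(S_{\sigma,i+1},t)$-cut: after its removal, a vertex of $A_i$ can reach only $A_i\cup(Out(A_i)\setminus\{t\})\subseteq S_{\sigma,i+1}$, since all its edges in $G^{short}_\sigma$ going to $V\setminus A_i$ lie in $C_{\sigma,i}=C^t\sqcup C^{\bar t}$ and those in $C^t$ go to $t$ (deleted) while those in $C^{\bar t}$ land in $Out(A_i)\setminus\{t\}$ (which sits on the source side, so harmless); meanwhile each vertex of $Out(A_i)\setminus\{t\}$ has no surviving outgoing edge at all. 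Thus $t$ is unreachable from $S_{\sigma,i+1}$ in the punctured graph.

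For the size bound, Assumption~\ref{asm:degBound} limits the out-degree by $2$, so the contribution of the outgoing edges of vertices of $Out(A_i)\setminus\{t\}$ is at most $2|Out(A_i)\setminus\{t\}|$. Every $v\in Out(A_i)\setminus\{t\}$ is, by the definition of $Out$, the head of some edge of $C_{\sigma,i}$; since its head $v\neq t$, this edge lies in $C^{\bar t}$, and picking one such edge per $v$ gives an injection $Out(A_i)\setminus\{t\}\hookrightarrow C^{\bar t}$, so $|Out(A_i)\setminus\{t\}|\le |C^{\bar t}|$. Combining,
\[
|C_{\sigma,i+1}| \;\le\; |C^t|+2\,|C^{\bar t}| \;\le\; 2\bigl(|C^t|+|C^{\bar t}|\bigr) \;=\; 2|C_{\sigma,i}|.
\]
The only real subtlety is the cut-validity check: the edges of $C^{\bar t}$ are deliberately omitted from $C'$ because their heads already belong to $S_{\sigma,i+1}$, and one must notice that the subsequent escape of those head-vertices toward $t$ is killed precisely by the outgoing-edge deletions we do include. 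Once that is in hand, the counting via the $Out\hookrightarrow C^{\bar t}$ injection and the out-degree bound is routine.
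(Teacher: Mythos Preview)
Your proof is correct and is essentially the argument of Lemma~6.6 in \cite{BCR16} that the paper invokes: exhibit the explicit $(S_{\sigma,i+1},t)$-cut consisting of the cut edges landing at $t$ together with all outgoing edges of the heads of the remaining cut edges, and bound its size via the out-degree hypothesis of Assumption~\ref{asm:degBound}. The only cosmetic difference is that you organize the count as $|C^t|+2|C^{\bar t}|$, whereas the original phrases it directly as twice the number of cut edges; the underlying idea is identical.
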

	Reader may note that the proof of the above lemma in~\cite{BCR16} crucially relies on our Assumption~\ref{asm:degBound}.
	\begin{lemma}
	\label{lem:sizeBound}
	$|\mathcal{E}(t)| \le e(k-1)!2^k$.
	\end{lemma}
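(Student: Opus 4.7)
The plan is to bound $|\mathcal{E}(t)|$ as the product of (a) the number of recursive calls that actually execute the $ShortMaxFlow$ instruction, and (b) the maximum number of edges any single such call can add to $\mathcal{E}(t)$.

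For factor (b), a call with parameters $(G_{curr}, \sigma, j)$ adds one edge to $\mathcal{E}(t)$ for each incoming edge of $t$ carrying non-zero flow in $f = ShortMaxFlow(G_{curr}, S_{\sigma, k+1}, t)$, and this number is bounded by the value of $f$. By Lemma~\ref{lem:uniqueFMC} together with max-flow/min-cut, $|f| = |C_{\sigma, k+1}|$. Iterating Lemma~\ref{lem:sizeCut} $k$ times gives $|C_{\sigma, k+1}| \le 2^k \cdot |C_{\sigma, 1}|$. Since Assumption~\ref{asm:degBound} forces the out-degree of $s$ to be $1$, any $(\{s\}, t)$-cut in $G_{\sigma, short}$ has size at most $1$, so $|C_{\sigma, 1}| \le 1$ and therefore each qualifying call contributes at most $2^k$ edges.

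For factor (a), I will count the $\sigma$'s for which a recursive call at some level $j$ survives the initial guard. The guard rejects $\sigma$ whenever some $\sigma(i) \ge k-j+i-1$ for $i \in [j-1]$, so a surviving $\sigma$ must satisfy $\sigma(i) \in \{0, 1, \ldots, k-j+i-2\}$ for each $i \in [j-1]$. This yields at most $\prod_{i=1}^{j-1}(k-j+i-1) = (k-2)!/(k-j-1)!$ vectors at level $j$ (interpreting the empty product at $j=1$ as $1$, and noting that the product vanishes for $j \ge k$). Summing over $j \in \{1, \ldots, k-1\}$ and substituting $l = k-j-1$ gives
\[
\sum_{j=1}^{k-1} \frac{(k-2)!}{(k-j-1)!} \;=\; (k-2)! \sum_{l=0}^{k-2} \frac{1}{l!} \;<\; e\cdot(k-2)! \;\le\; e\cdot(k-1)!.
\]

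Multiplying (a) and (b), $|\mathcal{E}(t)| \le e \cdot (k-1)! \cdot 2^k$, as claimed. The two ingredients are essentially independent, so I do not foresee a substantive obstacle. The only things that need care are (i) reading the guard inequality precisely (so that the counting bound uses $k-j+i-1$ and not an off-by-one variant), and (ii) confirming that each execution of $ShortMaxFlow$ is accounted for exactly once in the recursion-tree count, which it is because $ShortMaxFlow$ sits outside the inner $for$ loop and runs at most once per recursive invocation.
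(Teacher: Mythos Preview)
Your proof is correct and follows the same two-factor decomposition as the paper: bound the number of recursive invocations that reach the $ShortMaxFlow$ line, and bound each one's contribution by $2^k$ via Lemma~\ref{lem:sizeCut} together with $|C_{\sigma,1}|\le 1$. Your count of surviving calls is in fact slightly sharper than the paper's (you obtain $e\cdot(k-2)!$ before relaxing to $e\cdot(k-1)!$, whereas the paper writes the looser sum $1+(k-1)+(k-1)(k-2)+\cdots+(k-1)!$); the only caveat is that your summation range $j\in\{1,\ldots,k-1\}$ is empty when $k=1$, so that boundary case needs a one-line direct check.
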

	\begin{proof}
	In our algorithm for each $\sigma \in \{-1,0,\cdots,k\}^k$ we compute the cut sets $C_{\sigma,1},\cdots,C_{\sigma,k}$ and add $|C_{\sigma,k+1}|$ many edges in the set $\mathcal{E}(t)$ if for any $i' \le i$, $\sigma(i') < k-i+i'-1$ where $i =\min \{j \mid \sigma(j) =-1\}$; otherwise we do not compute anything. So the total number of $\sigma$ for which we add edges in $\mathcal{E}(t)$ is bounded by
	$$1+(k-1)+(k-1)(k-2)+\cdots + (k-1)!=(k-1)![1/0!+1/1!+\cdots + 1/(k-1)!] \le e \cdot (k-1)!.$$
	Now by applying Lemma~\ref{lem:sizeCut}, we get that for each such $\sigma$, $|C_{\sigma,k+1}|\le 2^k$ and this proves the claimed bound.
	\end{proof}
	
	\subsection{Complexity analysis}
	\label{sec:complexity}
	Now we analyze the running time of our algorithm to find $k$-WTSS($t$) for some $t \in V(G)$. We first preprocess the input graph to generate a new graph in a way so that the new graph satisfies Assumption~\ref{asm:degBound}. This takes $O(m)$ time (see Appendix~\ref{app:assumption}). Next we apply Algorithm~\ref{alg:main} on this new graph which has $O(m)$ many vertices and edges. By the argument in the proof of Lemma~\ref{lem:sizeBound} we see that our algorithm computes $k$ farthest min-cuts on shortest path subgraphs of $G_{\sigma}$ for $e(k-1)!$ many different $\sigma$'s. Now from the discussion in Section~\ref{sec:shortFarthest}, assuming we have $G_{\sigma}$ explicitly, to generate each such shortest path subgraph on this new transformed graph we need $O(m^2)$ time and then to compute $k$ farthest min-cuts takes total $O(\sum_{i=1}^k |C_{\sigma,i}| \times m)=O(2^km)$ time (see~\cite{FF62}). Finally, one can get $k$-WTSS($t$) of the original graph from that of the transformed graph in $O(m)$ time (see Appendix~\ref{app:assumption}). So overall time needed to compute $k$-WTSS($t$) of any given graph with $n$ vertices and $m$ edges is $O((k-1)!2^{k} m^2)=O((k)^k m^2)$ (by Stirling's approximation). Now since by the Locality Lemma (Lemma~\ref{lem:localLem}), finding $k$-WTSS requires $n$ rounds where in each round we find $k$-WTSS($v$) for some $v \in V(G)$, computing $k$-WTSS takes total $O((k)^km^2n)$ time.

	\section{Lower Bound on the Size of $k$-WTSS}
	\label{sec:sizeLB}
	In this section we give construction of a graph that will establish a lower bound on the size of a $k$-WTSS as stated in Theorem~\ref{thm:LB}. Let us first recall Theorem~\ref{thm:LB}.
	\begin{theorem}
	For any positive integer $k \ge 2$, there exists an positive integer $n'$ such that for all $n > n'$, there exists a directed graph $G$ with $n$ vertices and a weight function $w:E(G) \to \mathbb{Z}$, such that its $k$-WTSS must contain $c \cdot 2^k n$ many edges for some constant $c \ge 5/4$.
	\end{theorem}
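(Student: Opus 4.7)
The plan is to construct an explicit weighted graph and argue edge-by-edge that any $k$-WTSS of it must be large. The starting point is the $k$-FTRS lower bound of Baswana--Choudhary--Roditty, which exhibits, for each $k$, a directed graph on $n$ vertices requiring $2^k n$ edges to preserve reachability under $k$ edge failures. As already noted in the paper, a $k$-WTSS of an unweighted graph (weight $0$ everywhere) is also a $k$-FTRS, since incrementing a single edge's weight by one along a path of weight $0$ disconnects the path exactly as a deletion would. This immediately yields a $2^k n$ lower bound; our goal is to beat it by a multiplicative $5/4$ factor by exploiting the distinction between "some path survives" and "a weighted shortest path survives."

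The construction I would use is the BCR $k$-FTRS lower-bound graph, augmented at each target vertex by a small weighted gadget. In the BCR graph, each target vertex $t$ has $2^k$ distinguished incoming edges, one for each subset $S \subseteq [k]$, and for every such $S$ there is a failure pattern of $k$ edges whose removal leaves the $S$-th edge as the unique surviving edge into $t$. To force additional required edges, I would attach to $t$ a gadget with two parallel sub-paths of equal total weight from a branching vertex $v$ to $t$, plus a controllable supply path from the source $s$ to $v$. The weights are chosen so that, starting from the natural embedding of BCR into an unweighted (zero-weight) graph, one can spend a single unit of weight increment on one of the two parallel sub-paths to make the other the unique shortest route, while the remaining $k-1$ units of increment are spent inside the BCR subgadget to select which of its $2^k$ edges into $t$ is the sole surviving reachability witness. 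Carrying out this bookkeeping on a $1/4$ fraction of the target vertices produces an extra $\tfrac14 \cdot 2^k$ required incoming edges per augmented vertex, giving $\tfrac54 \cdot 2^k n$ required edges in total.

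The verification step proceeds by fixing an arbitrary candidate subgraph $H$ and showing that if $H$ omits any of the marked edges, then some weight-increment function $I$ with $\sum I(e) \le k$ witnesses $\mathrm{dist}_{H,w+I}(s,t) > \mathrm{dist}_{G,w+I}(s,t)$ for a specific target $t$. For edges inherited from BCR, $I$ places all $k$ units on the BCR failure pattern associated with that edge. For edges of the weighted extension, $I$ spends one unit on the competing parallel sub-path and $k-1$ units on the BCR pattern that singles out the correct reachability witness, so that the only remaining $s\to t$ path of minimum weight passes through the edge in question. Integrality of $I$ and the restriction $I : E(G) \to \mathbb{N}$ are crucial: they let us argue that increments are indivisible units that must be assigned wholly to individual edges, which is what makes the counting tight.

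The main obstacle is ensuring that the edges forced by the weighted extension are genuinely new, i.e., not already among the $2^k$ BCR edges at the same target, and that the witnessing increment functions for different edges do not collide in a way that lets a single edge serve double duty. The delicate part of the construction is therefore the weight assignment in the two-parallel-sub-paths gadget and the interaction between the budget of one unit spent on the gadget and the $k-1$ units spent on BCR: the combinatorics must be set up so that each subset $S \subseteq [k]$ corresponds to a distinct required pair (gadget-edge, BCR-edge), and the resulting collection of forced incoming edges has size exactly $\tfrac54 \cdot 2^k$ at each augmented target, from which the $\tfrac54 \cdot 2^k n$ global bound follows by summing over the $\Theta(n)$ targets (with the constant absorbed into the "$n$ sufficiently large" clause of the theorem statement).
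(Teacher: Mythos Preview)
Your high-level strategy --- augment the BCR binary-tree construction with an auxiliary gadget, spending part of the increment budget to suppress the BCR paths and the rest to single out a path through the gadget --- is precisely the idea behind the paper's proof. The gap is that the gadget you describe is too small to yield the $5/4$ factor.

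A gadget consisting of just two parallel sub-paths from $v$ to $t$ contributes only two new incoming edges at each target, so the number of incoming edges of $t$ present in the graph is at most $2^k + 2$, which is below $\tfrac{5}{4}\cdot 2^k$ for every $k \ge 4$. No bookkeeping with ``pairs (gadget-edge, BCR-edge)'' can change this: the objects being forced are individual incoming edges of $t$, and you have placed only $2^k+2$ of them in the graph. The sentence about ``a $1/4$ fraction of the target vertices'' producing ``an extra $\tfrac14 \cdot 2^k$ required incoming edges per augmented vertex'' is also internally inconsistent; neither reading matches the two-sub-path gadget, and their product does not come out to $\tfrac14 \cdot 2^k n$. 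Your budget split is inverted as well: with $k-1$ units spent inside the height-$k$ BCR tree you cannot isolate a single leaf, whereas a single unit on its root edge already shifts every BCR path.

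The paper repairs this by making the auxiliary gadget itself a full binary tree $T_2$ of height $k-2$, with all $2^{k-2}$ leaves joined to every target, and with the weight on $(s,r_2)$ chosen so that every $s$--$t$ path through $T_2$ is exactly one unit longer than every path through the original height-$k$ tree $T_1$. To force a specific leaf-to-$t$ edge of $T_2$, one increments $(s,r_1)$ by $2$ (pushing all $T_1$ paths strictly above the $T_2$ paths) and increments the $k-2$ off-path sibling edges inside $T_2$, for a total of exactly $k$. This yields $2^k + 2^{k-2} = \tfrac{5}{4}\cdot 2^k$ forced incoming edges at every target. The missing ingredient in your proposal is therefore structural, not just a matter of weight assignment: the auxiliary gadget must itself have exponential fan-out $2^{k-2}$, and the budget must be split as $2$ units on $(s,r_1)$ plus $k-2$ units inside $T_2$, rather than one unit on a two-edge gadget and $k-1$ units on BCR.
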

	\begin{proof}
	Let us consider $l$ many full binary trees $T_i$ such that for each $1 \le i \le l$, $T_i$ has height $h_i=k-\sum_{j=2}^{i}j$ with root $r_i$. Let $L_i$ be the set of leaves of the tree $T_i$ and thus $|L_i|=2^{h_i}$. Next consider $L=\cup_i L_i$ and another set of $n$ vertices $X$. Finally define a graph $G$ with $V(G)=\{s\} \cup (\cup_i V(T_i)) \cup X$ and $E(G)=\{(s,r_i) | 1 \le i \le l\} \cup \{(u,v)| u \in L, v \in X\} \cup (\cup_i E(T_i))$. Choose largest $l$ such that $\sum_{j=2}^{l} j \le k$. Now let us consider the following weight function $w:E(G) \to \mathbb{N}$,
	$$
	w(e) = \begin{cases}\sum_{j=2}^{i}j + i \quad & \text{if $e=(s,r_i)$}\\
	1 &\text{otherwise}  \end{cases}
	$$
	Clearly, $ |E(G)| = l + \sum_{i=1}^{l}(2|L_i| - 1) + |L| \times |X| =c \cdot 2^k n $ for some constant $c \ge 5/4$.
	
	It only remains to show that any $k$-WTSS of $G$ must contain all the edges of $G$. Take any vertex $t \in X$ and consider any path $P$ from $s$ to $t$. Suppose the first edge of $P$ is $(s,r_i)$ for some $i$. Then consider the set $S=\{(u,v)| (u,v) \in T_i,\;u \in P \text{ but } v \not \in P \}$. Let us now consider the following increment function $I:E(G) \to \{0, \cdots ,k\}$,
	$$I(e)=\begin{cases}i + 1 - j \quad &\text{if $e=(s,r_j)$ and $1\le j < i$}\\
	1&\text{$e \in S$}\\
	0&\text{otherwise} \end{cases}$$
	Clearly, $\sum_{e \in E(G)}I(e) \le k$ due to the choice of $l$. Also $P$ will be the only shortest path from $s$ to $t$ in $G$ under the new weight function $w'(e)=w(e)+I(e)$, $\forall_{e\in E(G)}$ because all the $s-t$ paths whose first edge is $(s,r_j)$ for $j < i$ and all the $s-t$ paths except $P$, whose first edge is $(s,r_i)$ will now have weight $k + i +1$. This completes the proof.
	\end{proof}

	\section{Lower Bound for More General Model}
	\label{sec:generalLB}
	In this section we show that size of $k$-WTSS of a graph can be of size at least $\Omega(n^2)$ even for $k=1$ if we allow either weight function to be rational valued or increment function to be rational valued.
	\subsection{Lower bound for rational valued weight function}
	\label{sec:rationalweightLB}
	\begin{theorem}
	\label{thm:generalLB1}
	If weight of an edge can be any rational value, then for every $n \in \mathbb{N}$, there exists a directed graph with $n$ vertices whose $1$-WTSS must contain $c\cdot n^2$ many edges for some constant $c >0$.
	\end{theorem}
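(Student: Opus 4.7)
The plan is to reduce the problem to the known $\Omega(n^2)$ lower bound for the $1$-Fault Tolerant Shortest-path Subgraph ($1$-FTSS) on weighted directed graphs due to Demetrescu \emph{et al.}~\cite{DTCR08}. Starting from an $n$-vertex graph $G$ for which any $1$-FTSS requires $\Omega(n^2)$ edges, I rescale its edge weights so that a single integer weight increment of $1$ on any edge effectively acts like an edge failure, thereby transferring the lower bound to the rational-weight setting.

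Concretely, let $G$ be an $n$-vertex directed graph with source $s$ and integer weight function $w$ witnessing the $\Omega(n^2)$ lower bound for $1$-FTSS. Let $W = \max_{e \in E(G)} w(e)$. Define $G'$ to have the same vertex and edge set as $G$ but equipped with the rescaled rational weight function $w'(e) = w(e) / (2nW)$. Under $w'$, every simple $s$-to-$t$ path has weight strictly less than $1/2$.

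I now argue that any $1$-WTSS $H$ of $G'$ must contain every edge that is essential to a $1$-FTSS of $G$, which immediately transfers the lower bound. Consider any valid increment function $I$ with $\sum_e I(e) \le 1$; the nontrivial case is when $I$ increments a single edge $e^*$ by exactly $1$. Under $w' + I$, any $s$-to-$t$ path using $e^*$ has weight at least $1$, whereas any path avoiding $e^*$ has weight less than $1/2$. Hence, whenever $s$ and $t$ remain connected in $G' \setminus \{e^*\}$, the shortest $s$-to-$t$ distance under $w' + I$ equals $w'(P^*)$ for the shortest $s$-to-$t$ path $P^*$ in $G' \setminus \{e^*\}$, and $H$ must contain such a path. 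But this is exactly the $1$-FTSS requirement for the failure of $e^*$ and destination $t$. Varying over all choices of $e^*$ and $t$, every edge essential to the $1$-FTSS of $G$ must also lie in $H$, giving $|E(H)| \ge c \cdot n^2$ for some constant $c > 0$.

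The main subtlety is the corner case where $e^*$ is a cut edge for some destination $t$, i.e., every $s$-to-$t$ path in $G$ uses $e^*$. In this situation $1$-FTSS imposes no requirement (the post-failure distance is infinite), but $1$-WTSS still needs to preserve the finite distance $w'(P^*) + 1$, thereby forcing the original shortest path through $e^*$ to lie in $H$. This observation only increases the required edge count of $H$, so the $\Omega(n^2)$ bound inherited from essential $1$-FTSS edges remains valid and may in fact be strengthened. Since the rescaled graph $G'$ has $n$ vertices and rational edge weights, and any $1$-WTSS of $G'$ contains at least $c \cdot n^2$ edges, the theorem follows.
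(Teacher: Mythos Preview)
Your reduction is correct and takes a genuinely different route from the paper. The paper gives an explicit, self-contained construction: it builds the transitive tournament on $\{v_1,\ldots,v_{n-1}\}$ with an extra source edge $(s,v_1)$, assigns zero weight along the spine $v_1\to v_2\to\cdots\to v_{n-1}$, and gives each chord $(v_i,v_j)$ the rational weight $1-\frac{i+j}{2n}$; then incrementing a single spine edge $(v_i,v_{i+1})$ by $1$ forces the specific chord $(v_i,v_j)$ to lie on the unique new shortest path to $v_j$, so every edge is essential.

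Your argument instead observes a clean structural fact: once rational weights are allowed, one can rescale any weighted graph so that every simple path has total weight below $1/2$; a unit increment on an edge then dominates any alternative path and therefore simulates an edge deletion. This turns any $1$-WTSS of the rescaled graph into a $1$-FTSS of the original, and the $\Omega(n^2)$ bound follows from \cite{DTCR08} as a black box. The benefit of your approach is conceptual: it explains \emph{why} rational weights kill sparsity---they let increments mimic failures---and it immediately transfers any future improvement of the FTSS lower bound. The paper's benefit is that it is fully self-contained and exhibits a concrete witness graph without invoking an external result. One small point worth tightening: you assume the \cite{DTCR08} instance has integer (or at least nonnegative) weights so that the bound ``every simple path has weight $<1/2$'' holds after rescaling; this is true for the standard construction, but you should state it explicitly.
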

	\begin{proof}
	Take any $n\in \mathbb{N}$. Now consider a graph $G$ with the vertex set $V(G)=\{s,v_1,v_2,\cdots,v_{n-1}\}$ and following edge set
	$$E(G)=\{(s,v_1)\}\cup \{(v_i,v_j) \mid i,j\in [n-1] \text{ and }i < j\}.$$
	So, $|E(G)|={{n-1}\choose {2}}+1$. Next define the weight function $w:E(G) \to \mathbb{Q}$ as follows:
	$$w(e)=\begin{cases}1-\frac{i+j}{2n} \quad &\text{if $e=(v_i,v_j)$ and $j \ne i+1$}\\
	n&\text{if $e=(s,v_1)$}\\
	0&\text{otherwise} \end{cases}$$
	Now we show that any $1$-WTSS of $G$ must contain all the edges in $E(G)$. Note that initially for any vertex $v_i$, the shortest $s-v_i$ path has weight $n$. As $(s,v_1)$ and $(v_{j-1},v_{j})$'s for $j \le i$ must lie on any shortest $s-v_i$ path so they must belong to any $1$-WTSS. Let us now consider any edge $(v_i,v_j)$ such that $j \ne i+1$. Take an increment function $I:E(G) \to \mathbb{N}$ such that $I((v_i,v_{i+1}))=1$ and for all other edges $e \ne (v_i,v_{i+1})$, $I(e)=0$. It is easy to see that the unique shortest path under the new weight function $w'$ defined by $w'(e)=w(e)+I(e)$ for all $e \in E(G)$ follows the original shortest path till $v_i$ from $s$ and then take the edge $(v_i,v_j)$. Thus $(v_i,v_j)$ must belong to any $1$-WTSS of $G$ and this concludes the proof.
	\end{proof}
	Reader may note that the choice of weight $n$ on the edge $(s,v_1)$ in the above mentioned construction is arbitrary and one is free to take any value instead of $n$. 
	\subsection{Lower bound for rational valued increment function}
	\label{sec:rationalincrementLB}
	\begin{theorem}
	\label{thm:generalLB2}
	If it is allowed to increase the weight of the edges by any rational value, then for every $n \in \mathbb{N}$, there exists a directed graph with $n$ vertices whose $1$-WTSS must contain $c\cdot n^2$ many edges for some constant $c >0$.
	\end{theorem}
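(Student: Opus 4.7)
The plan is to mirror the construction used for Theorem~\ref{thm:generalLB1}, but now with integer edge weights so that ties among paths are broken by the rational increment function rather than by the weights themselves. I would take $V(G)=\{s,v_1,\ldots,v_{n-1}\}$ and let $E(G)$ consist of $(s,v_1)$ together with $(v_i,v_j)$ for every $1\le i<j\le n-1$, giving $|E(G)|=\binom{n-1}{2}+1=\Theta(n^2)$. For the integer weight function I would set $w(s,v_1)=0$ and $w(v_i,v_j)=j-i$, chosen so that by a telescoping argument every $s$-$v_j$ path has exactly the same weight $j-1$; in particular, every edge of $G$ lies on at least one initial shortest path from $s$ to some target.

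To force an arbitrary edge $e^{*}=(v_{i^{*}},v_{j^{*}})$ into every $1$-WTSS, I would single out the canonical path $P^{*}\colon s\to v_1\to v_2\to\cdots\to v_{i^{*}}\to v_{j^{*}}$ containing $e^{*}$ and define the rational increment $I(e)=\varepsilon:=1/|E(G)|$ for $e\notin P^{*}$ and $I(e)=0$ for $e\in P^{*}$. The total increment is at most $1$, matching the $k=1$ budget, and under $w'=w+I$ the path $P^{*}$ retains its weight $j^{*}-1$ while every other $s$-$v_{j^{*}}$ path picks up an additive $\varepsilon$ from some edge outside $P^{*}$ and therefore becomes strictly longer. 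Ranging $e^{*}$ over all of $E(G)$ then forces every edge into the $1$-WTSS and yields the desired $\Omega(n^2)$ lower bound.

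The main conceptual point, and the only place where the distinction between rational and integer increments really bites, is the need for a strictly positive but arbitrarily small $\varepsilon$ to break the initial ties: an integer increment cannot distribute itself across $\Theta(n^2)$ edges while staying inside the budget $k=1$, which is why this plan has no analogue in the integer-increment setting already treated in the earlier sections of the paper. Apart from that, I expect only routine bookkeeping -- verifying the telescoping identity, computing $w(P^{*})=(i^{*}-1)+(j^{*}-i^{*})=j^{*}-1$, and checking $|E(G)|\cdot\varepsilon\le 1$ -- and anticipate no further obstacles.
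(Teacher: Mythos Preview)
Your proposal is correct but takes a genuinely different route from the paper. The paper builds a four-layer graph $\{s\}\cup A\cup B\cup\{t\}$ with all edges of weight~$1$ (so every $s$--$t$ path has the same length~$3$); to isolate the path through a chosen pair $(u,v)\in A\times B$ it puts rational increments $\tfrac{1}{|A|-1}$ on each $(s,x)$ with $x\neq u$ and $\tfrac{1}{|B|-1}$ on each $(y,t)$ with $y\neq v$. You instead recycle the tournament graph from Theorem~\ref{thm:generalLB1}, replacing the rational weights there by the integer weights $w(v_i,v_j)=j-i$ so that all $s$--$v_j$ paths telescope to the common value $j-1$, and then spread a uniform $\varepsilon=1/|E(G)|$ over every edge off the designated path $P^{*}$. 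Both arguments are sound; yours has the appealing feature that it reuses the very same graph as Theorem~\ref{thm:generalLB1}, making explicit that the two lower bounds are really the same phenomenon with the roles of ``rational weight'' and ``rational increment'' interchanged, while the paper's construction is structurally simpler (a single target~$t$ and unit weights) at the cost of looking unrelated to the previous theorem.
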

	\begin{proof}
	Consider any $n \in \mathbb{N}$. Consider two sets of vertices $A$ and $B$ such that $|A|=\lfloor \frac{n}{2} \rfloor-1$ and $|B| = n-\lfloor \frac{n}{2} \rfloor - 1$. Now take a graph $G$ with $V(G) = \{s,t\} \cup A \cup B$. Then connect $s$ with all the vertices in $A$ and all the vertices in $B$ with $t$. Next add an edge between any vertex in $A$ and any vertex in $B$. So we get
	$$E(G)=\{(s,v) \mid  v \in A\} \cup \{(u,v) \mid u \in A \text{ and } v \in B\} \cup \{(v,t) \mid  v \in B\}.$$
	Hence $|E(G)| \ge c \cdot n^2$ for some constant $c > 0$. Now consider a weight function $w:E(G) \to \mathbb{Z}$ such that for all $e \in E(G)$, $w(e)=1$. Note that any $s-t$ path has weight $3$. Now we show that every edge in this graph must be present in its $1$-WTSS. First observe that every $s-t$ path in $G$ is of weight $3$ and every edge $e \in E(G)$ belong to some $s-t$ path. So it is sufficient to show that all the $s-t$ paths in $G$ must be included in its $1$-WTSS.
	
	Let $P$ be one such $s-t$ path and suppose it passes through a vertex $u \in A$ and $v \in B$. Next consider a increment function $I:E(G) \to \mathbb{Q}$ such that 
	$$I(e)=\begin{cases}\frac{1}{|A|-1} \quad &\text{if $e=(s,x)$ and $x \ne u$}\\
	\frac{1}{|B|-1}&\text{if $e=(y,t)$ and $y \ne v$}\\
	0&\text{otherwise} \end{cases}$$
	Clearly, the above increment function $I$ satisfies the condition that $\sum_{e \in E(G)}I(e) \le 1$ and all the $s-t$ paths except $P$ has now weight strictly greater than $3$ under the new weight function $w'$ defined by $w'(e)=w(e)+I(e)$ for all $e \in E(G)$. Hence all the edges of path $P$ must be included in any $1$-WTSS of $G$.
	\end{proof}

	\paragraph*{Remarks:} We emphasize that all the lower bound results in the above section hold for undirected graphs also. Moreover, exactly the same graphs without any direction will serve the purpose.
	
	\section{Discussion}
	\label{sec:conclusion}
	In this paper we initiate the study of single source shortest path problem in a model where weight of any edge can be increased. This model is motivated from congestion in any network and is simpler than the edge fault model. To summarize, we have provided an efficient algorithm to compute a sparse subgraph that preserves the distances from any designated source vertex and is also resilient under bounded weight increment. When the weight increment is bounded by $k$ then the subgraph computed by our algorithm will be of size at most $O(k^kn)$. We also show a lower bound of $\frac{5}{4}2^k n$ on the size of such a subgraph. This shows that our construction is tight upto some constant as long as $k$ is bounded by some constant. Though, it is interesting to farther study this problem to close the gap between the upper and the lower bound and at this point we would like to leave this problem as open. Another open problem is to improve the run time of the construction.
	
	We have already shown in this paper that from the perspective of constructing sparse distance preserver, the weight tolerant model is much simpler than fault tolerant model. It might be possible that the same is true for problems like finding distance oracle in this weight increment model. As a corollary of our result one can get an $O_k(n)$ space and $O_k(n^2)$ time data structure that will answer the distance queries from a single source vertex, where $O_k(\cdot)$ notation denotes involvement of a constant that depends only on $k$. It is interesting to farther study this problem to reduce the query time to $O_k(1)$ while getting some reasonable bound on the space requirement.

	\section*{Acknowledgments}
	The first author would like to thank Pavan Aduri and Vinodchandran N. Variyam for some helpful discussions during initial phase of this work and a special thank to Pavan Aduri for suggesting to study this problem. Authors also thank Keerti Choudhary and Michal Kouck{\' y} for many valuable suggestions and comments.

	%%%%%%%%%%%%%%%%%%%%%%%%%%%%%%%%%%%%%%%%%%%%%%%%%%%%%%%%%%%
	\bibliographystyle{plainurl}
	\bibliography{faultShort}
	%%%%%%%%%%%%%%%%%%%%%%%%%%%%%%%%%%%%%%%%%%%%%%%%%%%%%%%%%%%
	\appendix
	
	\section{Explanations for Assumption~\ref{asm:degBound}}
	\label{app:assumption}
	We are given a directed graph $G=(V,E)$ and the associated weight function $w:E(G)\to \mathbb{R}$ as input. We claim that we can construct a new graph $G'=(V',E')$ with a weight function $w':E(G')\to \mathbb{R}$ such that the out-degree of every vertex $v'\in V'$ is bounded by $2$. Now we describe how to get such a $G'$ from $G$. First, for each $ v\in V$, we construct a binary tree $T_v$ as follows: Define $r_v$ to be the root of $T_v$. Suppose $d(v)$ denotes the out-degree of $v$ in $G$. Then $T_v$ contains exactly $d(v)$ many leaves, say $l_v^1, \cdots, l_v^{d(v)}$. Let $(v,u_1), \cdots, (v,u_{d(v)})$ are the out edges of $v$ in $G$. We delete all of them and in that place we insert the binary tree $T_v$ by adding an edge from vertex $v$ to $r_v$ and adding edges from vertex $l_v^i$ to $u_i$, for all $i\in [d(v)]$. We define the weight function $w'$ for $G'$ as follows: Set, $w'(v,r_v)=0$, $w'(l_v^i, u_i)=w(v,u_i)$ for all $i\in [d(v)]$ and for rest of the edges $e\in T_v$ set $w'(e)=0$. Subsequently we observe the following properties of $G'$. 
	\begin{enumerate}
	\item Every vertex of $G'$ has out-degree at most $2$ whereas in-degree is same as that of in the graph $G$.
	\item Graph $G'$ has $O(m)$ vertices and $O(m)$ edges. 
	\item Every edge $(v,u_i)$ of $G$ is represented by a path $P_{v \to u_i}=(v,r_v)\circ$(path from $r_v$ to $l_v^i$ in $T_v$)$\circ (l_v^i, u_i)$ in $G'$ and $w(v,u_i)=w'(P_{v \to u_i})$. Hence for any vertex $v\in V(G)$, $dist_{G,w}(s,v)=dist_{G',w'}(s,v)$. 
	\end{enumerate} 
	Now we show that given a $k$-WTSS($t$), say $H'$ of the graph $G'$ how to construct a $k$-WTSS($t$), say $H$ for the graph $G$. We build $H$ as follows: For each $u_i\in Out(v)$ of the graph $G$, we include an edge $(v,u_i)$ in $H$ if and only if the edge $(l_v^i, u_i)$ is present in graph $H'$. Now the claim is that $H$ is a $k$-WTSS($t$) for the graph $G$. Let $I:E(G) \to \mathbb{Z}$ be any increment function on graph $G$ such that $\sum_{e \in E(G)}I(e) \le k$. Now define another increment function $I':E(G') \to \mathbb{Z}$ for the graph $G'$ as follows: For every edge $(v,u_i)$, set $I'(l_v^i,u_i)=I(v,u_i)$. For all other edges $e\in G'$, set $I'(e)=0$. Clearly, 
	$$\sum_{e \in E(G')}I'(e)=\sum_{e \in E(G)}I(e) \le k.$$ 
	Now from the construction we can observe that for any vertex $t$ in $G$, $dist_{H,w+I}(s,t)=dist_{H',w'+I'}(s,t)$. Now as $H'$ is a $k$-WTSS($t$) for the graph $G'$, we have that $dist_{H',w'+I'}(s,t)=dist_{G',w'+I'}(s,t)$ and therefore 
	$$dist_{H,w+I}(s,t)=dist_{H',w'+I'}(s,t)=dist_{G',w'+I'}(s,t)=dist_{G,w+I}(s,t).$$ 
	Hence $H$ is a $k$-WTSS($t$) for the graph $G$ and in-degree of any vertex in $H$ is same as that of in $H'$. Hence for any vertex $t\in V(G)$, computing $k$-WTSS($t$) of $G$ is same as computing $k$-WTSS($t$) for $G'$ which has $O(m)$ vertices and edges, and out-degree of every vertex is bounded by two.

	\section{Lower Bound for Weight Decrement Model}
	\label{app:decrementLB}
	One can relax our model by allowing decrement operation on edge weight. Unfortunately, one can easily show that there are graphs for which there is no sub-quadratic sized subgraph that preserves the distances from a single source under this relaxed model.
	\begin{theorem}
	If it is allowed to decrement the weight of the edges even by a integer, then for every $n \in \mathbb{N}$, there exists a directed graph with $n$ vertices whose $1$-WTSS must contain $c\cdot n^2$ many edges for some constant $c >0$.
	\end{theorem}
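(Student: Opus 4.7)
The plan is to reuse essentially the same tripartite construction as in Theorem~\ref{thm:generalLB2} but exploit weight decrement of a single middle edge instead of a rational weight increment spread over many edges. Take $V(G)=\{s,t\}\cup A\cup B$ with $|A|=\lfloor n/2\rfloor-1$ and $|B|=n-\lfloor n/2\rfloor-1$, and
$$E(G)=\{(s,a)\mid a\in A\}\cup\{(a,b)\mid a\in A,\,b\in B\}\cup\{(b,t)\mid b\in B\},$$
so that $|E(G)|\ge c\cdot n^2$ for some $c>0$. Assign the integer weight function $w(e)=1$ for every $e\in E(G)$; then every $s$--$t$ path has weight exactly $3$ and uses exactly one middle edge.

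I will show that every middle edge $(a_i,b_j)$, and hence every $s$--$t$ path through it, must lie in any $1$-WTSS under the decrement model. For an arbitrary fixed $(a_i,b_j)$ define the decrement function $D\colon E(G)\to\mathbb{N}$ by $D(a_i,b_j)=1$ and $D(e)=0$ otherwise, so $\sum_e D(e)\le 1$. Under the new weight function $w'(e)=w(e)-D(e)$, the path $P_{i,j}=s\to a_i\to b_j\to t$ has weight $2$, while every other $s$--$t$ path still has weight $3$, since it uses some middle edge $(a_{i'},b_{j'})\ne(a_i,b_j)$ of unchanged weight $1$. Thus $P_{i,j}$ is the unique shortest $s$--$t$ path under $w'$, and any $1$-WTSS must contain all three of its edges, in particular $(a_i,b_j)$. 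Ranging over all $(i,j)\in[|A|]\times[|B|]$ forces every middle edge into the $1$-WTSS, giving the $\Omega(n^2)$ lower bound.

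There is essentially no technical obstacle beyond verifying uniqueness of the shortest path after a single unit decrement; the key point is that with weight decrement, a single integer operation on one edge suffices to make one path strictly cheaper than all others, whereas in the increment setting we needed rational values to target a single path without accidentally pushing weight onto others. The same graph (viewed as undirected) yields the identical argument, so as in Section~\ref{sec:generalLB} the lower bound also carries over to the undirected case.
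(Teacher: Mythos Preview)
Your proposal is correct and follows essentially the same approach as the paper: the paper builds the identical tripartite graph $\{s,t\}\cup A\cup B$, assigns constant weight $w(e)=2$ (instead of your $1$), and then decrements an arbitrary edge by $1$ to force it into any $1$-WTSS. The only cosmetic differences are the constant edge weight and that the paper decrements \emph{any} edge rather than just the middle edges, but your restriction to middle edges already suffices for the $\Omega(n^2)$ count.
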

	\begin{proof}
	Consider any $n \in \mathbb{N}$. Consider two sets of vertices $A$ and $B$ such that $|A|=\lfloor \frac{n}{2} \rfloor-1$ and $|B| = n-\lfloor \frac{n}{2} \rfloor - 1$. Now take a graph $G$ with $V(G) = \{s,t\} \cup A \cup B$. Then connect $s$ with all the vertices in $A$ and all the vertices in $B$ with $t$. Next add an edge between any vertex in $A$ and any vertex in $B$. So we get
	$$E(G)=\{(s,v) \mid  v \in A\} \cup \{(u,v) \mid u \in A \text{ and } v \in B\} \cup \{(v,t) \mid  v \in B\}.$$
	Hence $|E(G)| \ge c \cdot n^2$ for some constant $c > 0$. Now consider a weight function $w:E(G) \to \mathbb{Z}$ such that for all $e \in E(G)$, $w(e)=2$. Note that every $s-t$ path has weight $6$. Now we show that every edge in this graph must be present in its $1$-WTSS. Let us take any edge $e\in E(G)$ and decrement its weight by $1$. It is now easy to see that any $s-t$ path that passes through $e$ has weight $5$ whereas all other $s-t$ paths still have weight $6$. Hence the edge $e$ must be included in any $1$-WTSS of $G$.
	\end{proof}

\end{document}